\documentclass[runningheads]{llncs}

\usepackage{graphicx}
\usepackage{wrapfig}
\usepackage{amsmath}
\usepackage{amssymb}
\usepackage{amsfonts}
\usepackage{xcolor}
\usepackage{booktabs}
\usepackage{makecell}
\usepackage{listings}
\usepackage{hyperref}
\usepackage[T1]{fontenc}
\usepackage{textcomp}
\usepackage{appendix}
\usepackage{subcaption} 
\usepackage{multirow} 
\usepackage{bussproofs}
\usepackage[algoruled,longend,linesnumbered]{algorithm2e}
%\usepackage{comment}
%\usepackage{enumitem}

%%%%%%%%%%%%%%%%%%%%%%%%%%%%%%%%%%%%%%%%%%%%%%%%%%%%%%%%%%%%%%%%%%%%%%%%%%%%%%%%%%%%%%

\newcommand\natnum{\mathbb{N}}
\newcommand\fwc{\Phi_{W}}
\newcommand\firlis{$\fwc$}

\newcommand\sccgraph{\mathcal{S}}

%-----------------------------------------------------------------
\newcommand{\prettyll}[1]{\begin{lstinline}[basicstyle=\small\ttfamily]! #1 !\end{lstinline}}
\newcommand{\fnprettyll}[1]{\begin{lstinline}[basicstyle=\footnotesize\ttfamily]! #1 !\end{lstinline}}

%!TEX root = main.tex

%\newcommand{\true}{\top}
%\newcommand{\false}{\perp}

%% \newcommand{\powerset}[1]{{\cal P}(#1)}

\newcommand{\intnum}{\mathbb{Z}}

 % Set of atomic propositions (for a given logic)

% formulae

 % a formula
 % another formula (when at least 2 are present)
 % another formula (when at least 3 are present)

%\newcommand {\length}[1] {\ensuremath{|#1|}}
\newcommand{\firrtl}{\textrm{FIRRTL}}
\newcommand{\hifirrtl}{\textrm{HiFIRRTL}}
\newcommand{\lofirrtl}{\textrm{LoFIRRTL}}

% Equivalences
 % =def=
\newcommand{\eqdef}{\stackrel{\mbox{\begin{tiny}def\end{tiny}}}{=}} % =def=
 % <=def=>

\newcommand {\floor}[1] {\ensuremath{\lfloor #1 \rfloor}}

     % an alphabet, A is already used for atoms

%%

\newcommand{\ltrue}{{\sf true}}
\newcommand{\lfalse}{{\sf false}}

\newcommand{\cut}[1]{}

\mathchardef\mhyphen="2D % hyphen while in math mode

\newcommand{\hide}[1]{}

\newcommand\dom{\mathsf{dom}}

\newcommand{\concat}{\cdot}

 % parametr abstraction

%%%%%%%%%%%%
% Auto escape eg

%\newcommand\ap[2]{{#1}\mathord{\brac{#2}}}

%%% Macros for expspace lower bound with replaceall

% General

% \setcomp{ele}{comp} = { ele | comp }

% \lang{A} = L(A)

%% Optimisations

% Tools

%\newtheorem{fact}{Fact}
%\newtheorem{comment}{Comment}

%\newcommand{\cgexp} {{\sf RWRE_{reg}}}

%%%%%%%%%% Start TeXmacs macros

%\newtheorem{definition}{Definition}
%{\theorembodyfont{\rmfamily}\newtheorem{note}{Note}}
%{\theorembodyfont{\rmfamily}\newtheorem{remark}{Remark}}
%\newtheorem{theorem}{Theorem}
%\newtheorem{note}{Note}
%\newtheorem{remark}{Remark}

%\newcommand\ostrich{EMU}

\newcommand{\OMIT}[1]{}

\newcommand{\seq}[1]{\ensuremath{#1}}
\newcommand{\seqq}[1]{\seq{\ifx#1\relax\else#1\fi}}

\newcommand{\ruleName}[1]{\textrm{#1}}

\newcommand{\infernone}[2][]{%
	\AxiomC{}
	\ifx#1\relax\else\LeftLabel{\ruleName{#1}}\fi
	\UnaryInfC{$#2$}
	\DisplayProof
}

\newcommand{\infer}[3][]{%
  \AxiomC{$#3$}
  \ifx#1\relax\else\LeftLabel{\ruleName{#1}}\fi
  \UnaryInfC{$#2$}
  \DisplayProof
}
\newcommand{\inferC}[4][]{%
  \AxiomC{$#4$}
  \RightLabel{$~~#2$}
  \ifx#1\relax\else\LeftLabel{\ruleName{#1}}\fi
  \UnaryInfC{$#3$}
  \DisplayProof
}
\newcommand{\inferii}[4][]{%
  \AxiomC{$#3$}
  \AxiomC{$#4$}
  \ifx#1\relax\else\LeftLabel{\ruleName{#1}}\fi
  \BinaryInfC{$#2$}
  \DisplayProof
}
\newcommand{\inferiii}[5][]{%
  \AxiomC{$#3$}
  \AxiomC{$#4$}
  \AxiomC{$#5$}
  \ifx#1\relax\else\LeftLabel{\ruleName{#1}}\fi
  \TrinaryInfC{$#2$}
  \DisplayProof
}
\newcommand{\inferiv}[6][]{%
  \AxiomC{$#3$}
  \AxiomC{$#4$}
  \AxiomC{$#5$}
  \AxiomC{$#6$}
  \ifx#1\relax\else\LeftLabel{\ruleName{#1}}\fi
  \QuaternaryInfC{$#2$}
  \DisplayProof
}
\newcommand{\inferv}[7][]{%
  \AxiomC{$#3$}
  \AxiomC{$#4$}
  \AxiomC{$#5$}
  \AxiomC{$#6$}
    \AxiomC{$#7$}
  \ifx#1\relax\else\LeftLabel{\ruleName{#1}}\fi
  \QuinaryInfC{$#2$}
  \DisplayProof
}

%\newcommand{\uint}[1]
%{
%\mbox{\prettyll{UInt}}\mbox{<}#1\mbox{>}
%}

%\newcommand{\sint}[1]
%{
%\mbox{\prettyll{SInt}}\mbox{<}#1\mbox{>}
%}

\newcommand\nil{{\sf nil}}

%%%%%%%%%%%%%%%%%%%%%%%%%%%%%%%%%%%%%%%%%%%%%%%%%%%%%

%\newcommand{\myvec}[1]{\overrightarrow{#1}}

%%%%% functions

%\newcommand{\isflipped}{{\sf isFlipped}}

\newcommand{\firtool}{{\sf firtool}}

\newif\ifdraft\draftfalse
\ifdraft

\newcommand{\zhilin}[1]{\color{brown} {ZL: #1 :LZ} \color{black}}
\newcommand{\xiaomu}[1]{\color{blue} {XM: #1 :MX} \color{black}}
\newcommand{\david}[1]{\color{cyan} {DNJ: #1 :DNJ} \color{black}}
\newcommand{\keyin}[1]{\color{red} {KY: #1 :YK} \color{black}}
\newcommand{\fu}[1]{\color{purple} {Fu: #1} \color{black}}
\newcommand{\tl}[1]{\textcolor{magenta}{TC: #1 :CT} }
\newcommand{\revise}[1]{\textcolor{blue}{#1}}
\else
\newcommand{\zhilin}[1]{}
\newcommand{\xiaomu}[1]{}
\newcommand{\david}[1]{}
\newcommand{\keyin}[1]{}
\newcommand{\fu}[1]{}
\newcommand{\tl}[1]{}
\newcommand{\revise}[1]{#1}
\fi

% For arxiv version
\newcommand{\arxiv}[1]{#1}
\newcommand{\crv}[1]{}

% For camera ready version
%\newcommand{\arxiv}[1]{}
%\newcommand{\crv}[1]{#1}

%%%%%%%%%%%%%%%%%%%%%%%%%%%%%%%%%%%%%%%%%%%%%
%%%%%%%%%%package listings for highlighting FIRRTL code %%%%%%%%%%%%
%%%%%%%%%%%%%%%%%%%%%%%%%%%%%%%%%%%%%%%%%%%%%
\usepackage{listings}
\usepackage{color}
\definecolor{lightgray}{rgb}{.9,.9,.9}
\definecolor{darkgray}{rgb}{.4,.4,.4}
\definecolor{purple}{rgb}{0.65, 0.12, 0.82}

\lstdefinelanguage{Firrtl}{
  keywords={defname, parameter, with, reset, inst, of, is, invalid, attach, when, else, stop, printf, skip, data-type, depth, read-latency, write-latency, read-under-write, reader, writer, readwriter, old, new, undefined, UInt, SInt, mux, validif, flip, Clock, Reset, AsyncReset, Analog, Fixed, add, sub, mul, div, mod, lt, leq, gt, geq, eq, neq, dshl, dshr, and, or, xor, cat, asUInt, asSInt, asClock, cvt, neg, not, andr, orr, xorr, head, tail, pad, shl, shr, bits},
  keywordstyle=\color{blue},
  ndkeywords={circuit, module, extmodule, input, output, wire, reg, node, mem},
  ndkeywordstyle=\color{purple},
  identifierstyle=\color{black},
  sensitive=true,
%  comment=[l]{;},
%  morecomment=[s]{/*}{*/},
%  commentstyle=\color{purple}\ttfamily,
  stringstyle=\color{red}\ttfamily,
  morestring=[b]',
  morestring=[b]"
}

\lstset{
   language=Firrtl,
   backgroundcolor=\color{white},
   extendedchars=true,
   basicstyle=\footnotesize\ttfamily,
   showstringspaces=false,
   showspaces=false,
   numbers=left,
   numberstyle=\scriptsize,
   numbersep=5pt,
   tabsize=2,
   breaklines=false,
   showtabs=false,
   captionpos=b
}
%%%%%%%%%%%%%%%%%%%%%%%%%%%%%%%%%%%%%%%%
%%%%%%%%%%%%%%%%%%%%%%%%%%%%%%%%%%%%%%%%

%\newcommand{\prettyll}[1]{\begin{lstinline}[basicstyle=\small\ttfamily]! #1 !\end{lstinline}}
%\newcommand{\fnprettyll}[1]{\begin{lstinline}[basicstyle=\footnotesize\ttfamily]! #1 !\end{lstinline}}

%%%%%%%%%%%%%%%%%%%%%%%%%%%%%%%%%%%%%%%%%%%%%%%%%%%%%%%%%%%%%%%%%%%%%%%%%%%%%%%%%%%%%%%%%%%%%%%%%%%%%%%%

\title{A Formally Verified Procedure for~Width~Inference in~FIRRTL\thanks{This work was supported by the Strategic Priority Research Program of the Chinese Academy of Sciences, Grant No.~XDA0320101, and partially supported by NSFC-RGC Collaborative Research Grant No.~62561160151. D.N. Jansen is supported by Beijing Natural Science Foundation Project No.~IS25071.}}

\author{
Keyin Wang\inst{1,2,3}\orcidID{0009-0004-1954-5171} \and
Xiaomu Shi\inst{1,2,3}\orcidID{0000-0001-6277-2813} \and
Jiaxiang Liu\inst{1,2,3}\orcidID{0000-0002-6725-8167} \and
Zhilin Wu\inst{1,2,3}\orcidID{0000-0003-0899-628X}\and\\
Fu Song\inst{1,2,3,4}\orcidID{0000-0002-0581-2679}\and
Taolue Chen\inst{5}\orcidID{0000-0002-5993-1665}\and\\
David N. Jansen\inst{1,2,3}\orcidID{0000-0002-6636-3301}
}
\authorrunning{K. Wang et al.}
% First names are abbreviated in the running head.
% If there are more than two authors, 'et al.' is used.
%
\institute{Key Laboratory of System Software (Chinese Academy of Sciences)
\and Institute of Software, Chinese Academy of Sciences 
\and University of Chinese Academy of Sciences
\and Nanjing Institute of Software Technology
\and Birkbeck, University of London
\email{\{wangky,shixm,liujx,wuzl,songfu,dnjansen\}@ios.ac.cn} \\
\email{t.chen@bbk.ac.uk}}

%-------------------------------------------------------

\begin{document}
\maketitle

\begin{abstract}
{\firrtl} is an intermediate representation language for Register Transfer Level (RTL)  hardware designs. In {\firrtl} programs, the bit widths of many components are not specified explicitly and must be inferred during compilation. In mainstream {\firrtl} compilers, such as the official compiler {\firtool}, width inference is conducted by a compilation pass referred to as InferWidths, which may fail even for simple {\firrtl} programs. In this paper, we thoroughly investigate the width inference problem for {\firrtl} programs.
We show that, if the constraints obtained from a {\firrtl} program are satisfiable, there exists a unique least solution. Based on this result, we propose a complete procedure for solving the width inference problem. 
%that can also handle {\firrtl} programs where {\firtool} fails.
We implement it in the interactive theorem prover Rocq and prove its functional correctness. From the Rocq implementation, we extract an OCaml implementation, which is the first formally verified implementation of the InferWidths pass. Extensive experiments demonstrate that our approach can solve more instances than the official InferWidths pass in {\firtool}, normally with high efficiency. 
\end{abstract}

%%%%%%%%%%%%%%%%%%%%%%%%%%%%%%%%%%%%%%%%%%%%%%%%%%%%%%%%%%%%%%%%%%%%%%%%%%%%%%%%%%%%%%%%%%%%%%
\section{Introduction}
%\annotation{Submissions should not exceed 17 pages, excluding bibliography} 
%!TEX root = main.tex

%\paragraph*{Chisel and {\firrtl}.}
Chisel (\underline{C}onstructing \underline{H}ardware \underline{i}n a \underline{S}cala \underline{E}mbedded \underline{L}anguage~\cite{BachrachVRLWAWA12}) is an open-source hardware description language, designed to describe digital electronics and circuits at the register-transfer level (RTL).
As an embedded domain-specific language, Chisel is powered by Scala, bringing all the power of object-oriented and functional programming to type-safe hardware design and generation.
It also comprises a library of special class definitions, predefined objects and usage conventions. %for writing highly-parameterized hardware design generators. 
High-level hardware generators in Chisel are programmed by manipulating circuit components using the Chisel standard library as well as Scala functions including functional and object-oriented programming features, where their interfaces are encoded by Scala types. This enables agile methodologies of highly-parameterized, modular, and reusable hardware generators that improve the productivity and robustness of hardware designs~\cite{BaoC20}.
Chisel has been successfully used to develop RISC-V processors such as Rocket Chip~\cite{RocketChip}, RISC-V BOOM \cite{Boom}, NutShell~\cite{NutShell}, and XiangShan~\cite{xiangshan,Xiangshan-github}.

{\firrtl} (\underline{F}lexible \underline{I}ntermediate \underline{R}epresentation for \underline{RTL}~\cite{IzraelevitzKLLW17,firrtl-spec}) was proposed  as an intermediate representation (IR) language for the compilation of Chisel programs, analogous to the role of LLVM IR~\cite{LLVM} for compiling C/C++ programs into low-level code.
It features first-class support for high-level constructs such as vector types, bundle types, conditional statements, connects and modules, so that Chisel programs can easily be translated into {\firrtl}.
High-level synthesis is performed on {\firrtl} by a sequence of compilation and optimization passes, finally resulting in a most restricted form that is close to synthesizable RTL Verilog, commonly called \emph{low {\firrtl}} ({\lofirrtl} for short).
For clarity, we refer to {\firrtl} with high-level constructs as \emph{high {\firrtl}} ({\hifirrtl} for short). While {\firrtl} was originally designed to compile Chisel programs, it is not tied to Chisel. Other hardware description languages can also be compiled into {\firrtl} and reuse the majority of the compilation and optimization passes in {\firrtl} compilers. For instance, {\firrtl} has been integrated into the CIRCT project~\cite{CIRCT} as one of its core dialects. The tool {\firtool}, as a part of the CIRCT project, is the current official {\firrtl} compiler and used extensively in the design of RISC-V processors.

%Similar to C/C++ compilers (e.g., GCC and LLVM), unsurprisingly, the {\firrtl} compiler suffers from a considerable number of bugs due to its inherent complexity. For instance, there are 655 issues in its GitHub repository, among which 99 have been labeled as bugs~\cite{{\firrtl}issues}. This motivates us to investigate the formalization and verification of {\firrtl} and its lowering transformations, along a similar line as formally verifying compilers for mainstream programming languages such as C/C++ (a review of which is given in Section~\ref{sec-related-work}).

\subsubsection*{Width inference in {\firrtl}.}  
When working with {\firrtl}, users are encouraged to manually specify (bit) widths of ports and registers. % to prevent any surprises, 
Unspecified widths are inferred during compilation. %by {\firtool}.
%For a module with unspecified widths, 
%\st{the inferred width is the \emph{minimum} possible width that maintains the legality of all incoming connections among all instantiations of the module.}
%\fu{
These inferred widths are %expected to be 
ideally %the \emph{minimum} 
as small as possible depending on %among all instantiations that maintain
the widths of all incoming connections.

%For instance, in Fig.~\ref{fig-infer-width-acyclic}, module {\tt A} has two inputs  {\tt x} and {\tt y}, an output {\tt out}, and a wire {\tt w}, where the widths of {\tt x} and {\tt y} are $1$ and $2$ respectively, while those of {\tt out} and {\tt w} are unspecified. To maintain the legality of all incoming connections, the width of {\tt w} must be no less than those of {\tt x} and {\tt y}, since {\tt w} has two incoming connections from {\tt x} and {\tt y}. As a result, it is at least $2$. Similarly, the width of {\tt out} must be no less than that of {\tt w} (i.e., at least $2$). Therefore, the minimum possible widths of {\tt w} and {\tt out} %that maintain the legality of all incoming connections
%are both $2$.  In the end, the compilation pass InferWidths implemented in {\firtool}, the current official {\firrtl} compiler, is able to infer the width $2$ for both {\tt w} and {\tt out}. 
\revise{Consider the {\firrtl} program in Fig.~\ref{fig-infer-width-acyclic}, which is adapted from 
the Chisel book~\cite{chisel:book}.
%For instance, Fig.~\ref{fig-infer-width-acyclic} presents a simple benchmark from the MANUAL suite in Section~\ref{sec-impl-eval} with when statement. %(Since most of the reviewers think that the original example is too simple or have little practical significance. I find this CombWhen is a simple one in our benchmark suite and contains meaningful repetitive connections, trying to make formalize width inference problem more reasonable.
The module {\tt CombWhen} has two inputs  {\tt clock} and {\tt cond}, one output {\tt out}, and a wire {\tt w}. The widths of {\tt cond} and {\tt out} are $1$ and $4$, respectively, while the width of {\tt w} is unspecified. Although {\firrtl} adopts a ``last connect semantics'', 
i.e., the value of the most recently connected expression is assigned to {\tt w},  width inference must take all connections into account to maintain their legality. So, the width of {\tt w} must be no less than those of \prettyll{UInt<1>("h0")} and \prettyll{UInt<2>("h3")},
i.e., no less than $2$. The \textsc{InferWidths} compilation pass implemented in {\firtool}, the current official {\firrtl} compiler, correctly infers this width for {\tt w}. }

 \begin{figure}[t]
 %	\vspace{-4mm}
 	\centering
  \begin{minipage}[t]{0.43\linewidth} 	\centering
 \begin{tabular}{cc}
\quad\quad&
 %[basicstyle=\footnotesize\ttfamily]
\begin{lstlisting}
circuit CombWhen :
  module CombWhen :
    input clock : Clock
    input cond : UInt<1>
    output out : UInt<4>
    wire w : UInt 
    w <= UInt<1>("h0") 
    when cond :
      w <= UInt<2>("h3")
    out <= w 
\end{lstlisting}
 	  	\end{tabular}	%\vspace{-2mm}
 	  	\caption{A {\firrtl} program with unspecified widths} 
 	  	\label{fig-infer-width-acyclic}
 	\end{minipage}
 \hfill
\begin{minipage}[t]{0.5\linewidth} 	\centering
\begin{tabular}{c}
 		%[basicstyle=\footnotesize\ttfamily]
\begin{lstlisting}
circuit A :
  module A :
    input in : UInt<4>
    input clock : Clock
    output out : UInt
    reg x : UInt, clock
    
    x <= add(tail(x,1), in)
    out <= x
\end{lstlisting}
\end{tabular}%\vspace{-2mm}
 	\caption{A {\firrtl} program with unspecified widths and circular dependencies}
 	\label{fig-infer-width-circular}
 \end{minipage}
 	\vspace{-4mm}
\end{figure}

%\hide{
%\begin{figure}[htbp]
%	\vspace{-5mm}
%\centering
%\begin{tabular}{c}
%%[basicstyle=\footnotesize\ttfamily]
%\begin{lstlisting}
%circuit A:
%  module A:
%    input x : UInt<1>
%    input y : UInt<2>
%    output out : UInt
%    wire w : UInt
%    w <= x
%    w <= y
%    out <= w
%\end{lstlisting}
%\end{tabular}
%\caption{A {\firrtl} program with unspecified widths}
%\label{fig-infer-width-acyclic}
%\vspace{-5mm}
%\end{figure}
%
%\begin{figure}[htbp]
%	\centering
%	\begin{tabular}{c}
%		%[basicstyle=\footnotesize\ttfamily]
%		\begin{lstlisting}
%			circuit A :
%			module A :
%			input in : UInt<4>
%			input clock : Clock
%			output out : UInt
%			
%			reg x : UInt, clock
%			
%			x <= add(tail(x,1), in)
%			out <= x
%		\end{lstlisting}
%	\end{tabular}
%	\caption{A {\firrtl} program with unspecified widths and circular dependencies}
%	\label{fig-infer-width-circular}
%\end{figure}}

%in Fig.~\ref{fig-infer-width-acyclic} %, one may feel that the width inference problem of the {\firrtl} language is simple. Nevertheless, the simplicity is attributed to the fact that
This example is fairly straightforward because there are \emph{no circular dependencies} between (circuit) components. Note that a component {\tt x} depends on another component {\tt y} if there is a connect statement \prettyll{x <= exp} such that {\tt y} occurs in the expression {\tt exp}.
Circular dependencies between components make the width inference problem more challenging.  
For instance, for the program in Fig.~\ref{fig-infer-width-circular} where the widths of {\tt x} and {\tt out} are unspecified, {\tt x} is dependent on itself, giving rise to (perhaps the simplest) circular dependency. In this case, the InferWidths pass of {\firtool} fails to infer the width for {\tt x} and throws an exception.
%returns an error message. 
%In {\firtool}, 
%\st{an algorithm is proposed to solve the width inference problem and the algorithm is implemented as a compilation pass called InferWidths}\fu{
%an algorithm is implemented as a compilation pass called InferWidths to infer the widths. 
%solve the width inference problem.

Intuitively, the width inference for the program in Fig.~\ref{fig-infer-width-circular} can be reduced to finding the least solution of the constraint
\begin{center}
  $w_{\tt x} \ge \max(w_{\tt x}-1, 4)+1\ \wedge w_{\tt out} \ge w_{\tt x},$  
\end{center}
where $w_{\tt x}$ and $w_{\tt out}$ are nonnegative integer variables for the widths of {\tt x} and {\tt out}, respectively. %{\tt x}, similarly for $w_{\tt out}$. 
(According to the {\firrtl} specification~\cite{firrtl-spec}, the widths of the expressions \prettyll{tail(e1, n)} and \prettyll{add(e1, e2)}  %denoted by $w_{\tt tail(e1, n)}$ and $w_{\tt add(e1, e2)}$ respectively, 
are $w_{{\tt e1}}-{\tt n}$ and $\max(w_{{\tt e1}}, w_{{\tt e2}})+1$.)
It is not hard to see that this constraint has a least solution, i.e.\@ $w_{{\tt x}} = w_{{\tt out}} =5$. 

\revise{
This example reveals that the current implementation of the InferWidths pass in {\firtool} is incomplete. A complete InferWidths pass is crucial:
if the inferred widths are too large, the synthesized hardware wastes area and power, and may suffer degraded performance due to increased routing and timing complexity.
Conversely, if the inferred widths are too small, arithmetic overflow or data loss can occur, leading to incorrect results.
Moreover, width inference occurs early in the compilation pipeline, undetected errors 
%(e.g., arithmetic overflow from under-provisioned widths) 
at this stage may propagate to subsequent passes and violate functional guarantees with high probability. Therefore,  
the problem of width inference warrants a systematic investigation in the presence of circular dependencies. For instance, it is unclear, \emph{a priori}, whether a unique least solution exists.}

\revise{Our long-term goal is to develop a formally verified FIRRTL compiler. This work makes a significant step toward that goal.
We formalize the width inference problem in {\firrtl}, investigate its theoretical properties,  propose a complete procedure for solving the problem, and develop a fully verified InferWidths pass.}

\subsubsection*{Contributions.}
The main contributions of this work are summarized as follows. 
\begin{itemize}
\item We formalize the width inference problem of the {\firrtl} language 
as the problem of solving \textbf{FIR}RTL \textbf{W}idth \textbf{INE}quality (FIRWINE)  constraints. In particular, we show that if a FIRWINE constraint is satisfiable, it has a unique least solution. 
%%%%%%%%%%%
\item We propose a complete procedure for deciding the satisfiability of FIRWINE constraints
and computing the least solution of satisfiable constraints. 
%The procedure goes beyond the capability of the implementation of the InferWidths pass in {\firtool}, the state-of-the-art official {\firrtl} compiler.
%%%%%%%%
\item We implement our procedure in the interactive theorem prover Rocq\footnote{\url{https://rocq-prover.org}, previously known as Coq.} and prove its functional correctness.  
%%%%%%%
\item We extract an OCaml implementation from the Rocq implementation, as the first formally verified InferWidths pass. Extensive experiments
on a large number of benchmarks including three real-world RISC-V processor Chisel designs (i.e., Rocket Chip~\cite{RocketChip}, RISC-V BOOM \cite{Boom}, NutShell~\cite{NutShell}) show that our OCaml implementation can solve more width inference instances than the InferWidths pass in {\firtool}, with higher efficiency in general. 
\end{itemize}

We remark that the width inference problem can be reduced to %seen as a subclass of 
integer linear programming (ILP) %problem and thus can be 
and solved by ILP solvers (e.g., Gurobi~\cite{Gurobi}). However, general-purpose ILP solving algorithms are too complicated to be formally verified currently, while our approach is bespoken and %. Thus, we propose a novel complete procedure that is 
amenable to formal verification, which is a milestone towards %Furthermore, the formally verified InferWidths pass is a crucial step towards building 
a formally verified {\firrtl} compiler. %, analogous to the formally verified CompCert C compiler~\cite{Leroy-BKSPF-2016}.

\revise{
While we only target the problem from {\firrtl} in this work, 
analogous problems %of a similar nature do exist 
do exist in other hardware description languages. For instance, Verilog and VHDL allow developers to implement parameterized modules with configurable widths, thus the consistency of interconnects between such modules should be checked. 
Our approach could be adapted therein, where 
the consistency checking is reduced to satisfiability checking of the width constraints~\cite{SMTGO09}. In particular, the use of dependency graphs to resolve circular dependencies in the decision procedure is language-agnostic and therefore broadly applicable.}

\subsubsection*{Outline.} %The rest of the paper is organized as follows. 
Section~\ref{sec-firrtl} introduces {\firrtl}, formalizes the width inference problem, and
shows that a unique least solution always exists %if the width inference problem is 
for any satisfiable FIRWINE constraint.
%In Section~\ref{sec-formal-width-infer}, we show the existence of the  least solution if there exists at least one solution to the width inference problem.  
Section~\ref{sec-procedure} presents our complete procedure. %for deciding the satisfiability of FIRWINE constraints and computing the least solutions
%of satisfiable FIRWINE constraints.
Section~\ref{sec-coq-proof} describes the Rocq implementation and its formal verification.
In Section~\ref{sec-impl-eval}, we extract the OCaml implementation and thoroughly evaluate its performance.
Finally, we discuss related work in Section~\ref{sec-related} and conclude in Section~\ref{sec-conclusion}.
%The artifacts will be open-sourced upon acceptance.

% The Rocq implementation of our procedure, its  correctness proof, the source code of our width inference tool and benchmarks will be open-sourced upon acceptance.
%are available at
%\begin{center}
%\url{https://anonymous.4open.science/r/VerinferWidth-8F5E}.
%\end{center}

%%%%%%%%%%%%%%%%%%%%%%%%%%%%%%%%%%%%%%%%%%%%

\section{FIRRTL and the width inference problem}\label{sec-firrtl} 
%!TEX root = main.tex

%{\firrtl} represents the standardized elaborated circuit that the Chisel HDL produces.
%{\firrtl} represents the circuit immediately after Chisel's elaboration but before any circuit simplification. It is designed to resemble the Chisel HDL after all meta-programming has executed. Thus, a user program that makes little use of meta-programming facilities should look almost identical to the generated {\firrtl}.

%  \hide{
% \begin{figure}[t]
% \vspace{-2mm}
% \centering
% \begin{tabular}{c}
% %[basicstyle=\footnotesize\ttfamily]
% \begin{lstlisting}
% circuit A :
%   module A :
%     input clk : Clock
%     input rst : UInt<1>
%     output io : {flip enable : UInt<1>, calc : UInt<3>}
%     clk is invalid
%     rst is invalid
%     io is invalid
%     x.io is invalid
%     x.clk <= clk
%     x.rst <= rst
%     x.io.enable <= io.enable
%     io.calc <= x.io.calc
% \end{lstlisting}
% \end{tabular}\vspace{-2mm}
% \caption{An example {\firrtl} program \label{fig:example_sect2}}
% \vspace{-5mm}
% \end{figure}
% }
%%%%%%%%%%%%%%%%%%%%%%%
%\hide
%{
%%    inst x of B
%
%  module B :
%    input clk : Clock
%    input rst : UInt<1>
%    output io : {flip enable : UInt<1>, calc : UInt<3>}
%    clk is invalid
%    rst is invalid
%    io is invalid
%    reg counter : UInt<3>, clock with : (reset => (rst, UInt<3>("h00")))
%    when io.enable
%      node _T_7 = add(counter, UInt<1>("h01"))
%      node _T_8 = tail(_T_7, 1)
%      counter <= _T_8
%      skip
%    io.calc <= counter
%}
%%%%%%%%%%%%%%%%%%%%%%%
%%%%%%%%%%%%%%%%%%%%%%% 
%\subsection{\firrtl}
A {\firrtl} \emph{circuit} consists of a list of modules, where a designated top-level module has the same identifier as the circuit.
Each module represents a hardware block that can be instantiated:
the top module is always instantiated, while other modules are instantiated as often as requested.
For instance, the {\firrtl}  circuit \prettyll{A} shown in Fig.~\ref{fig-infer-width-circular} has a (top-level) module \prettyll{A}.
(The formal syntax of the {\firrtl} language is given in \arxiv{Appendix~\ref{appendix:syntax}}\crv{\cite{ESOP26-full}}.) 

A module consists of a sequence of declarations of ports and %a sequence of 
statements. Each port has a direction (\prettyll{input} or \prettyll{output}), an identifier and a data type.
For instance, the module \prettyll{A} in Fig.~\ref{fig-infer-width-circular} has two input ports, i.e., \prettyll{in} and \prettyll{clock}, and one output port, i.e.,  \prettyll{out}, with types \prettyll{UInt<4>}, \prettyll{Clock} and  \prettyll{UInt}, respectively.
Each statement defines a wire, register, node, connect (\prettyll{<=}), conditional (\prettyll{when}), etc.
A type for specifying the structure of the data can be either a ground type, e.g., \prettyll{UInt<4>} (unsigned integer of bit width 4) or \prettyll{Clock} (carrying a clock signal);
or an aggregate type, which is either a vector type, e.g., \prettyll{UInt[3]} (a three-element vector of unsigned integers),
or a bundle type, e.g., \prettyll{\{enable: UInt<1>, calc: UInt<3>\}} comprising two signals of the ground types \prettyll{UInt<1>} and \prettyll{UInt<3>}, respectively.

%A \emph{module} consists of a set of ports and a sequence of statements. Each \emph{port} has an identifier, a direction (either \prettyll{input} or \prettyll{output}), and a data type.
%For instance, the module \prettyll{A} has two input ports \prettyll{clk} and \prettyll{rst} and one output port \prettyll{io}, with types \prettyll{Clock}, \prettyll{UInt<1>} and  \prettyll{\{flip enable: UInt<1>, calc: UInt<3>\}}, respectively.
%Each \emph{statement} defines a wire, register, node, connect (\prettyll{<=}), conditional (\prettyll{when}), or instance (\prettyll{inst of}), and so on.
%A \emph{type} for specifying the structure of the data can be either a \emph{ground type,} e.g. \prettyll{Clock} (carrying a clock signal) and \prettyll{UInt<1>} (unsigned integer of bit width 1);
%or an \emph{aggregate type}, which is either a \emph{vector type,} e.g. \prettyll{UInt[3]} (a three-element vector of unsigned integers),
%or a \emph{bundle type,} e.g. \prettyll{\{flip enable: UInt<1>, calc: UInt<3>\}} comprising two signals of the ground types \prettyll{UInt<1>} and \prettyll{UInt<3>} respectively.

A (circuit) component is a declared element in a {\firrtl}  circuit that can be referred to by its name; it can be a port, node, wire, register, memory, or module instance.
A component can appear in a connect statement.
This must respect the data flow, e.g., input ports and nodes can only be read from.
% output ports are duplex, so they can also be read (like wires)

%\zhilin{component: a double check is needed}
%\david{Double checked: ``component'' is not directly declared in the {\firrtl} specification,
%but it mostly speaks of ``declared component'', so it needs to be declared;
%components can be connected to (or from, in the case of module input ports and memory read ports).
%The only kind of component that is missing from the above list is the memory.}
\subsection{The width inference problem in \firrtl}
{\firrtl} features high-level constructs such as vector types, bundle types, conditional statements, aggregate connects and modules, which are then gradually removed by a sequence of lowering transformations. 
%Among them, according to the {\firrtl} specification~\cite{firrtl-spec}, 
A necessary step is to infer the minimum (bit) widths for all the components with unspecified widths while maintaining the legality of all incoming connections, as illustrated in Figs.~\ref{fig-infer-width-acyclic} and \ref{fig-infer-width-circular}.
An exception indicating that a certain width is uninferrable 
should be thrown if it fails to do so.
%
%For instance, if the type of a component {\tt x} is declared to be  \prettyll{UInt} or \prettyll{SInt} (rather than \prettyll{UInt<n>} or \prettyll{SInt<n>} for a given $n$), the bit width of {\tt x} should be inferred.  
%According to the {\firrtl} specification~\cite{firrtl-spec}, a {\firrtl} compiler attempts to compute the minimum width needed for each component and returns an error if it fails to do so.
%Moreover, the width inference is \emph{global} \tl{not sure this is a good term; is this from the official doc?} in the sense that it infers the minimum width that is larger than all assigned widths to a component. 
For instance, if {\tt x} is a wire with two connect statements: \prettyll{x <= UInt<5>(15)} and \prettyll{x <= UInt<1>(1)}, 
then the width of {\tt x} is inferred to be $5$,
where \prettyll{UInt<n>(m)} denotes the unsigned integer $m$ of bit width $n$. In contrast, if the wire {\tt x} is connected with the statement: \prettyll{x <= add(x, UInt<1>(1))}, then no width of {\tt x} exists, because the inequality $w_{{\tt x}} \ge \max(w_{\tt x}+1,1)$ is unsatisfiable. As a result, an exception %indicating that the width $x$ was uninferrable 
would be thrown.
%\fu{@Keyin: Add an unsatisfiable example here} 

%{\firrtl} features high-level constructs such as vector types, bundle types, conditional statements, connects and modules, which are then gradually removed by a sequence of lowering transformations. 
%Among these, a necessary step is to infer the bit widths for those components whose widths are unspecified. 
%For instance, if the type of a component {\tt x} is declared to be  \prettyll{UInt} or \prettyll{SInt} (rather than \prettyll{UInt<n>} or \prettyll{SInt<n>} for a given $n$), the bit width of {\tt x} should be inferred.  
%According to the {\firrtl} specification~\cite{firrtl-spec}, a {\firrtl} compiler calculates the minimum width needed for the component and returns an error if it cannot be calculated.
%Moreover, the width inference is \emph{global} \tl{not sure this is a good term; is this from the official doc?} in the sense that it infers the minimum width that is larger than all assigned widths to a component. 
%For instance, if {\tt x} is a wire and there are two connect statements for {\tt x}, say
%\prettyll{x <= UInt<5>(15)} and \prettyll{x <= UInt<1>(1)}, 
%then the width of {\tt x} is inferred to be $5$.

Throughout this paper, let $X= \{x_1, \ldots, x_n\}$ denote a nonempty set of variables ranging over the set  $\natnum$ of nonnegative integers. % (including $0$). 
Moreover, for $m \ge 1$, let $[m]$ denote the set $\{1, \ldots, m\}$.
Formally, the \emph{width inference problem} for a given {\firrtl} program is to compute the \emph{least} (nonnegative) solution of the set of inequalities consisting of 
$w_{{\tt x}} \ge w_{\tt e}$ for each connect statement \prettyll{x <= e},
where $w_{{\tt x}}$ is a nonnegative integer variable denoting the unspecified width of $x$;
$w_{\tt e}$ denotes the width of the expression {\tt e} and is defined as in the {\firrtl} specification~\cite{firrtl-spec}
(\arxiv{cf.\ Appendix~\ref{sec:widthofExpr}}\crv{cf.\ \cite{ESOP26-full}} for a formal definition).
For example, $w_{\tt mux(a, b, c)} = \max(w_{\tt b}, w_{\tt c})$ and $w_{\tt add(a, b)} = \max(w_{\tt a}, w_{\tt b})+1$.  
By \cite[Section~25]{firrtl-spec}, $w_{\tt e}$ for each expression {\tt e} in {\firrtl} is a term following the grammar
%\[
%t \eqdef w_{\tt x} \mid 2^{w_{\tt x}} \mid c \mid \min(t, t) \mid \max(t, t) \mid t + t,
%\]
\begin{center}
    $t \eqdef w_{\tt x} \mid c \mid \min(t, t) \mid \max(t, t) \mid t + t,$
\end{center} 
where $w_{\tt x}$ is a nonnegative integer variable denoting the unspecified width of a component {\tt x} and $c \in \intnum$. Note that the term $\min(w_{\tt x}, w_{\tt y})$ is introduced 
for the expression \prettyll{rem(x, y)} (i.e., the remainder when dividing {\tt x} by {\tt y}). Here we skip the \prettyll{dshl} operation in {\firrtl}, which would introduce terms of the form $2^{w_e}$. Our approach could be extended to deal with them, the detail of which, however, is left as future work. Although we do not support the most general dynamic shift left \prettyll{dshl}, all occurrences of \prettyll{dshl} in our benchmarks are in two specific forms that can be handled by our approach (cf. Section~\ref{sec-evaluation}).
% The specific details will be elaborated in  Section~\ref{sec-evaluation}.

The width inference problem is said to be \emph{satisfiable} if the conjunction of the inequalities of the form $w_{{\tt x}} \ge w_{\tt e}$ is satisfiable. A \emph{solution} of the  width inference problem is an assignment of the variables satisfying all the inequalities.

\subsection{Existence of a unique least solution}
As the solution space is not equipped with a natural total order, %the aforementioned formal definition of
the width inference problem is meaningful only if \emph{a unique least solution exists} for a satisfiable set of inequalities.
We will show that this indeed is the case. 
(If multiple minimal solutions exist, the resulting hardware would be underspecified.)
%To show the existence of a unique least solution of any satisfiable width inference problem,
% we first introduce width terms and FIRRTL Width INEquality (FIRWINE)
% constraints which can represent width inference problems in a more simple way.
%
%The width inference problem of {\firrtl} programs can be formulated as finding a unique least solution of a formula $\varphi$ that is a conjunction of linear inequalities of the form
%\[x_i \ge a_{0} + \sum \limits_{j \in [n]} a_{j} x_j, \]
%where $a_{0} \in \intnum$ and $a_{j} \in \natnum$. 
%%The linear inequality $x_i \ge a_{i,0} + \sum \limits_{j \in [n]} a_{i,j} x_j$ is called the inequality for $x_i$.  
%
%and moreover formula $\max(w_{\tt a}, w_{\tt b})\ge w_{{\tt x}}$ is equivalent to
%  because we are interested in least solution. 
%$w_{\tt a}+x_1\ge w_{{\tt x}}\wedge w_{\tt b}+x_2\ge w_{{\tt x}}\wedge 
%0\ge \min(x_1+x_2)$ where $x_1$ and $x_2$ are introduced auxiliary variables.
%
%It is not hard to see that each inequality $w_{{\tt x}} \ge w_{\tt e}$ can be rewritten as a conjunction of $\max$-free inequalities. For instance, $w_{\tt x} \ge \min(\max(w_{\tt a}, w_{\tt b}), w_{\tt c})$ is equivalent to %\tl{to be made more precise} 
%$w_{\tt x} \ge \min(w_{\tt a}, w_{\tt c}) \wedge w_{\tt x} \ge \min(w_{\tt b}, w_{\tt c})$  because we are interested in a unique least solution. 

The \emph{width terms} are defined by the rules 
%$t \eqdef x_j \mid 2^{x_j} \mid c \mid t + t$ 
\begin{center}
 $t \eqdef x_j \mid c \mid t + t$   
\end{center}
such that $x_j\in X$ and $c \in \intnum$. Note that in width terms, subtraction of variables %operation 
is disallowed. % (though subtracting a constant is allowed). %although the minus operation in integer constants e.g. $-5$ can still be used. 
As a result, we can assume that \emph{each width term is of the form $a_0 + \sum_{j \in [n]} a_j x_j$ such that $a_0 \in \intnum$ and $a_j \in \natnum$ for each $j \in [n]$}.
%, and $\tau_j = x_j$ or $2^{x_j}$}. 

\begin{definition}[$\fwc$-constraints]
	A %linear constraint $\varphi$ is called a 
	%{\firrtl} width  
	\textbf{FIR}RTL \textbf{W}idth \textbf{INE}quality (FIRWINE) constraint ($\fwc$-constraint for short) over $X$ %if it 
	is a conjunction of inequalities of the form $x_i \ge \min(t_{i,1}, \ldots, t_{i, k_i})$, 
	 each of which is referred to as an inequality of $x_i\in X$, where %$1 \le i \le n$ and 
	$t_{i,1}, \ldots, t_{i, k_i}$ are width terms over $X$. 
\end{definition}

%
%A width inequality 
%$x_i \ge \min(t_{i,1}, \ldots, t_{i, k_i})$ is said to be the inequality for $x_i$.
%For a width term $t$ and a function $\eta: X \rightarrow \natnum$, let $\eta(t)$ denote the value of $t$ obtained by evaluating $t$ according to the solution $\eta$. 
%
A solution of a $\fwc$-constraint $\varphi$ is a function $\eta: X \rightarrow \natnum$ such that each inequality in $\varphi$ holds under $\eta$. %$\eta(x_i) \ge \min(\eta(t_{i,1}), \ldots, \eta(t_{i, k_i}))$ for each width inequality $x_i \ge \min(t_{i,1}, \ldots, t_{i, k_i})$ in $\varphi$.  
For brevity, we write a solution $\eta$ as $(\eta(x_1), \ldots, \eta(x_n))$. 
Let $\preceq_n$ denote the partial order over $\natnum^n$ such that $(u_1, \ldots, u_n) \preceq_n (v_1, \ldots, v_n)$ if for each $i \in [n]$, $u_i \le v_i$. 
$(u_1, \ldots, u_n)$ is said to be the \emph{least} solution of  a $\fwc$-constraint $\varphi$, 
if for each solution $(v_1, \ldots, v_n)$ of $\varphi$, $(u_1, \ldots, u_n) \preceq_n (v_1, \ldots, v_n)$.

Each inequality $w_{{\tt x}} \ge w_{\tt e}$ in a width inference problem can be rewritten as a conjunction of $\max$-free inequalities,
as $\min$ and addition distribute over $\max$ (and $\min$)
%(i.e., $\min(\max(t_1, t_2), t) = \max(\min(t_1, t), \min(t_2, t))$),
%(e.g., $\min(t_1,t_2)+t = \min(t_1+t,t_2+t)$),
%
and $w_{{\tt x}} \ge \max(w_{\tt a}, w_{\tt b})$ is equivalent to 
$w_{{\tt x}} \ge w_{\tt a}\wedge w_{{\tt x}} \ge  w_{\tt b}$. Formally, 

\begin{proposition}\label{prop-rewrite}
	Each inequality $w_{{\tt x}} \ge w_{\tt e}$ formulated in a width inference problem can be transformed into an equivalent $\fwc$-constraint. %conjunction of inequalities $w_{\tt x} \ge \min(t_1, \ldots, t_k)$ where each $t_i$ is width term. 
	%such that $t_i$ for each $i \in [k]$ is defined by the rules 
	%$t \eqdef w_{\tt x} \mid c \mid t + t$ where $c \in \intnum$. 
	%$t \eqdef w_{\tt x} \mid 2^{w_{\tt x}} \mid c \mid t + t$ where $c \in \intnum$. 
\end{proposition}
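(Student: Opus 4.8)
The plan is to normalize the term $w_{\tt e}$ so that every $\max$ sits above every $\min$, and every $\min$ sits above every $+$, turning $w_{\tt e}$ into a ``max of mins of width terms''; splitting the outermost $\max$ into a conjunction then produces a $\fwc$-constraint. Concretely, I would call a term in \emph{normal form} if it is written as $\max_{i\in[m]}\min_{j\in[k_i]}\tau_{i,j}$, where each $\tau_{i,j}$ is a width term, i.e.\ an affine expression $a_0+\sum_{j\in[n]}a_j x_j$ with $a_0\in\intnum$ and $a_j\in\natnum$. The first and main step is to prove, by structural induction on the grammar $t \eqdef w_{\tt x}\mid c\mid\min(t,t)\mid\max(t,t)\mid t+t$, that every such term is equivalent, pointwise under every assignment $\eta\colon X\to\natnum$, to a term in normal form.

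The base cases $w_{\tt x}$ and $c$ are already width terms, so they are in normal form with $m=k_1=1$. For the inductive cases I rely on three identities that hold pointwise over $\intnum$ (hence over $\natnum$): addition distributes over $\max$, addition distributes over $\min$, and $\min$ distributes over $\max$. For $t=t_1+t_2$, I write $t_1=\max_{i}\min_{j}\tau_{i,j}$ and $t_2=\max_{p}\min_{q}\sigma_{p,q}$ by the induction hypothesis, and distributivity of $+$ over $\max$ and then over $\min$ yields $\max_{i,p}\min_{j,q}(\tau_{i,j}+\sigma_{p,q})$; each $\tau_{i,j}+\sigma_{p,q}$ is a sum of two width terms, hence again a width term. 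For $t=\max(t_1,t_2)$ I simply merge the two $\max$-families. For $t=\min(t_1,t_2)$, distributivity of $\min$ over $\max$ pushes both top-level $\max$ operators outward, giving $\max_{i,p}\min(A_i,B_p)$ with $A_i=\min_j\tau_{i,j}$ and $B_p=\min_q\sigma_{p,q}$, and each $\min(A_i,B_p)$ flattens to a single $\min$ over the width terms $\{\tau_{i,j}\}_j\cup\{\sigma_{p,q}\}_q$. In every case the result is in normal form.

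Finally, applying this normal form to $w_{\tt e}=\max_{i}\min_{j}\tau_{i,j}$, the inequality $w_{\tt x}\ge w_{\tt e}$ is equivalent to $\bigwedge_i\bigl(w_{\tt x}\ge\min_j\tau_{i,j}\bigr)$ by the stated equivalence $w_{\tt x}\ge\max(a,b)\equivaut (w_{\tt x}\ge a)\wedge(w_{\tt x}\ge b)$, iterated over the $\max$-arguments. Each conjunct $w_{\tt x}\ge\min_j\tau_{i,j}$ has a single variable on the left and a $\min$ of width terms on the right, so it is exactly an inequality of $w_{\tt x}$ in the sense of the $\fwc$-constraint definition, and the conjunction is the desired $\fwc$-constraint.

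This is essentially a DNF-style normalization, so I do not expect a deep obstacle. The two points that require care are, first, checking that width terms are closed under addition — this is where the sign conditions matter, since adding two affine terms adds the variable coefficients, and nonnegativity of each $a_j$ is needed to keep the result a legitimate width term (the constant parts merely add in $\intnum$) — and second, verifying that the normal-form shape is genuinely preserved by each grammar construct, in particular that the $\min$-over-$\max$ rewriting in the $\min$ case does not reintroduce a $\max$ underneath a $\min$.
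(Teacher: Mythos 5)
Your proof is correct and follows essentially the same route as the paper, which justifies Proposition~1 in a single sentence by observing that $\min$ and addition distribute over $\max$ and that $w_{\tt x} \ge \max(w_{\tt a}, w_{\tt b})$ splits into a conjunction. Your explicit max-of-mins normal form and the structural induction merely make that sketch precise, including the (correct) observation that width terms are closed under addition because variable coefficients stay in $\natnum$.
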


\begin{example}
Consider the {\firlis}-constraint $\varphi \equiv x_1 \ge 2 x_2 - 3 \wedge x_2 \ge 2x_1 + 1.$
It is easy to see that $x_1 \ge 2x_2 -3 \ge 2(2x_1 +1)-3=4x_1 -1$. As a result, we can deduce that $1 \ge 3x_1$, and thus $x_1 = 0$. Moreover, from $0 \ge 2x_2 -3$ and $x_2 \ge 1$, we can get that $x_2= 1$. We conclude that $(0, 1)$ is the unique solution of $\varphi$, thus also the least solution of $\varphi$.

On the other hand, if $\varphi \equiv x_1 \ge 2 x_1 \wedge x_1 \ge 1$, then $0 \ge x_1 \ge 1$, meaning that $\varphi$ is unsatisfiable.\qed 
\end{example}
% Owing to their special form, we can show that
% the following proposition holds for $\fwc$-constraints by proving  
% that if $(u_1, \ldots, u_n)$ and $(v_1, \ldots, \linebreak[0] v_n)$ are two solutions of a $\fwc$-constraint $\varphi$, then $(\min(u_1, v_1), \ldots, \linebreak[0] \min(u_n, v_n))$ is also a solution of $\varphi$.  

\begin{proposition}\label{prop-min-sol}
A satisfiable {\firlis}-constraint has a unique least solution.
\end{proposition}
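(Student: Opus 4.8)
The plan is to exhibit the least solution concretely as the coordinate-wise infimum of the \emph{entire} solution set, and then to verify that this infimum is itself a solution. Write the given constraint as $\varphi \equiv \bigwedge_{i \in [n]} \bigl( x_i \ge \min(t_{i,1}, \ldots, t_{i,k_i}) \bigr)$, and let $S \subseteq \natnum^n$ be its solution set, which is nonempty by the satisfiability hypothesis. Following the notation of the excerpt I abbreviate each right-hand side as $g_i \eqdef \min(t_{i,1}, \ldots, t_{i,k_i})$, regarded as a function $\natnum^n \to \intnum$.

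The first ingredient I would record is monotonicity. Because every width term has the normal form $a_0 + \sum_{j \in [n]} a_j x_j$ with variable coefficients $a_j \in \natnum$ (subtraction of variables is disallowed), each $t_{i,j}$ is nondecreasing with respect to $\preceq_n$; since $\min$ preserves monotonicity, every $g_i$ is nondecreasing as well. The constant $a_0 \in \intnum$ may be negative, so $g_i$ can take negative values, but this affects neither its monotonicity nor the inequality $x_i \ge g_i$. The second ingredient is that $\natnum^n$ admits greatest lower bounds of arbitrary nonempty subsets: since $\natnum$ is well-ordered and bounded below by $0$, for any nonempty $T \subseteq \natnum^n$ the point with $i$-th coordinate $\min\{v_i : \vec v \in T\}$ exists and is $\inf T$ with respect to $\preceq_n$. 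I then set $\vec u \eqdef \inf S$.

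The core of the argument is to show $\vec u \in S$. Fix $i \in [n]$. For every $\vec v \in S$ we have $\vec u \preceq_n \vec v$, so monotonicity of $g_i$ together with the fact that $\vec v$ satisfies the $i$-th inequality gives $g_i(\vec u) \le g_i(\vec v) \le v_i$. Hence the fixed integer $g_i(\vec u)$ is a lower bound for $\{v_i : \vec v \in S\}$, so $g_i(\vec u) \le \min\{v_i : \vec v \in S\} = u_i$. As $i$ was arbitrary, $\vec u$ satisfies every inequality of $\varphi$, i.e.\ $\vec u \in S$. By construction $\vec u \preceq_n \vec v$ for all $\vec v \in S$, so $\vec u$ is a least solution; and a least element of a partial order, when it exists, is automatically unique, which settles the proposition.

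The step I expect to be the crux is closure of $S$ under infima, which hinges entirely on the sign restriction on variable coefficients in width terms: without monotonicity (e.g.\ for a spurious constraint such as $x_1 \ge -x_2$) the infimum need not be a solution and the whole argument collapses. Conceptually this is the Knaster--Tarski-style observation that the least pre-fixed point of a monotone operator equals the infimum of its pre-fixed points, adapted to $\natnum^n$, which is not a complete lattice but does possess all the infima the argument requires. In a fully formal development the only genuine bookkeeping is confirming that $\min$ over the finite list $t_{i,1}, \ldots, t_{i,k_i}$ preserves monotonicity and that the claimed coordinate-wise infima behave as stated; everything else is a direct application of the two ingredients above.
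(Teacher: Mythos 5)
Your proof is correct, but it takes a genuinely different route from the paper's. The paper proves the pairwise statement that if $\eta_1=(u_1,\ldots,u_n)$ and $\eta_2=(v_1,\ldots,v_n)$ are solutions then so is $(\min(u_1,v_1),\ldots,\min(u_n,v_n))$, by an explicit computation that splits the index set into $I=\{i : u_i\le v_i\}$ and its complement and pushes the minimum through each width term using $a_{i,j,l}\ge 0$. You instead take the coordinate-wise infimum of the \emph{whole} solution set and verify in one step that it is a solution, via monotonicity of each right-hand side $g_i$ --- a Knaster--Tarski-style argument. Both hinge on exactly the same fact (nonnegativity of the variable coefficients, which is the paper's starred inequality $\stackrel{**}{\ge}$), but your version buys something real: the paper's reduction ``it suffices to show closure under pairwise min'' silently requires the additional observation that, because each coordinate ranges over the well-ordered set $\natnum$, the global infimum is attained by min-ing finitely many solutions together; your argument needs no such bridging step, since you establish directly that $\inf S$ is itself a member of $S$. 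The paper's computation, on the other hand, is more elementary and closer to what one would mechanize. One small point worth making explicit in your write-up: $u_i=\min\{v_i : \vec v\in S\}$ is attained (not merely an infimum) precisely because $\natnum$ is well-ordered and $S$ is nonempty --- you say this, and it is needed for $g_i(\vec u)\le u_i$ to close the loop.
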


\begin{proof}
%Let $\varphi = \varphi_1 \wedge \varphi_2$ be a {\firlis} constraint such that $\varphi$ is a conjunction of inequalities of the form
%\[\bigwedge \limits_{i \in [n]} x_i \ge a_{0} + \sum \limits_{j \in [n]} a_{j} x_j,\]
%where $a_{0} \in \intnum$ and $a_{j} \in \natnum$, and $\varphi_2$ is a conjunction of the inequalities of the form $b_0 \ge \sum \limits_{i \in [n]} b_i x_i$. 

%The proposition follows from  %show 
%the following claim. 
Let $\varphi$ be a satisfiable $\fwc$-constraint. 
It suffices to show that:  
\begin{center}
  \begin{tabular}{|@{~~}l@{~~}|}
        \hline
        \rule{0pt}{3ex}If $\eta_1 = (u_1, \ldots, u_n)$ and $\eta_2 = (v_1, \ldots, v_n)$ are solutions of $\varphi$, \\
        \rule[-1.7ex]{0pt}{0pt}then $\eta = (\min(u_1, v_1), \ldots, \min(u_n, v_n))$ is also a solution of $\varphi$. \\
        \hline
  \end{tabular}
\end{center}

Below, we will write $\eta(x_i)$ to denote $\min(u_i,v_i)$ and similarly $\eta_1(x_i) = u_i$, $\eta_2(x_i) = v_i$;
and we will extend these functions to width terms.
%Let $\eta_1$ and $\eta_2$ be the solutions of $\varphi$ such that $\eta_1(x_i) = u_i$ and $\eta_2(x_i)= v_i$. Moreover, let $\eta$ denote the function such that $\eta(x_i) = w_i=\min(u_i,v_i)$ for each $i \in [n]$. 
%\end{quote}
%
%\smallskip
%
%\noindent {\it Proof of the claim}. 
%
Let $I$ denote the set of indices $i \in [n]$ such that $u_i \le v_i$. Then for each $i \in [n] \setminus I$, we have $u_i > v_i$. As a result,  for each $i\in [n]$, 
\begin{center}
  $\eta(x_i) = \min(u_i,v_i)= \begin{cases}
  u_i,  & \text{if}\ i\in I; \\
  v_i,  & \text{if}\ i\in [n] \setminus I.
\end{cases}$  
\end{center} 

For some $i \in [n]$, let $x_i \ge \min(t_{i,1}, \ldots, t_{i, k_i})$ be an inequality for the variable $x_i$ in the $\fwc$-constraint $\varphi$, 
where for each $j \in [k_i]$, $t_{i,j} = a_{i,j,0} + \sum_{l \in [n]} a_{i, j, l} x_l$.
%\tau_{i, j, l}$, where $\tau_{i, j, l} = x_l$ or $2^{x_l}$. 
We shall show that $\eta(x_i) \ge \min(\eta(t_{i,1}), \ldots, \eta(t_{i, k_i}))$.
Then it follows directly that $\eta$ is a solution of $\varphi$.
%where $\eta(x_i) = w_i$ for each $i \in [n]$. 

%For each $i \in [n]$, we show that $w_i \ge a_{i,0} + \sum \limits_{j \in [n]} a_{i,j} w_j$.

%Evidently, for each inequality of the form $b_0 \ge \sum \limits_{i \in [n]} b_i x_i$ in $\varphi_2$, we have $b_0 \ge \sum \limits_{i \in [n]} b_i u_i \ge \sum \limits_{i \in [n]} b_i w_i$.

%Let us consider an inequality of $\varphi_1$, say $x_i \ge a_{0} + \sum \limits_{j \in [n]} a_{j} x_j$ with $a_0 \in \intnum$ and $a_j \in \natnum$.
%
%Let $i \in [n]$. Then $u_i \ge a_{i,0} + \sum \limits_{j \in [n]} a_{i,j} u_j$ and $v_i \ge a_{i,0} + \sum \limits_{j \in [n]} a_{i,j} v_j$.
Let's first look at the case $u_i \le v_i$, i.e., $i\in I$.
Then
\begin{align*}
\eta(x_i) & = u_i \stackrel{*}{\ge} \min(\eta_1(t_{i,1}), \ldots, \eta_1(t_{i, k_i})) \\
& = \min\Big\{a_{i, j, 0} + \sum \limits_{l \in I} a_{i, j, l} \eta_1(x_l) + \sum\limits_{l \in [n] \setminus I} a_{i, j, l} \eta_1(x_l)\ \big \vert\ j \in [k_i] \Big\}\\
& = \min\Big\{a_{i, j, 0} + \sum \limits_{l \in I} a_{i, j, l} u_l + \sum\limits_{l \in [n] \setminus I} a_{i, j, l} u_l\ \big \vert\ j \in [k_i] \Big\}\\
%%%%%%%%%%%%%
& \stackrel{**}{\ge} \min\Big\{a_{i, j, 0} + \sum \limits_{l \in I} a_{i, j, l} u_l + \sum \limits_{l \in [n] \setminus I} a_{i, j, l} v_l\ \big \vert\ j \in [k_i] \Big\} \\
%%%%%%%%%%%%%%%%%
%& = \min\Big\{a_{i, j, 0} + \sum \limits_{l \in I} a_{i, j, l} w_l + \sum \limits_{l \in [n] \setminus I} a_{i, j, l} w_l\ \big \vert\ j \in [k_i] \Big\} \\
%%%%%%%%%%%%%%%%%
& = \min\Big\{a_{i, j, 0} + \sum \limits_{l \in I} a_{i, j, l} \eta(x_l) + \sum \limits_{l \in [n] \setminus I} a_{i, j, l} \eta(x_l)\ \big \vert\ j \in [k_i] \Big\} \\
%%%%%%%%%%%%%%%%%
& = \min(\eta(t_{i,1}), \ldots, \eta(t_{i, k_i})). 
\end{align*}
The first inequality $\stackrel{*}{\ge}$ above holds because $\eta_1$ is a solution of $\varphi$,
and the second inequality $\stackrel{**}{\ge}$ holds because $a_{i, j, l} \ge 0$ and for $l \in [n] \setminus I$, we have $u_l > v_l$.
%%
%, and either $\eta_1(\tau_{i, j, l}) = \eta_1(x_l) = u_l > v_l = \eta_2(x_l) = \eta_2(\tau_{i, j, l})$ or $\eta_1(\tau_{i, j, l}) = \eta_1(2^{x_l}) = 2^{u_l} > 2^{v_l} = \eta_2(2^{x_l}) = \eta_2(\tau_{i, j, l})$.
%

The remaining case, namely $u_i > v_i$, is proven analogously, with the sets $I$ and $[n] \setminus I$ swapped.
\qed
%Therefore, $(w_1, \ldots, w_n)$ is a solution of $\varphi$. %The proof of the claim is complete. 
%From the claim, we deduce that if $\varphi$ is satisfiable, then the minimum (nonnegative) solution exists. \qed
\end{proof}

%%%%%%%%%%%%%%%%%%%%%%%%%%%%%%%%%%%%%%%%%%%%

\section{A complete procedure for width inference}\label{sec-procedure}
%!TEX root = main.tex

%In this section, we first show that the minimal solution is unique if the width inference problem is satisfiable, then we present a complete procedure to solve the width inference problem.
% that is, to decide the satisfiability of {\firlis} constraints and compute the least solution for the satisfiable constraints. 

%%%%%%%%%%%%%%%%%%%%%%%%%%%%%%
%\subsection{A complete procedure for the width inference problem}
%\subsection{Solving the width inference problem}

%A {\firlis}-constraint is a conjunction of inequalities of the form $x_i \ge \min(t_1, \ldots, t_k)$ with $i \in [n]$. 
Observing that $x_i \ge \min(t_1, \ldots, t_k)$ is equivalent to $\bigvee_{j \in [k]} x_i \ge t_j$, a {\firlis}-constraint can be rewritten into %disjunctive normal form  
$\Phi_{\rm nf}:=\bigvee_{j_1 \in [\ell]} \bigwedge_{j_2 \in [m_{j_1}]} z_{j_1, j_2} \ge t_{j_1, j_2}$ such that $\Phi_{\rm nf}$
is satisfiable iff $\bigwedge_{j_2 \in [m_{j_1}]} z_{j_1, j_2} \ge t_{j_1, j_2}$ is satisfiable for some $j_1\in [\ell]$, where $z_{j_1, j_2} \in X$ and $t_{j_1, j_2}$ is a width term. 
%
%defined by the rules $t \eqdef y \mid c \mid t+t$, where $y \in X$ and $c \in \intnum$. 
%
%A formula $\bigvee \limits_{j_1 \in [l]} \bigwedge \limits_{j_2 \in [m_{j_1}]} z_{j_1, j_2} \ge t_{j_1, j_2}$ 
Moreover, the least solution of $\Phi_{\rm nf}$ is the least one among the least solutions of the disjuncts.  %of $\bigwedge \limits_{j_2 \in [m_{j_1}]} z_{j_1, j_2} \ge t_{j_1, j_2}$ for $j_1 \in [l]$. 
Hence, it is sufficient to  %solve the width inference problem for 
consider a conjunction of inequalities of the form $x_i \ge a_0 + \sum_{j \in [n]} a_j x_j$. %, which is the focus of this section. % such that $z \in X$ and $t$ is defined by the rules $t \eqdef y \mid c \mid t+t$.
%
%In the rest of this subsection, \emph{when we mention a {\firlis} constraint, we mean a conjunction of width inequalities of the form $z \ge t$ such that $z \in X$ and $t$ is defined by the rules $t \eqdef y \mid c \mid t+t$}. 

The procedure utilizes a concept of dependency graphs. 
\begin{definition}[Dependency graph]
Let $\varphi$ be a {\firlis}-constraint over $X$.
Assume that the number of inequalities for all variables $x_i$ together is at most $L\in \natnum$.
%
%For each $i \in [n]$, let us assign a label $l \in [B]$ to 
This allows to assign a unique label $l \in [L]$ to each inequality for $x_i$.

%For convenience, we write an equality $x_i \ge a_0 + \sum \limits_{j \in [m]} a_j x_j$ of label $l$ as $l: x_i \ge a_0 + \sum \limits_{j \in [m]} a_j x_j$.
The dependency graph $G_\varphi$ of $\varphi$ is defined as a directed graph $(V_\varphi, E_\varphi)$, where $V_\varphi=X$ and $E_\varphi \subseteq V_\varphi \times [L] \times \natnum \times V_\varphi$ comprises the edges $(x_i, l, a_{j}, x_{j})$ such that $\varphi$ contains an inequality $l: x_i \ge a_0 + \sum_{j \in [n]} a_j x_j$ with $a_{j} > 0$. %for some $j\in[n]$. % \tl{is $a_j$ vs $a_{j'}$ a typo?}
%, moreover, $b_j = 1$ iff $\tau_j = 2^{x_{i'}}$. 
\end{definition}
%Note that $G_\varphi$ is an edge-labeled and edge-weighted directed graph, that is, in an edge $(x_i, l, a_{j'}, x_{j'})$, $l$ is seen as its label and $a_{j'}$ is seen as its weight.  

\begin{example}\label{exmp-dep-graph}
Given the labeled {\firlis}-constraints $\varphi_1$, $\varphi_2$ and $\varphi_3$:
\vspace{-1.5mm}
\begin{align*}
\varphi_1 \equiv {} & \begin{array}[t]{l@{{}\quad\wedge\quad{}}l@{{}\quad\wedge\quad{}}l}
                    l_1: x_1 \ge 5 & l_2: x_2 \ge 2 & l_3: x_3 \ge 2x_1 - 3 \quad\wedge {} \\
                    l_4: x_3 \ge x_2 +2 & l_5: x_4 \ge 3 x_2 & l_6: x_4 \ge x_3 + 4, \end{array} \\ \specialrule{0em}{2pt}{2pt}
%%%%
\varphi_2 \equiv {} & l_1: x_1 \ge 2x_2 - 4 \quad\wedge\quad l_2: x_2 \ge x_3 -2 \quad\wedge\quad l_3: x_3 \ge x_1 + 1, \\ \specialrule{0em}{2pt}{2pt}
%%%
\varphi_3 \equiv {} & \begin{array}[t]{l@{{}\quad\wedge\quad{}}l@{{}\quad\wedge\quad{}}l@{}l}
                    l_1: x_1 \ge 2x_2 - 4 & l_2: x_2 \ge x_1 + 1 & l_3: x_4 \ge x_2 & {}\quad\wedge\quad {} \\
                    l_4: x_3 \ge x_4\ & l_5: x_5 \ge x_3 + 1 & l_6: x_5 \ge 2 & {}\quad\wedge\quad {} \\
                    l_7: x_6 \ge x_5 - 1 & l_8: x_3 \ge x_6 & \multicolumn{2}{l}{l_9: x_7 \ge x_3.} \end{array}
\end{align*}
\vspace{-1.5mm}

\begin{figure}[t]
\centering
\includegraphics[width=0.99\textwidth]{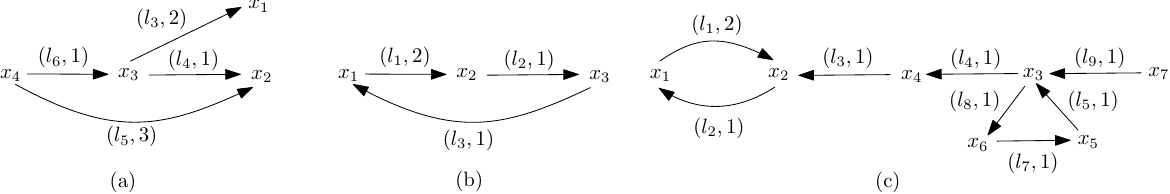}
\vspace{-2mm}
\caption{The dependency graphs of $\varphi_1, \varphi_2, \varphi_3$}\label{fig-dep-graph}
\vspace{-3mm}
\end{figure} 
The dependency graphs for these {\firlis}-constraints are $G_{\varphi_1}$, $G_{\varphi_2}$, and $G_{\varphi_3}$, as illustrated in Fig.~\ref{fig-dep-graph}(a)-(c).
It is easy to see that 
$G_{\varphi_1}$ is acyclic, 
$G_{\varphi_2}$ is strongly connected,
and $G_{\varphi_3}$ comprises four strongly connected components, two of which are nontrivial (containing at least one edge).\qed
%$G_{\varphi_1}$, $G_{\varphi_2}$  are acyclic and strongly connected, respectively. Moreover,   $G_{\varphi_3}$ comprises four strongly connected components, two of which are nontrivial (containing at least one edge).\qed
\end{example}

\subsection{The overall procedure} 
%Suppose that $G_\varphi$ is neither acyclic nor strongly connected. 
%Therefore, $G_\varphi$ contains at least two nontrivial SCCs. 
Let $\sccgraph_{G_\varphi}$ be the SCC graph of $G_\varphi$. We first compute a topological sort $C_1, \ldots, C_k$ of $\sccgraph_{G_\varphi}$, namely, any edge from the vertices in $C_i$ to the vertices in $C_{j}$ entails $i \leq j$.
We denote by $V(C)$ the set of variables in $C$.
For example, in Fig.~\ref{fig-dep-graph}(c), we have $V(C_1) = \{ x_7 \}$, $V(C_2) = \{ x_3, x_5, x_6\}$, $V(C_3) = \{ x_4 \}$ and $V(C_4) = \{ x_1, x_2\}$.
We then iterate the following  procedure from $j= k$ to $j=1$:
\begin{center}
\fbox{\begin{minipage}{\dimexpr\textwidth-0.5cm}
Suppose that the least width $u_{i} \in \natnum$ of each $x_{i} \in V(C_{j+1}) \cup \cdots \cup V(C_k)$ has been computed. %To compute
%the least width $u_{i'}$ of each $x_{i'} \in C_j$, 
We first replace each occurrence of $x_{i} \in V(C_{j+1}) \cup \cdots \cup V(C_k)$ in the inequalities for $x_{i'}\in V(C_j)$ with the least width $u_{i}$, 
%For each $x_{j''} \in V(C_j)$, let the resulting inequality be $x_{j''} \ge b_0 + \sum \limits_{x_{j'} \in V(C_j)} b_{j'} x_{j'}$, where $b_{j''} \in \intnum$. 
%Let 
resulting in a conjunction $\psi_{C_j}$ of inequalities %be the conjunction of the resulting inequalities 
for $x_{i'} \in V(C_j)$.
\begin{itemize}
    \item If $C_j$ is trivial, i.e., it comprises $x_{i'}$ only, then we set $u_{i'}$ to be the maximum of $0$ and the constants $a$ in the inequalities $x_{i'} \ge a$ of $\psi_{C_j}$.
    \item Otherwise $C_j$ is strongly connected, we compute the least solution $(u_{i'})_{x_{i'} \in C_j}$ for $\psi_{C_j}$, or $\psi_{C_j}$ is unsatisfiable and the procedure terminates.  %(cf. Section~\ref{sec-scc}), 
    %during which, if %f during the computation,
    %$\psi_{C_j}$ is discovered to be unsatisfiable, then $\varphi$ is unsatisfiable and the procedure terminates.
\end{itemize}
\end{minipage}
}
\end{center} 

We shall now focus on the computation of the least solution for \emph{strongly connected dependency graphs,} which is the core of our procedure.
We first %characterize 
identify a class of strongly connected dependency graphs via expansiveness.

%
%\item Let $j:=j-1$. 
%\end{enumerate}
%Note that in the above procedure, the algorithm to compute the least solutions for $\Phi_W$-constraints whose dependency graphs are strongly connected is missing, which will be given in Section~\ref{sec-scc}.

%%%%%%%%%%%%%%%%%%%%%%%%%%%%%%%%%%%%%%%%%%%%%%%%%%%%%%%%%%%
%%%%%%%%%%%%%%%%%%%%%%%%%%%%%%%%%%%%%%%%%%%%%%%%%%%%%%%%%%%
\hide{
\keyin{
%Let us begin with an overview of the algorithm's structure. 
For a given $\fwc$-constraint $\varphi$, we construct a dependency graph $G_\varphi$ %among variables 
%and compute SCCs arranged in topological order
which can be partitioned into SCCs. 
Our procedure proceeds by distinguishing whether
$G_\varphi$ is acyclic or not.
% This step decomposes complex constraint systems into mutually independent subproblems, ensuring subsequent processing satisfies acyclic dependencies.
%Each SCC is %addressed with differentiated strategies: 1) Acyclic SCCs: least values are determined through constant propagation and algebraic simplification, 2) Pure difference-bound constraint: A weighted directed graph is constructed and a modified Floyd--Warshall algorithm computes longest paths, 3) Other cases: Mathematical derivation first establishes variable bounds and the Branch-and-Bound (BAB) algorithm is applied.

\begin{algorithm}[t]
\SetAlgoLined
\SetKwInOut{Input}{Input}
\SetKwInOut{Output}{Output}
\Input{A $\fwc$-constraint $\varphi$ over variables $X$}
\Output{A solution $\eta$ if satisfiable; otherwise return $\bot$}
\BlankLine
Build dependency graph $G_{\varphi}$\\
Compute topological order of SCCs $\{C_1, \dots, C_k\} \gets \textsc{Tarjan}(G_{\varphi})$\\
\ForEach{$C \in \text{reversed}\{C_1, \dots, C_k\}$}{
    Extract subconstraints $\varphi_{\text{sub}}$\\
    \uIf{$|C| = 1$}{
        $x_i \gets \text{sole element of } C$\\
        $\eta(x_i) \gets \textsc{SolveTrivialSCC}(\varphi_{\text{sub}})$
    }
    \uElseIf{$\varphi_{\text{sub}}$ is difference-bound}{
        $m_0 \gets \textsc{InitAdj}(\varphi_{\text{sub}})$\\
        $\eta \gets \textsc{Update}(\eta, \textsc{MaxFW}(m_0, C))$
    }
    \uElse{
        $ub \gets \textsc{FindUb}(\varphi_{\text{sub}}, C)$\\
        $lb \gets \textsc{FindLb}(\varphi_{\text{sub}}, C)$\\
        $\eta \gets \textsc{Update}(\eta, \textsc{BAB}(lb, ub, \varphi_{\text{sub}}, C))$\;
    }
}
\Return{$\eta$}\;
\caption{Framework of Our Verification Procedure}
\label{alg:main}
\end{algorithm}
}
%In the sequel, we shall present a procedure to decide the satisfiability of {\firlis} constraints and compute the least solutions for satisfiable formulas. 

%We %are ready to present the procedure. 
%
%Let us first consider the special 
%start with the case that the dependency graphs are acyclic or strongly connected. %Later on, we shall consider the more general case. 
}

\subsection{Characterization of strongly connected dependency graphs}\label{sec-scc}
%We introduce 
The notion of \emph{expansiveness} is to provide a structural characterization for a class of dependency graphs.
A dependency graph $G_\varphi$ is said to be \emph{expansive} if it contains an edge $(x, l, a, x')$ with $a > 1$ or two distinct edges out of some vertex with the same label; otherwise it is 
\emph{nonexpansive}.  %satisfies the conditions in Proposition~\ref{prop-g-1}, then $G_\varphi$ is called \emph{expansive}, otherwise, it is called 
Expansive dependency graphs exhibit the following property.

\begin{proposition}\label{prop-g-1}
Assume $G_\varphi$ is strongly connected and $\varphi$ is satisfiable. $G_\varphi$ is expansive iff 
%contains an edge $(x_i, l, a, x_j)$ with $a > 1$ or two distinct edges out of some vertex $x_i$ with the same label, %\tl{to check later}, 
there is a constant $B \in \natnum$ such that  the values of  the variables in every solution of $\varphi$ are upper-bounded by $B$. 
% Otherwise,
% either $\varphi$ is unsatisfiable or no upper bound exists.
\end{proposition}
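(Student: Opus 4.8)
The plan is to prove the two directions separately, and the easier one is the contrapositive of ``bounded $\Rightarrow$ expansive'', i.e.\ that a \emph{nonexpansive} strongly connected $\fwc$-constraint has an unbounded solution set. If $G_\varphi$ is nonexpansive, then every inequality has at most one variable on its right-hand side and that variable's coefficient is exactly $1$; hence each inequality is either $x_i \ge c$ or $x_i \ge x_j + c$ for some $c \in \intnum$. Such ``difference-style'' constraints are invariant under uniform shifts: if $\eta$ is any solution (one exists since $\varphi$ is satisfiable) and $k \in \natnum$, then $\eta + k\mathbf{1}$, obtained by adding $k$ to every variable, is again a solution, since both sides of each inequality $x_i \ge x_j + c$ increase by $k$ while in $x_i \ge c$ only the left side grows. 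Letting $k \to \infty$ shows that no finite bound $B$ can exist; note that strong connectivity is not even needed here.

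For the converse (expansive $\Rightarrow$ bounded) the crucial observation is a \emph{relaxation} step: along any edge $x_a \to x_b$ arising from an inequality $x_a \ge a_0 + \sum_j a_j x_j$ with $a_b \ge 1$, discarding the other (nonnegative) summands and using $a_b x_b \ge x_b$ yields $\eta(x_a) \ge a_0 + \eta(x_b)$ in every solution; the same inequality rearranges to the upper bound $\eta(x_b) \le \eta(x_a) + |a_0|$. I would use the first (lower-bound) form, chained along a directed path so that coefficients multiply, to pin down one ``anchor'' variable. In the subcase where some edge has coefficient $a > 1$, strong connectivity provides a cycle through it; chaining the coefficient-carrying inequalities around the cycle gives $\eta(x_{i_0}) \ge C + A\,\eta(x_{i_0})$ with $A = \prod(\text{cycle coefficients}) \ge 2$, forcing $\eta(x_{i_0}) \le \max(0,-C)$, a constant depending only on $\varphi$. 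In the subcase where two distinct edges leave a vertex $x_{i^*}$ under one label, the corresponding inequality gives $\eta(x_{i^*}) \ge a_0 + \eta(x_p) + \eta(x_q)$ for distinct $x_p \ne x_q$; taking simple paths from $x_p$ and from $x_q$ back to $x_{i^*}$ (again via strong connectivity) to obtain $\eta(x_p) \ge C_p + \eta(x_{i^*})$ and $\eta(x_q) \ge C_q + \eta(x_{i^*})$, substitution yields $\eta(x_{i^*}) \ge (a_0 + C_p + C_q) + 2\,\eta(x_{i^*})$, again bounding $\eta(x_{i^*})$ by a constant.

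Once one variable $x_{i_0}$ is bounded by a constant $B_0$ in \emph{every} solution, I would finish with a propagation lemma: for any target $x_j$, strong connectivity yields a simple directed path $x_{i_0} = y_0 \to \cdots \to y_q = x_j$, and applying the upper-bound form $\eta(y_{t+1}) \le \eta(y_t) + |a_0^{(t)}|$ edge by edge gives $\eta(x_j) \le B_0 + (n-1) A_{\max}$, where $A_{\max}$ is the largest absolute value of a constant term in $\varphi$. Since $B_0$, the chosen cycle and paths, and $A_{\max}$ depend only on $\varphi$ and not on the particular solution, this bound is uniform, yielding the desired $B$. The main obstacle I anticipate is the branching subcase: unlike the single cycle, it requires merging two independent return paths at $x_{i^*}$ while verifying that the accumulated constants $C_p, C_q$ remain independent of the solution; care is also needed to justify that discarding nonnegative summands and lower-bounding coefficients by $1$ never reverses an inequality, which holds precisely because all variables and coefficients are nonnegative. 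The remaining bookkeeping is routine.
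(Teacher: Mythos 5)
Your proof is correct and takes essentially the same approach as the paper: the nonexpansive direction via shift-invariance of difference-style constraints, and the expansive direction by chaining weakened path inequalities around a cycle through the heavy edge (or through the two same-label edges) to force a self-referential bound. The only cosmetic difference is that you bound a single anchor variable and then propagate along simple paths, whereas the paper routes a cycle through each variable $x_k$ directly; both are sound and the propagation step is in fact what the paper's upper-bound algorithm in Section~3.3 does.
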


%\subsubsection*{Proposition~\ref{prop-g-1}.}
 
% \emph{Assume $G_\varphi$ is strongly connected and $\varphi$ is satisfiable. $G_\varphi$ is expansive iff 
% there is a constant $B \in \natnum$ such that  the values of  the variables in every solution of $\varphi$ are upper-bounded by $B$.}

\begin{proof}
Note that any edge $(z_{r-1}, l_r, a_r, z_r)$ in $G_\varphi$ stems from an inequality $l_r : z_{r-1} \geq a_{r,0} + \sum_{z \in X} a_{r,z} z$, where $a_{r,z_r} = a_r > 0$. We can weaken this inequality to $z_{r-1} \geq a_{r,0} + a_r z_r$, and we denote this weakened inequality by $\mathsf{Ineq}(z_{r-1}, l_r, a_r, z_r)$.
Similar weakened inequalities can be defined over paths inductively as follows.
Given the empty path $\pi_{z_0 \to z_0}$ from $z_0$ to $z_0$, let $$\mathsf{Ineq}(\pi_{z_0 \to z_0}) \equiv z_0 \geq z_0.$$ %---%

For a non-empty path $\pi_{z_0 \to z_m} = \pi'_{z_0 \to z_{m-1}} \cdot (z_{m-1}, l_m, a_m, z_m)$ from $z_0$ via $z_{m-1}$ to $z_m$ and the inequalities
\begin{align*}
\mathsf{Ineq}(\pi'_{z_0 \to z_{m-1}}) \equiv \phantom{z_{m-1}}\makebox[0pt][r]{$z_0$} & \geq a_{m-1} + b_{m-1} z_{m-1} \\
\mathsf{Ineq}(z_{m-1}, l_m, a_m, z_m) \equiv                                  z_{m-1} & \geq a_{m,0} + a_m z_m, \\
\intertext{let}
               \mathsf{Ineq}(\pi_{z_0 \to z_m}) \equiv \phantom{z_{m-1}}\makebox[0pt][r]{$z_0$} & \geq a_{m-1} + b_{m-1} (a_{m,0} + a_m z_m) \\
                                                                                                 & = a_{m-1} + b_{m-1}a_{m, 0} + b_{m-1}a_m z_m
.
\end{align*}
Any solution of $\varphi$ satisfies all such $\mathsf{Ineq}(\pi)$, for all paths $\pi$ in $G_\varphi$.

\begin{description}
\item[-- If] \textbf{$G_\varphi$ is expansive, then the solutions are upper-bounded:}
Assume that $G_\varphi$ is expansive.
We prove that there exists an upper bound for an arbitrary variable $x_k$.
There are two cases:
\begin{description}
\item[$G_\varphi$] \textbf{contains an edge $(x_i, l, a, x_j)$ with $a > 1$:}
	Because $G_\varphi$ is strongly connected, there exist paths $\pi_{ki}$ from $x_k$ to $x_i$
	and $\pi_{jk}$ from $x_j$ to $x_k$.
	The inequality for the composed path $\mathsf{Ineq}(\pi_{ki} \cdot (x_i, l, a, x_j) \cdot \pi_{jk})$ is of the form $x_k \geq a_{k,0} + a_{kk} x_k$.
	Note that we must have that $a_{kk} > 1$ because it is a multiple of $a > 1$.
	This means that any solution of $\varphi$ satisfies $x_k \leq -a_{k,0} / (a_{kk}-1)$.
\item[$G_\varphi$] \textbf{contains two distinct edges $(x_i, l, 1, x_{j_1})$ and $(x_i, l, 1, x_{j_2})$:}
	We can assume that all edges $(x', l', a', x'')$ in $G_\varphi$ have $a'=1$;
	otherwise the result immediately follows from the first case.
	
	From these two edges, we can derive a similar weakening of inequality $l$,
	namely $x_i \geq a_{i,0} + x_{j_1} + x_{j_2}$.
	There exist the following paths with their inequalities:
	\vspace{-1.5mm}
	\begin{align*}
		& \makebox[0pt][l]{$\pi_{ki}$}\phantom{\pi_{j_1k}} \text{ from $x_k$ to $x_i$}       & \mathsf{Ineq}(\pi_{ki}) \equiv \phantom{x_{j_1}}\makebox[0pt][r]{$x_k$} & \geq a_{ki} + x_i \\
		& \pi_{j_1k} \text{ from $x_{j_1}$ to $x_k$} & \mathsf{Ineq}(\pi_{j_1k}) \equiv x_{j_1} & \geq a_{j_1k} + x_k \\
		& \pi_{j_2k} \text{ from $x_{j_2}$ to $x_k$} & \mathsf{Ineq}(\pi_{j_2k}) \equiv x_{j_2} & \geq a_{j_2k} + x_k.
	\end{align*}
	\vspace{-1.5mm}
	Combining all these, we get that
	$$x_k \geq a_{ki} + x_i \geq a_{ki} + a_{i,0} + x_{j_1} + x_{j_2} \geq \linebreak[2] a_{ki} + a_{i,0} + a_{j_1k} + x_k + a_{j_2k} + x_k.$$
	This means that any solution of $\varphi$ satisfies $x_k \leq -a_{ki} - a_{i,0} - a_{j_1k} - a_{j_2k}$.
\end{description}
\item[-- If] \textbf{$G_\varphi$ is nonexpansive, then the solutions are not upper-bounded:}
%(This is the contraposition of the other implication.)
Assume that
$G_\varphi$ does not contain two distinct edges out of some vertex $x_i$ with the same label, 
nor contains any edge $(x_i, l, a, x_j)$ such that $a > 1$. 
Therefore all inequalities in $\varphi$ are of the form $x_i \geq a'_{ij} + x_j$ or $x_i \geq a''_i$ for some constants $a'_{ij}, a''_i \in\intnum$.

Since $\varphi$ is satisfiable, let
$(u_1, u_2, \ldots, u_n)$ be a solution of $\varphi$.
Obviously, for any $C \in \natnum$,  $(u_1 + C, u_2 + C, \ldots, u_n + C)$ is also a solution of $\varphi$.
Thus, the values of the variables in the solutions of $\varphi$ are \emph{not} upper-bounded.
\qed
\end{description}
\end{proof}

As a result, if $G_\varphi$ is expansive, there are finitely many solutions for $\varphi$ and the least solution can be computed by conducting an exhaustive search. 
Our procedure therefore diverges into separate cases depending on whether $G_\varphi$ is expansive (Section~\ref{sec-expansive}) or nonexpansive (Section~\ref{sec-nonexpansive}).

%%%%%%%%%%%%%%%%%%%%%%%%%%%%%%%%%%%%%%%%%%%%%%%%%%%%%%%%%%%%%%%%%%%%%

\subsection{Computing the least solution for expansive $G_\varphi$}\label{sec-expansive}
%%%%%%%%%%%%%%%%%%%%%%%%%%%%%%%%%%%%%%%%%%%%%%%%%%%%%%%%%%%%%%%%%%%%%

In this case, we first present an algorithm to compute the upper bounds (which follows essentially the proof of  Proposition~\ref{prop-g-1}), then we show how to utilize them to search for the least solution based on a branch-and-bound algorithm. % algorithm to 

% From the proof of Proposition~\ref{prop-g-1}, we deduce the following algorithm to compute the upper bounds for the variables. 

\subsubsection*{Computing upper bounds for the variables.}
% \paragraph*{Computing upper bounds for the variables in strongly-connected and expansive  $G_\varphi$.}
The algorithm distinguishes the two alternative cases for the notion of expansiveness.
\begin{itemize}
\item If there is an edge $(x_i, l, a, x_j)$ with $a > 1$, then we compute the upper bounds  by the following procedure. 
\begin{enumerate}
\item Find a path from $x_j$ to $x_i$, say, $\pi_{ji} = (y_0, l_1, a_1, y_1) \cdots (y_{r-1}, l_r, a_r, y_r)$ (where $y_0 = x_j$ and $y_r = x_i$).
%Let $l: x_i \ge t_0$, and $l_1: y_1 \ge t_1$, $\ldots$, $l_r: y_r \ge t_r$.

\item Calculate $\mathsf{Ineq}((x_i, l, a, x_j) \concat \pi_{ji})$ as in the proof of Proposition~\ref{prop-g-1}, say, $x_i \geq a_{i,0} + a_{ii} x_i$.
%Let $t'_0 := t_0$. For every $k: 1 \le k \le r$, let $t'_k$ be the term obtained by replacing $y_k$ in $t'_{k-1}$ with $t_k$.  
%
\item From $a > 1$, we know that $a_{ii} > 1$.
%by an induction on $j': 0 \le j' \le r$, we can show that the coefficient of $y_{j+1}$ in $t'_j$ is greater than $1$. Therefore, let $t'_{r} = a'_0 + \sum_{k \in [n]} a'_k x_k$, we have $a'_i > 1$. Moreover, $a'_k \ge 0$ for every $k \in [n]$. 
%From $x_i \ge t'_{r} = a'_0 + a'_i x_i + \sum_{k \in [n] \setminus \{i\}} a'_k x_k$, we deduce that $(a'_i-1)x_i \le -a'_0$.
As a result, $x_i \le \floor{-a_{i,0}/(a_{ii}-1)}$. We have obtained an upper bound for $x_i$.
\item We start from $x_i$ and apply a breath-first search (BFS) to compute upper bounds for the other variables. That is, suppose that we know $x_{i'} \leq c$ and $(x_{i'}, l', a', x_{j'})$ is an edge in $G_\varphi$, let $l': x_{i'} \ge b_0 + \sum_{k \in [n]} b_k x_k$ (where $b_{j'} = a'$), then we have $a' x_{j'} \le x_{i'} - b_0 \leq c - b_0$, thus $x_{j'} \le \floor{(c - b_0)/a'}$. (Because $G_\varphi$ is strongly connected, applying the BFS produces upper bounds for all the variables in $G_\varphi$.) 
\end{enumerate}
%Note that the aforementioned procedure does not necessarily compute the tight upper bounds. 
%----

\item If there are two distinct edges out of some vertex $x_i$ of the same label, say, $e_1 = (x_i, l, a_1, x_{j_1})$ and $e_2 = (x_i, l, a_2, x_{j_2})$, then we compute the upper bounds by the following procedure. Let us assume that the weights of all the edges in $G_\varphi$ are equal to $1$. Because otherwise, we already know how to compute the upper bounds by the aforementioned procedure. 
\begin{enumerate}
    \item Find two simple paths from $x_{j_1}$ and $x_{j_2}$ to $x_i$, respectively, say $\pi_1$ and $\pi_2$. Then as in the proof of Proposition~\ref{prop-g-1}, we can utilize $\mathsf{Ineq}(e_1 \concat \pi_1)$ and $\mathsf{Ineq}(e_2 \concat \pi_2)$ to compute an upper bound for $x_i$, say $x_i \leq B_i$. 
    \item Similarly to the situation that there is an edge of weight greater than $1$ in $G_\varphi$, we start from $x_i$ and apply a BFS to compute upper bounds for all the other variables in $G_\varphi$.  
\end{enumerate}
\end{itemize}

%The computation of the upper bounds for the situation that there are two distinct edges out of some vertex $x_i$ of the same label is similar, although slightly more involved. \zhilin{to be done.}

\begin{example}
Consider $\varphi_2$ in Example~\ref{exmp-dep-graph}. The dependency graph $G_{\varphi_2}$ as illustrated in Fig.~\ref{fig-dep-graph}(b) is strongly connected and contains an edge $(x_1, l_1, 2, x_2)$ whose weight is greater than $1$. Then we compute the upper bounds as follows. 

Because $x_1 \ge 2x_2 - 4$, we find a simple path from $x_2$ to $x_1$, say $(x_2, l_2, 1, x_3) \concat (x_3, l_3, 1, x_1)$, and utilize the path to apply the replacements. Then we have 
\begin{center}
  $x_1 \ge 2x_2 -4 \ge 2(x_3-2)-4 = 2x_3 -8 \ge 2(x_1+1)-8 = 2x_1 -6.$
\end{center} 
 As a result, $x_1 \le 6$. 

Finally, we start from $x_1$ and apply a BFS to compute the upper bounds for $x_2$ and $x_3$. At first, from $x_1 \ge 2x_2 -4$, we have $2x_2 \le x_1+4 \le 6+4=10$, thus $x_2 \le 5$. Moreover, from $x_2 \ge x_3-2$, we have $x_3 \le x_2+2 \le 7$. 

To summarize, we obtain the upper bounds $(6,5,7)$ for $(x_1, x_2, x_3)$.
\end{example}

%The computation of the upper bounds can be found in Appendix~\ref{app-up}.
%In the sequel, we show how to utilize the upper bounds to search for the least solution 
%. design a variant of the 
%based on branch-and-bound. % algorithm to 

%%%%%%%%%%%%%%%%%%%%%%%%%%%%%%%%%%%%%%%%%%%%%%%%%%%%%%%%%%%%%%%%
\subsubsection*{Branch-and-bound.}
Let $(ub_1, \ldots, ub_n)$ denote the upper bounds of the variables in $X$. %(computed by the procedure in Appendix~\ref{app-up}). %that are computed by the aforementioned procedure. Moreover, 
For each $i \in [n]$, let $lb_i$ denote the maximum of $0$ and the constants $a_0$ in the inequalities $x_i \ge a_0 + \sum_{j \in [n]} a_j x_j$ of $\varphi$. If $lb_i > ub_i$ for some $i\in [n]$, then report that $\varphi$ is unsatisfiable and terminate the algorithm. Otherwise, the algorithm iterates the following branch-and-bound procedure, where the upper bounds $(ub_1, \ldots, ub_n)$ and lower bounds $(lb_1, \ldots, lb_n)$ are taken as parameters. (See Figure~\ref{fig:bab} for an illustration of the procedure.)  

\begin{figure}[t]
    \centering
    \includegraphics[scale = 0.78]{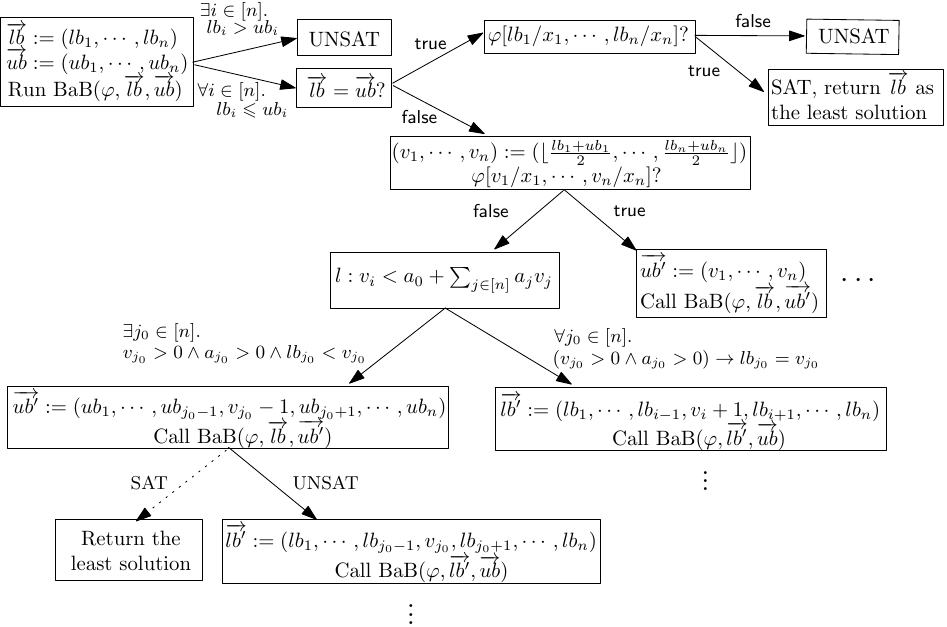}%\vspace{-1mm}
    \caption{An illustration of the branch-and-bound (BaB) procedure}
    \label{fig:bab}%\vspace{-4mm}
\end{figure}

%\begin{center}
%\fbox{\begin{minipage}{\dimexpr\textwidth-0.5cm}
  In the case $(ub_1, \ldots, ub_n) = (lb_1, \ldots, lb_n)$, we check whether
  $(lb_1, \ldots, lb_n)$ is a solution of
   $\varphi$ (i.e., $\varphi[lb_1/x_1, \ldots, lb_n/x_n]$ is $\ltrue$ or not). If it is $\ltrue$, then return $(lb_1, \ldots, lb_n)$ as the least solution; otherwise, report that $\varphi$ is unsatisfiable. 

 Otherwise $(ub_1, \ldots, ub_n) \neq (lb_1, \ldots, lb_n)$, then we compute $(v_1,\ldots, v_n) = (\floor{(ub_1+lb_1)/2}, \ldots, \floor{(ub_n+lb_n)/2})$ and 
 check whether $(v_1,\ldots, v_n) $ is a solution of $\varphi$. If it is a solution, then we set $(ub'_1, \ldots, ub'_n) := (v_1, \ldots, v_n)$ and attempt to compute the least solution of $\varphi$ w.r.t. $(ub'_1, \ldots, ub'_n) $ and $(lb_1, \ldots, lb_n)$.
%
%return the least solution of $\varphi \wedge \bigwedge \limits_{i \in [n]} lb_i \le x_i \wedge \bigwedge \limits_{i \in [n]} x_i \le v_i$. 
%
 Otherwise, $(v_1,\ldots, v_n) $ violates some inequality $l : x_i \ge a_0 + \sum_{j \in [n]} a_j x_j$, i.e., $v_i < a_0 + \sum_{j \in [n]} a_j v_j$.  In this case, there must be $j_0 \in [n]$ such that $v_{j_0} > 0$ and $a_{j_0} > 0$, because otherwise $v_i < a_0$, contradicting $v_i \ge lb_i \ge a_0$. We proceed by a case distinction. 
%  distinguishing whether 
%  (there exists $j_0\in[n]$ such that $v_{j_0} > 0$, $a_{j_0} > 0$,  $lb_{j_0} < v_{j_0}$) or
%  (for every $j_0 \in [n]$ such that $v_{j_0} > 0$ and $a_{j_0} > 0$, we have $lb_{j_0} = v_{j_0}$).
%
\begin{itemize}
 \item \underline{There exists $j_0 \in [n]$ such that $v_{j_0} > 0$, $a_{j_0} > 0$ and 
$lb_{j_0} < v_{j_0}$.} In this case, we try to satisfy the inequality by decreasing the right-hand side $a_0 + \sum_{j \in [n]} a_j x_j$,
concretely by decreasing $x_{j_0}$. To this end, we set $ub'_{j_0} := v_{j_0}-1$ and attempt to compute the least solution of $\varphi$ w.r.t. $(ub_1, \ldots, \linebreak[0] ub_{j_0-1}, ub'_{j_0}, \linebreak[0] ub_{j_0+1}, \linebreak[0] \ldots, \linebreak[0] ub_n)$ and $(lb_1, \ldots, lb_n)$.
\begin{itemize}%[leftmargin=*]
\item If the attempt succeeds,
we have computed the least solution of $\varphi$ w.r.t. the same lower bound, which is returned as the desired least solution.
\item If the attempt fails, there might still be a solution with $x_{j_0} \geq v_{j_0}$.
So let $lb'_{j_0} := v_{j_0}$ and return the least solution of $\varphi$ w.r.t. $(ub_1, \ldots, ub_n)$ and $(lb_1, \ldots, lb_{j_0-1}, lb'_{j_0}, lb_{j_0+1}, \ldots, lb_n)$. In the case that $\varphi$ is unsatisfiable w.r.t. these bounds, report that $\varphi$ is unsatisfiable.
\end{itemize}
%----------------------------------------
\item \underline{For every $j_0 \in [n]$ such that $v_{j_0} > 0$ and $a_{j_0} > 0$, $lb_{j_0} = v_{j_0}$.} 
In this case, it is impossible to decrease the right-hand side of $l : x_i \ge a_0 + \sum_{j \in [n]} a_j x_j$,
so the only way to satisfy the inequality is to increase $x_i$.
Hence we set $lb'_{i}:=v_i+1$ and compute the least solution of $\varphi$ w.r.t. $(ub_1, \ldots, ub_n)$ and $(lb_1, \ldots, lb_{i-1}, lb'_{i}, lb_{i+1}, \ldots, lb_n)$. If $\varphi$ is unsatisfiable, %with respect to these bounds, then 
report that $\varphi$ is unsatisfiable; otherwise, the least solution %of $\varphi$ with respect to these 
will be returned as the desired least solution.  

%the inequality $l: x_i \ge a_0 + \sum \limits_{j \in [n]} a_j x_j$ is unsatisfiable with respect to upper bounds $(ub_1, \ldots, ub_n)$ and lower bounds $(lb_1, \ldots, lb_n)$. As a result, we report that $\varphi$ is unsatisfiable with respect to these bounds. 
%
%\item Otherwise, we have $v_i < a_0$, then let $lb'_i := a_0$ and we compute the least solution of $\varphi$ with respect to the upper bounds $(ub_1, \ldots, ub_n)$ and lower bounds $(lb_1, \ldots, lb_{i-1}, lb'_{i}, lb_{i+1}, \ldots, lb_n)$.
%$\wedge \bigwedge \limits_{k \in [n] \setminus \{i\}} lb_k \le x_k \wedge a_0 \le x_i  \wedge \bigwedge \limits_{k \in [n]} x_k \le ub_k$, since it is necessary that the value of $x_i$ is at least $a_0$.
\end{itemize}

\begin{example}
%\keyin{worked on this}
Consider the {\firlis}-constraint 
\begin{center}
    $\varphi \equiv x_1 \ge 2x_2 - 4 \quad\wedge\quad x_2 \ge x_3 -2 \quad\wedge\quad x_3 \ge x_1 + 1.$
\end{center} 
Our procedure %described in Appendix~\ref{app-up} 
produces the upper bounds of the variables $(ub_1,ub_2,ub_3):=(6,6,8)$. The lower bounds are extracted from $\varphi$, that are $(lb_1,lb_2,lb_3):=(0,0,1)$, referring to the constants in the inequalities. Then, our procedure computes the least solution using 
the upper bounds $(ub_1,ub_2,ub_3)$ and lower bounds $(lb_1,lb_2,\linebreak[0] lb_3)$ as follows.
\begin{enumerate}
    \item  First, since  $lb_i\le ub_i$ for $1\leq i\leq 3$ but $(lb_1,lb_2,lb_3)\neq (ub_1,ub_2,ub_3)$,
    we compute  $(v_1,v_2,v_3)$ from the current upper bounds and lower bounds: 
    \[(v_1,v_2,v_3):=(\lfloor(lb_1+ub_1)/2\rfloor,
    \lfloor(lb_2+ub_2)/2\rfloor,\lfloor(lb_3+ub_3)/2\rfloor)
%=(\lfloor\frac{0+6}{2},\frac{0+6}{2},\frac{1+8}{2}\rfloor)
=(3,3,4).\] 
Since $3 \ge 2\times3 -4 \wedge 3\ge 4-2 \wedge 4\ge 3+1$, $(v_1,v_2,v_3)$ is a solution of $\varphi$. Thus, we set
$(ub_1',ub_2',ub_3') := (v_1,v_2,v_3)=(3,3,4)$ and compute
the least solution using the updated upper bounds $(ub_1',ub_2',ub_3')$ and previous lower bounds $(lb_1,lb_2,lb_3)$.
%%%%%%%%%
\item With the updated upper bounds $(ub_1',ub_2',ub_3')$, we have  $lb_i\le ub_i'$ for $1\leq i\le 3$, but $(lb_1,lb_2,lb_3)\neq (ub_1',ub_2',ub_3')$.
Thus, we compute the new assignment
$(v_1',v_2',v_3')$ from the updated upper bounds and previous lower bounds: \[(v_1',v_2',v_3'):=(\lfloor(lb_1+ub_1')/2\rfloor,\lfloor(lb_2+ub_2')/2\rfloor,\lfloor(lb_3+ub_3')/2\rfloor) 
=(1,1,2),\] 
which also satisfies $\varphi$ as $1 \ge 2\times 1-4\wedge1\ge2-2\wedge2\ge 1+1$.
Then, the upper bounds are updated to 
 $(ub_1'',ub_2'',ub_3''): =(v_1',v_2',v_3')= (1,1,2)$,
 with which the procedure is repeated to compute the least solution.
%%%%%%%%%
\item Similarly, since $lb_i\le ub_i''$ for $1\leq i\leq 3$ but $(lb_1,lb_2,lb_3)\neq (ub_1'',ub_2'',ub_3'')$, 
we compute the assignment
%$(v_1'',v_2'',v_3'')$ %from the current upper bounds and  lower bounds: 
\[(v_1'',v_2'',v_3''):=(\lfloor(lb_1+ub_1'')/2\rfloor,\lfloor(lb_2+ub_2'')/2\rfloor,\lfloor(lb_3+ub_3'')/2\rfloor) 
=(0,0,1).\]
Since $(v_1'',v_2'',v_3'')$ is a solution of $\varphi$, the upper bounds are updated again, i.e., $(ub'''_1,ub'''_2,ub'''_3) :=(v_1'',v_2'',v_3'')= (0,0,1)$ with which the procedure is repeated.
\item At this moment, we have that $(lb_1,lb_2,lb_3)=(ub'''_1,ub'''_2,ub'''_3)=(0,0,1)$. Moreover, $(0,0,1)$ is a solution of $\varphi$. Thus, the procedure  returns $(0,0,1)$ as the least solution and terminates.\qed 
\end{enumerate}
\end{example}
 
%%%%%%%%%%%%%%%%%%%%%%%%%%%%%%%%%%%%%%%%%%%%%%%%%%%%%%%%%%%%%%%%%%%%%
\subsection{Computing the least solution for nonexpansive $G_\varphi$}\label{sec-nonexpansive}

%\subsubsection{$G_\varphi$ is nonexpansive.}\label{sec-nonexpansive}
%%%%%%%%%%%%%%%%%%%%%%%%%%%%%%%%%%%%%%%%%%%%%%%%%%%%%%%%%%%%%%%%%%%%%
%
%We now consider the case where $G_\varphi$ is nonexpansive, namely, 
In this case, every edge $(x_i, l, a, x_j)$ in $G_\varphi$ satisfies $a = 1$, and all the edges out of the same vertex have mutually distinct labels. Then each inequality in $\varphi$ is of the form $x_i \ge c_{ij} + x_j$ or $x_i \ge c'_i$ for some $c_{ij}, c'_i \in \intnum$.

%\paragraph*{Floyd--Warshall algorithm.}
We construct another graph $G'_\varphi$ comprising the edges $(x_i, c_{ij}, x_j)$ for the inequalities $x_i \ge c_{ij} + x_j$.
The computation of the least solution of $\varphi$ is reduced to determining the maximum distances (i.e., maximum weights of the paths)  between vertices in $G'_\varphi$. 

The intuition is as follows. For every $i \in [n]$, if $x_i \ge c_{ij} + x_j$ and $x_j \ge c_{jk} + x_k$, since $x_i,x_j,x_k$ range over $\natnum$, we can deduce that 
%\begin{center}
    $x_i \ge c_{ij}+x_j \ge c_{ij}$ and $x_i \ge c_{ij}+x_j \ge c_{ij}+c_{jk}+x_k \ge c_{ij}+c_{jk}$.
%\end{center} 
In general, for each path starting from $x_i$, say, $(y_1, l_1, a_1, y_2) \ldots (y_r, l_r, a_r, y_{r+1})$ (where $y_1 = x_i$), we have $x_i \ge \max(a_1, a_1+a_2, \ldots, \sum_{j \in [r]} a_j)$. 
More specifically, the least solution is computed as follows.
For each $i \in [n]$, we define $v_i$ as the maximum of $0$, and the constants $c'_i$ in the inequalities $x_i \ge c'_i$.
Then the value of $x_i$ in the least solution is the maximum of $v_i$ and the numbers $w_{ij}+v_j$ where $j \in [n]$ and $w_{ij}$ is the maximum distance
from $x_i$ to $x_j$ in $G'_\varphi$.

To compute the maximum distances between vertices in $G'_\varphi$, we adapt the well-known Floyd--Warshall algorithm~\cite{Floyd62,Warshall62} for computing the minimum distances between vertices into an algorithm for computing the maximum distances. Moreover, during the computation, if a cycle of positive weight is discovered, then report that $\varphi$ is unsatisfiable. Let us call the adapted algorithm the \emph{maximal Floyd--Warshall algorithm}. 

\begin{example}
Consider 
\begin{center}
    $\varphi \equiv l_1: x_1 \ge x_2 + 1 ~~\wedge~~ l_2: x_1 \ge 2 ~~\wedge~~ l_3: x_3 \ge x_1 - 1 ~~\wedge~~ l_4: x_2 \ge x_3.$
\end{center}  
Then $G_\varphi$ is a cycle comprising three edges, namely, $(x_1, l_1, 1, x_2)$, $(x_2, l_4, 1, x_3)$, and $(x_3, l_3, 1, x_1)$. Evidently, $G_\varphi$ is non-expansive. To solve the width inference problem, we construct $G'_\varphi$, which comprises the edges $(x_1, 1, x_2)$, $(x_2, 0, x_3)$, and $(x_3, -1, x_1)$. Then we use the maximal Floyd--Warshall algorithm to compute the following matrix $W$ denoting the maximum distances between vertices,  
\[\arraycolsep=4pt
W = 
\left[
\begin{array} {c c c}
0 & 1 & 1 \\
-1 & 0 & 0 \\
-1 & 0 & 0
\end{array}
\right].
\]
Since $w_{11} = w_{22} = w_{33} = 0$, i.e.,
no cycles have positive weights,
we get that $\varphi$ is satisfiable. 
Because $x_1 \ge 2$,  we define $(v_1, v_2, v_3)=(2, 0, 0)$.
As a result, the least solution is $(u_1, u_2, u_3) = (2, 1, 1)$, because 
\begin{itemize}
    \item $u_1= \max\{v_1,\ w_{12}+v_2,\ w_{13}+v_3\} = \max\{2,\ \phantom{-}1+0, 1+0\} = 2$,
    \item $u_2 = \max\{v_2,\ w_{21}+v_1,\ w_{23}+v_3\} = \max\{0,\ -1+2, 0+0\} = 1$, 
    \item  $u_3 = \max\{v_3,\ w_{31}+v_1,\ w_{32}+v_2\} = \max\{0,\ -1+2, 0+0\}= 1$. \qed
\end{itemize}
\end{example}

%%%%%%%%%%%%%%%%%%%%%%%%%%%%%%%%%%%%%%%%%%%%%%%%%%%%%%%
%final example
%%%%%%%%%%%%%%%%%%%%%%%%%%%%%%%%%%%%%%%%%%%%%%%%%%%%%%%
%%%%%%%%%%%%%%%%%%%%%%%%%%%%%%%%%%%%%%%%%%%%%%%%%%%%%%%%%%%%%%%%%%%%%
\subsection{A complete illustrative example}\label{sec-finalexample}
%%%%%%%%%%%%%%%%%%%%%%%%%%%%%%%%%%%%%%%%%%%%%%%%%%%%%%%%%%%%%%%%%%%%%
We present a complete example to demonstrate the entire procedure for solving the width inference problem. 
\begin{example}
Consider $\varphi_3$ in Example~\ref{exmp-dep-graph}. Its dependency graph $G_{\varphi_3}$, as shown in Fig.~\ref{fig-dep-graph}(c), has four SCCs, 
namely $V(C_1) = \{ x_7 \}$, $V(C_2) = \{ x_3, x_5, x_6\}$, $V(C_3) = \{ x_4 \}$ and $V(C_4) = \{ x_1, x_2\}$.
It is easy to see that $C_1 :: C_2 :: C_3 :: C_4$ is the unique topological sort of the SCC graph of $G_{\varphi_3}$.
Then we compute the least solution as follows.
\begin{enumerate}
\item At first, since $C_4$ (with vertices $x_1, x_2$) contains an edge $(x_1, l_1, 2, x_2)$, we compute the upper bounds for $x_1$ and $x_2$, which are $(2, 3)$. The lower bounds for $(x_1, x_2)$ are $(0,1)$.
Next, we apply the branch-and-bound algorithm to compute the least solution.
Namely, we consider $((2+0)/2, (3+1)/2) = (1, 2)$. Since $(1,2)$ satisfies $x_1 \ge 2x_2 - 4$ and $x_2 \ge x_1 + 1$, we strengthen the upper bounds to $(1,2)$.
Then we compute the least solution of $\varphi$ with respect to the upper bounds $(1,2)$ and lower bounds $(0,1)$.
Since $(0,1)$ satisfies both $x_1 \ge 2x_2 - 4$ and $x_2 \ge x_1 + 1$, we strengthen the upper bounds to $(0,1)$.
Then we return $(0,1)$ as the least solution since the lower and upper bounds coincide.
Therefore, $(u_1, u_2) = (0,1)$.
\item For $C_3$ (with vertex $x_4$), we replace $x_2$ in $x_4 \ge x_2$ with $u_2 = 1$ and obtain $x_4 \ge 1$. Then we set $u_4 = 1$.
\item For $C_2$ (with vertices $x_3, x_5, x_6$), we replace $x_4$ in $x_3 \ge x_4$ with $1$, and obtain $x_3 \ge 1$. Moreover, we have the following inequalities of $x_3, x_5, x_6$: $x_5 \ge x_3 + 1$, $x_5 \ge 2$, $x_6 \ge x_5 - 1$ and $x_3 \ge x_6$. Because $C_3$ is nonexpansive, we construct the graph $G'_{C_3}$ that comprises the edges $(x_5, 1, x_3)$, $(x_6, -1, x_5)$, and $(x_3, 0, x_6)$.
Then we utilize the maximal Floyd--Warshall algorithm to compute the maximum distances between vertices as the following matrix (where the vertex order is $x_3, x_5, x_6$), 
\[\arraycolsep=4pt
W = 
\left[
\begin{array} {c c c}
0 & -1 & 0 \\
1 & 0 & 1 \\
0 & -1 & 0
\end{array}
\right].
\]
%It turns out that the maximum distance from $x_5$ to $x_3$ (resp. from $x_6$ to $x_5$, from $x_3$ to $x_6$) is exactly the weight of the edge between them.  
Moreover, we compute $(v_3, v_5, v_6) = (1, 2, 0)$. As a result, $(u_3, u_5, u_6) = (\max(1,-1+1,0+1), \max(2, 1+1, 1+0), \max(0,0+1,-1+2))=(1,2,1)$.
\item Finally, for $C_1$ (with vertex $x_7$), we replace $x_3$ in $x_7 \ge x_3$ with $u_3$ and obtain $x_7 \ge 1$. Therefore, $u_7 = 1$.
\end{enumerate}
In the end, we obtain the least solution $(0,1,1,1,2,1,1)$.\qed
\end{example}

\subsection{Time complexity}
\revise{The overall worst-case time complexity of our procedure is exponential in the size of the dependency graph, i.e., exponential in the number of cicuit components (dominated by branch-and-bound for expansive SCCs). The time complexity of each stage is as follows.
\begin{itemize}
    \item Topological sorting. We use Tarjan’s algorithm, which runs in linear time in the size of the dependency graph, i.e., $O(|V| + |E|)$.
    \item Non-expansive SCCs. For each non-expansive SCC, we apply a maximal Floyd–Warshall procedure with time complexity $O(n^3)$, where $n$ is the number of nodes in the SCC.
    \item Expansive SCCs. For each expansive SCC, we apply a branch-and-bound (BaB) procedure. First, the upper bound $B$ in Proposition~\ref{prop-g-1} is computed in polynomial time in the size of the SCC. The solution is then searched over a space of size $B^n$ via branch-and-bound, which is exponential in $n$ in the worst case (e.g., when  all branches must be explored).
\end{itemize}
However, in practice, most SCCs encountered in our experiments are either small or non-expansive. Consequently, our procedure performs well in practice.}
%modify $G'_\varphi$ into $G''_\varphi$ by changing each edge $(x_i, l, a, x_j)$ into $(x_i, l, -a, x_j)$. Then the computation of the maximum distances of the paths between vertices in $G'_\varphi$ is reduced to the computation of the least distances of the paths between vertices in $G''_\varphi$. We then utilize Floyd--Warshall algorithm \cite{Floyd62,Warshall62} to fulfill the computation. 

%%%%%%%%%%%%%%%%%%%%%%%%%%%
%%%%%%%%%%%%%%%%%%%%%%%%%%%
\hide{
Then for each vertex $x_i$, we guess all the simple cycles that involve $x_i$ and do the following for each such simple cycle.  
Let $C = x_{l_1} \cdots x_{l_r} x_{l_{r+1}}$ be a simple cycle where $l_{r+1} = l_1 = i$. We compute $u_{C, l_1}, \ldots, u_{C, l_r}$ as follows. 
 For each $j' \in [r]$, let the inequality corresponding to the $j'$-th edge of $C$ be $x_{l_{j'}} \ge a_{l_{j'}} + x_{l_{(j' \bmod r)+1}}$. 
%$$x_{l_{j'}} \ge a_{l_{j'},0} + \sum \limits_{j'' \in I_1} a_{l_{j'}, j''} u_{j''} + \sum \limits_{x_{j''} \in C_i \setminus C'} a_{l_{j'}, j''}  u_{j''} + x_{l_{(j' \bmod r)+1}}.$$ 
%
%For each $j' \in [r]$, let $b_{l_{j'}}$ denote 
%$$a_{l_{j'},0} + \sum \limits_{j'' \in I_1} a_{l_{j'}, j''} u_{j''} + \sum \limits_{x_{j''} \in C_i \setminus C'} a_{l_{j'}, j''}  u_{j''}.$$ 
%
%Then $x_{l_{j'}} \ge b_{l_{j'}} + x_{l_{(j' \bmod r)+1}}$ for each $j' \in [r]$.
\begin{itemize}
\item If $\sum \limits_{j' \in [r]} a_{l_{j'}} > 0 $, then $\varphi$ is unsatisfiable. 
\item Otherwise, for each $j' \in [r]$, let $u_{C, l_{j'}} := \max(0, a_{l_{j'}}, a_{l_{j'}}+ a_{l_{j'+1}}, \ldots, (a_{l_{j'}} + \cdots + a_{l_{((j'+r-2) \bmod r)+1}})$. 
%(From $\sum \limits_{j' \in [r]} a_{l_{j'}} \le 0$, it is unnecessary to include $\sum \limits_{j' \in [r]} a_{l_{j'}}$ here.)
\end{itemize}
Finally, for each $i \in [n]$, we compute the least solution as $(u_1, \ldots, u_n)$ such that for each $i \in [n]$, $u_i$ is the maximum of all $u_{C, l_1}$ such that $C = x_{l_1} \cdots x_{l_r} x_{l_{r+1}}$ is a simple cycle where $l_{r+1} = l_1 = i$. 
}
%%%%%%%%%%%%%%%%%%%%%%%%%%%%%%%%%
%%%%%%%%%%%%%%%%%%%%%%%%%%%%%%%%%

%%%%%%%%%%%%%%%%%%%%%%%%%%%%%%%%%
%%%%%%%%%%%%%%%%%%%%%%%%%%%%%%%%%
\hide{
\zhilin{@Keyin, please add the algorithm for computing the upper bounds as well as the branch and bound algorithm.}

\keyin{following is the bab algorithm.}

We formulate the cyclic dependency constraint solving problem in the premise of Proposition~\ref{prop-g-1} as an integer linear programming optimization problem. Due to the uniqueness of the least solution, the optimization objective is to minimize the sum of the values of all variables.

In the context of solving integer linear programming (ILP) problems, we develop a searching process with branching and pruning.
The branching process involves selecting a specific solution from the solution space and splitting or deducing the current problem into subproblem(s). Each time, the estimated solution based on the current bounds is a map $\eta: X \rightarrow \natnum$, where $\eta(x_i) = \lfloor \frac{Ub(x_i)+Lb(x_i)}{2} \rfloor$ for every $x_i \in X$ is the floor of the average of the current upper bound and lower bound of each variable.
As we claim in Proposition~\ref{prop-g-1} that each variable has an upper bound, suppose the upper bound over $X$ on $\varphi$ is $U$, then the corresponding ILP problem is to minimize $\sum x_i$ where $x_i \in X$ under $\varphi \land \{0 \le x_i \le U(x_i)|x_i \in X\}$. \\Searching start from ${Ub}_0: X \rightarrow \natnum$ and ${Lb}_0: X \rightarrow \natnum$ that ${Ub}_0(x_i) = U(x_i)$ and ${Lb}_0(x_i) = 0$ for every $x_i \in X$. Then the first estimated solution would be $\eta_0$ that $\eta_0(x_i) = \lfloor \frac{U(x_i)}{2} \rfloor$ where each variable has the floor of its upper bound divided by 2 as its estimated value. Then we generate the subproblem on by discussing whether $\eta_0$ satisfies $\varphi$.

\begin{itemize}
\item [-] All the constraints are satisfied. $\eta_0$ is a solution of the problem.
According to the existence of the least solution, for all $\eta_0'$ that $\exists x_i \in X, \eta_0'(x_i) > \eta_0(x_i)$ cannot be a better solution so that the solution space can be pruned to be $\{{Lb}_1(x_i)\le x_i \le {Ub}_1(x_i)\}$ where ${Lb}_1 = {Lb}_0$ and ${Ub}_1 = \eta_0$.

\item [-] A constraint $c$ is unsatisfied, denote $c$ as $x_i \ge a_0 + \sum \limits_{j \in [n]} a_j x_j$ with $i\in [n], a_j > 0$. The reason that this inequality is violated could be 1) $a_0 + \sum \limits_{j \in [n]} a_j x_j$ is too large, 2) $x_j$ is too small. As the optimization objective is to minimize $\sum x_i$, we will first search in the subspace where the estimated solution makes the value of right-hand side terms smaller. 
\\Try to find the first variable in the sequence of the right-hand side terms whose possible value is more than one according to the bounds. \\ \textbf{If it exists}, mark it as $x_j'$ and its value in $\eta_0$ is $\eta(x_j')$. Then the searching space would be split into $Q_1: \{{Lb}_0(x_i) \le x_i \le {Ub}_0(x_i)|x_i \in X/\{x_j'\}\} \land {Lb}_0(x_j') \le x_j' \le \eta_0(x_j')$ and $Q_2: \{{Lb}_0(x_i) \le x_i \le {Ub}_0(x_i)|x_i \in X/\{x_j'\}\} \land \eta_0(x_j') + 1 \le x_j' \le {Ub}_0(x_j')$. If there is a solution $\eta_1$ in $Q_1$, then any solution $\eta_1'$ in $Q_2$ cannot be a better solution because the value of $x_j'$ have $ \eta_1(x_j') \le \eta_0(x_j')$ and $\eta_0(x_j') + 1 \le \eta_1'(x_j')$. In this case, the second branch can be pruned, avoiding unnecessary computations. If the first subproblem provides no solution, then the original problem is deduced to find the optimal solution in $Q_2$.

Note that if $x_i$ appears on the right-hand side, the coefficient could be 1 or more than 1.
1) If the coefficient is 1, decreasing the upper bound of it does not help with satisfying the violated constraint, but it helps with lessen the searching space on finding the least solution.
2) If the coefficient is more than 1, than it is more possible to satisfy the violated constraint by making the estimated solution smaller.
\\ \textbf{Otherwise}, we look at the variable $x_i$ on the left-hand side whether its possible value in the bounds is also unique. If yes, then this branch is finished, or else the searching space is pruned to be $\{{Lb}_0(x_j) \le x_j \le {Ub}_0(x_j)|x_j \in X/\{x_i\}\} \land \eta_0(x_i) + 1 \le x_i \le {Ub}_0(x_i)$ because any value $v_i$ of $x_i$ that $v_i \le \eta_0(x_i)$ would not satisfy constraint $c$.
\end{itemize}
With an updated lower/upper bound, we have a smaller searching space where a new solution would be estimated. Then we redo investigate the satisfying situation and branch it. If there is only one possible value in the bound for every variable, then this branch come to an end. In the Coq implementation of this algorithm, the amount of possible value of at least one variable is halved in each step so that the algorithm must terminate where the possible value of every variable is unique. Therefore the recursive nature necessitates a mechanism to track the state of each branch, including the current bounds and the current optimal solution, to tell the termination of the algorithm for Coq.
}
%%%%%%%%%%%%%%%%%%%%%%%%%%%%%%%%%
%%%%%%%%%%%%%%%%%%%%%%%%%%%%%%%%%

%\paragraph*{Complexity analysis.} It is not hard to see that the time complexity of the aforementioned procedure is polynomial in the \emph{values of the integer constants in $\varphi$} and exponential over $n$, the number of variables in $\varphi$. 
%Moreover, if $a_{i,j} \le 1$ for each $i \in [n]$ and $j \in [n]$, then the time complexity procedure is also polynomial in the number of variables in $\varphi$. 
%Moreover, since the values of these constants involved in the width inference problem are polynomial in the size of a {\firrtl} program, it follows that the width inference problem can be solved in polynomial time, if each ground component occurs at most once in the right-side hand of the connect statements.  

%%%%%%%%%%%%%%%%%%%%%%%%%%%%%%%%%%%%%%%%%%%
 
\section{Rocq implementation and formal verification}\label{sec-coq-proof}
%!TEX root = main.tex

In this section, we formalize in Rocq the width inference procedure in Section~\ref{sec-procedure}, specify the functional correctness, and illustrate how to prove that the Rocq implementation satisfies the correctness. In total, the formalization has 5,301 lines of Rocq code; the specification and proof contain 
6,992 lines of Rocq code.

% \keyin{add this: In total, 2,963 lines of Coq code are used for specifying the syntax of FIRRTL programs and 2,338 lines of Coq code are used for formalization of the width inference problem and the computation procedure. 6,992 lines of Coq code are used for the specification and the correctness proofs.}

\subsection{Formalization of the width inference procedure}
Let $\varphi$ be a $\fwc$-constraint.
We represent $\varphi$ as a map $cm$ in Rocq, where $cm(x)$ for each variable $x$ is a set of inequalities for $x$. %that is, inequalities of the form $x \ge t$. 
%\{x_i \ge t_{i,1}, x_i \ge t_{i,2}, \ldots, x_i \ge t_{i,k}\}$.
%
We introduce a data structure to represent the dependency graph $G_\varphi$ on which the
formally proved tarjan module in Mathcomp 2.0~\cite{tarjan} is applied to compute a topological sort of the SCC graph of $G_\varphi$, represented as a list $C_1 :: \cdots :: C_l$.
%Note that the tarjan module has been proved to be correct in Mathcomp 2.0. 

%%%%%%%%%%%%%%%%%%%%%%%%%%%%%%%
\newcommand{\inferwidth}{{\sf inferWidth}}
\newcommand{\inferscc}{{\sf inferSCC}}
\newcommand{\bab}{{\sf BaB}}
\newcommand{\maxfw}{{\sf maxFW}}
\newcommand{\sat}{{\sf SAT}}
\newcommand{\unsat}{{\sf UNSAT}}
\newcommand{\myvec}[1]{\overrightarrow{#1}}
\newcommand{\vars}{{\sf Vars}}
%%%%%%%%%%%%%%%%%%%%%%%%%%%%%%%

We define an inductive function $\inferwidth(cm, ls, S, \eta)$ to implement the width inference procedure, %in Section~\ref{sec-procedure} is implemented as  
where $cm$ represents the $\fwc$-constraint $\varphi$, $ls$ represents a list 
of SCCs to be processed (initially it is a topological sort of the SCC graph of $G_\varphi$; in recursive calls it will be a prefix of this topological sort),
$S$ represents the set of variables whose widths have already been inferred, and $\eta: S \rightarrow \natnum$ represents the inferred widths.
Its result is $\unsat$ if the constraints are unsatisfiable,
and it is $(\sat, \theta)$ if the constraints are satisfiable;
then $\theta$ indicates the least solution of the variables in the SCCs in $ls$.
The function $\inferwidth(cm, ls, S, \eta)$ is implemented by the following  rules. 

\begin{itemize}
\item If $ls = \nil$, $\inferwidth(cm, ls, S, \eta)$ returns $(\sat, \emptyset)$. 

\item Otherwise, let $ls= ls'::C$,  $\inferwidth(cm, ls, S, \eta)$ calls $\inferscc(C, cm'_C)$ to infer the widths for the variables in $C$, where $cm'_C$ represents the $\Phi_W$-constraint corresponding to $C$, that is, $\dom(cm'_C) = V(C)$, and for each $x \in V(C)$, $cm'_C(x) = cm(x)[(\eta(z)/z)_{z \in S}]$. 
\begin{itemize}
\item If $\inferscc(C, cm'_C)$ returns $(\sat, \theta)$, then $\inferwidth(cm, ls', S', \eta')$ is called recursively, where $S' = S \cup V(C)$ and $\eta' = \eta \cup \theta$.  
\item Otherwise, $\inferwidth(cm, ls, S, \eta)$ returns $\unsat$. 
\end{itemize}
\end{itemize}
\vspace{-2mm}\begin{gather*} \small
   \label{solvefun-nil}
   \infer[\inferwidth: =\nil]
         {\seqq{\inferwidth(cm, ls, S, \eta) = (\sat,\emptyset)}}
         {\begin{array}{c}
         ls = \nil
           \end{array}
         }
\end{gather*}

\vspace{-2mm} \begin{gather*}\small
   \label{solvefun-not-nil-sat}
   \infer[\inferwidth: $\neq$\nil-{\sf sat-1}]
         {\seqq{\inferwidth(cm, ls,S,\eta) =(\sat, \theta \cup \eta'')}}
         {\begin{array}{c}
         ls = ls' :: C  \quad cm'_C = (x \rightarrow cm(x)[(\eta(z)/z)_{z \in S}])_{x \in V(C)}\\
          \inferscc(C, cm'_C) = (\sat, \theta) \quad S' = S \cup V(C) \quad \eta' = \eta \cup \theta \\
          \inferwidth(cm,ls',S',\eta') = (\sat, \eta'')
           \end{array}
         }
 \end{gather*} 
 
\vspace{-2mm} 
\begin{gather*}\small
   \label{solvefun-not-nil-unsat-1}
   \infer[\inferwidth: $\neq$\nil-{\sf unsat-1}]
         {\seqq{\inferwidth(cm, ls,S,\eta) = \unsat}}
         {\begin{array}{c}
         ls = ls' :: C  \quad cm'_C = (x \rightarrow cm(x)[(\eta(z)/z)_{z \in S}])_{x \in V(C)}\\
          \inferscc(C, cm'_C) = (\sat, \theta) \quad S' = S \cup V(C) \quad \eta' = \eta \cup \theta \\
         \inferwidth(cm,ls',S',\eta') = \unsat
           \end{array}
         }
 \end{gather*} 
 \vspace{-2mm}\begin{gather*}\small
   \label{solvefun-not-nil-unsat-2}
   \infer[\inferwidth: $\neq$\nil-{\sf unsat-2}]
         {\seqq{\inferwidth(cm, ls,S,\eta) = \unsat}}
         {\begin{array}{c}
         ls = ls' :: C  \quad cm'_C = (x \rightarrow cm(x)[(\eta(z)/z)_{z \in S}])_{x \in V(C)}\\
          \inferscc(C, cm'_C) = \unsat 
           \end{array}
         }
 \end{gather*} 
 
The function ${\inferscc}(C,cm)$ infers the widths of variables in a strongly connected component $C$ that needs to satisfy the {\firlis}-constraint $cm$.
Similar to above, its result is $\unsat$ if the constraints are unsatisfiable,
and it is $(\sat, \theta)$ if the constraints are satisfiable;
then $\theta$ indicates the least solution.
The function distinguishes whether $C$ is trivial or not.
\begin{itemize}
\item If $C$ is trivial, with $V(C) = \{ x \}$, {\inferscc} computes the assignment $\theta$ such that $\dom(\theta) = \{x\}$ and $\theta(x)$ is the maximum of $0$ and the constants $c$ in the inequalities $x \ge c \in cm(x)$. In this case, {\inferscc} returns $(\sat, \theta)$.
%the value of $x\in C$ in the least solution as the maximum of $0$ and the constants $c$ in the inequalities $x \ge c$.  

\item  If $C$ is not trivial, we further distinguish whether $C$ is expansive or not.
\begin{itemize}
\item If $C$ is expansive, {\inferscc} computes the upper bounds $\myvec{ub}$ and the lower bounds $\myvec{lb}$ and calls the function $\bab(cm, \myvec{lb}, \myvec{ub})$. (See the rule {\sf \inferscc: nontrivial-bab}, where the function ${\sf UB}(C, cm)$ computes the upper bounds $\myvec{ub}$ as described in Section~\ref{sec-procedure} and its detailed definition is omitted.) %Moreover, the rules for $\bab(cm, \myvec{lb}, \myvec{ub})$ follow closely the description of branch-and-bound algorithm in Section~\ref{sec-procedure}, thus are also omitted.)
\item If $C$ is nonexpansive, {\inferscc} calls $\maxfw(C)$ to compute the maximum distances $md$ between vertices. If a cycle of positive weight is discovered, then {\inferscc} reports {\unsat}. Otherwise, it computes the least solution $\theta$. (See the rules \inferscc: {\sf nontrivial-maxfw-unsat} and \inferscc: {\sf nontrivial-maxfw-sat}.)
\end{itemize}
\end{itemize}
 \vspace{-2mm} \begin{gather*}\small
   \label{inferSCC-single}
   \infer[\inferscc: {\sf trivial}]
         {\seqq{\inferscc(C, cm) = (\sat, \theta)}}
         {\begin{array}{c}
         C \mbox{ is trivial} \quad V(C) = \{x\}\quad
%         b \le0 \ for \ x \ge a \cdot x + b \in \psi\\
         %\theta(x) = \max(\{0\} \cup \{c \mid x \ge c\in cm(x)\})
         \theta(x) = \max\{0,c \mid x \ge c\in cm(x)\}
           \end{array}
         }
 \end{gather*}
\vspace{-2mm} \begin{gather*}\small
   \label{inferSCC-bab}
   \infer[\inferscc: {\sf nontrivial-bab}]
         {\seqq{\inferscc(C, cm) = \bab(cm, \myvec{lb}, \myvec{ub})}}
         {\begin{array}{c}
        C \mbox{ is expansive} \quad \myvec{ub} = {\sf UB}(C, cm) \\
          \myvec{lb} = \Big(\big(\max\big\{0, c \mid x \ge c + \sum_{x_j \in C} a_j x_j  \in cm(x)\big\}\big)_{x \in V(C)}\Big)
       %  \myvec{lb} = (\max(\{0\} \cup \{c \mid x \ge c + \sum \limits_{x_j \in V(C)} a_j x_j  \in cm(x)\}))_{x \in V(C)} \myvec{ub})
           \end{array}
         }
 \end{gather*}
\vspace{-2mm} \begin{gather*}\small
   \label{inferSCC-bab}
   \infer[\inferscc: {\sf nontrivial-maxfw-unsat}]
         {\seqq{\inferscc(C, cm) = \unsat}}
         {\begin{array}{c}
        C \mbox{ is nonexpansive} \quad md = \maxfw(C)\\
        \exists x \in V(C). \ md(x,x) > 0
           \end{array}
         }
 \end{gather*} 
\vspace{-2mm}   \begin{gather*}\small
   \label{inferSCC-bab}
   \infer[\inferscc: {\sf nontrivial-maxfw-sat}]
         {\seqq{\inferscc(C, cm) = (\sat, \theta)}}
         {\begin{array}{c}
        C \mbox{ is nonexpansive} \quad md = \maxfw(C) \\
       % \forall x \in V(C).\ m(x,x) = 0\\
        \forall x \in V(C).\ md(x,x) \le 0\\
    %     \myvec{l} = (\max(\{0\} \cup \{c \mid x \ge c  \in cm(x)\}))_{x \in V(C)}\\
   %      \theta = (\max(\{0\} \cup \{m(x,x')+l(x') \mid x' \in C\}))_{x \in V(C)}
        \myvec{v} = \Big(\big(\max\big\{0,c \mid x \ge c  \in cm(x)\big\}\big)_{x \in V(C)}\Big)\\
         \theta = \Big(\big(\max\big\{0, md(x,x')+v(x') \mid x' \in C\big\}\big)_{x \in V(C)}\Big)
           \end{array}
         }
 \end{gather*}
 
%\keyin {working on this.}

The function $\bab(cm, \myvec{lb}, \myvec{ub})$ infers the widths of the variables in an expansive SCC $C$. The Rocq implementation of the function $\bab$ is specified by the $\bab$ rules, which follow closely the description of the branch-and-bound procedure in Section~\ref{sec-procedure}. 
Note that in the rules for $\bab$, to simplify notation, the SCC $C$ is not introduced as a parameter of $\bab$,
but is assumed to be constant with $V(C) = \{x_1, \ldots, x_n\}$.
Also, for the map of constraints $cm$, we define $\psi_{cm} = \bigwedge_{x \in \dom(cm)} \bigwedge_{x \ge t \in cm(x)} x \ge t$.

%%%%%%%%%%%%%%%%removed%%%%%%%%%%%%%%%
%%%%%%%%%%%%%%%%removed%%%%%%%%%%%%%%%
% \hide{
% \begin{itemize}
%     \item If $\myvec{lb}= \myvec{ub}$, we evaluate $\psi_{cm}[lb(x_1)/x_1, \ldots, lb(x_n)/x_n]$. If it is $\ltrue$ then $\myvec{lb}$ is the least solution, otherwise, $\psi_{cm}$ is {\sf UNSAT}.
%     \item If $\myvec{lb} \neq \myvec{ub}$,
%     \begin{itemize}
%         \item We evaluate $\psi_{cm}[v_1/x_1, \ldots, v_n/x_n]$, if it is $\ltrue$, some solution is found but the searching goes on for a least solution. 
%         \item Otherwise, 
%     \end{itemize}
% \end{itemize}
% }
%%%%%%%%%%%%%%%%removed%%%%%%%%%%%%%%%
%%%%%%%%%%%%%%%%removed%%%%%%%%%%%%%%%

\begin{gather*}\small
   \label{bab-lb-ub}
   \infer[{\sf \bab: not lb $\le$ ub}]
         {\seqq{\bab(cm, \myvec{lb}, \myvec{ub}) = \unsat}}
         {\begin{array}{c}
        \exists i \in [n].\ lb_i > ub_i 
           \end{array}
         }
 \end{gather*}

\vspace{-2mm}   
\begin{gather*}\small
   \label{bab-eq-sat}
   \infer[{\sf \bab: eq-sat}]
         {\seqq{\bab(cm, \myvec{lb}, \myvec{ub}) = (\sat,\myvec{lb})}}
         {\begin{array}{c}
         \myvec{lb} = \myvec{ub} \quad \psi_{cm}[lb_1/x_1, \ldots, lb_n/x_n] \mbox{ is } \ltrue
           \end{array}
         }
 \end{gather*}

\vspace{-2mm}   
\begin{gather*}\small
   \label{bab-eq}
   \infer[{\sf \bab: eq-unsat}]
         {\seqq{\bab(cm, \myvec{lb}, \myvec{ub}) = \unsat}}
         {\begin{array}{c}
         \myvec{lb} = \myvec{ub} \quad \psi_{cm}[lb_1/x_1, \ldots, lb_n/x_n] \mbox{ is } \lfalse
           \end{array}
         }
 \end{gather*}

\vspace{-2mm}   \begin{gather*}\small
   \label{bab-neq-true}
   \infer[{\sf \bab: neq-true}]
         {\seqq{\bab(cm, \myvec{lb}, \myvec{ub}) = \bab(cm, \myvec{lb}, \myvec{ub'})}}
         {\begin{array}{c}
         \myvec{lb} \lneqq \myvec{ub} \quad
%         \myvec{lb} = (lb_1, \ldots, lb_n) \quad \myvec{ub} = (ub_1, \ldots, ub_n) \\
         \myvec{v} = (\floor{(ub_1+lb_1)/2}, \ldots,  \floor{(ub_n+lb_n)/2})\\
        \psi_{cm}[v_1/x_1, \ldots, v_n/x_n]  \mbox{ is }  \ltrue \quad \myvec{ub'} = \myvec{v}
           \end{array}
         }
 \end{gather*}

\vspace{-2mm}   \begin{gather*}\small
   \label{bab-neq-false-rhsub}
   \infer[{\sf \bab: neq-false-rhs-1}]
         {\seqq{\bab(cm, \myvec{lb}, \myvec{ub}) = (\sat, \theta)}}
         {\begin{array}{c}
         \myvec{lb} \lneqq \myvec{ub} \quad
%         \myvec{lb} = (lb_1, \ldots, lb_n) \quad \myvec{ub} = (ub_1, \ldots, ub_n)\\
         \myvec{v} = (\floor{(ub_1+lb_1)/2}, \ldots, \floor{(ub_n+lb_n)/2})\\
         \psi_{cm}[v_1/x_1, \ldots, v_n/x_n] \mbox{ is } \lfalse\\
         x_i \ge a_0+\sum \limits_{j \in [n]} a_j x_j \in cm(x_i) \mbox{ and } v_i < a_0+\sum \limits_{j \in [n]} a_j v_j  \\ 
         j_0 \in [n] \quad a_{j_0} >0 \quad lb_{j_0}<v_{j_0}\\
         \myvec{ub'} = (ub_1, \ldots, ub_{j_0-1}, v_{j_0} -1, ub_{j_0+1}, \ldots, ub_n) \\
         \bab(cm, \myvec{lb}, \myvec{ub'}) = (\sat, \theta)
           \end{array}
         }
 \end{gather*}
 
\vspace{-2mm}   \begin{gather*}\small
   \label{bab-neq-false-lhs}
   \infer[{\sf \bab: neq-false-rhs-2}]
         {\seqq{\bab(cm, \myvec{lb}, \myvec{ub}) = \bab(cm, \myvec{lb'}, \myvec{ub})}}
         {\begin{array}{c}
         \myvec{lb} \lneqq \myvec{ub} \quad 
%         \myvec{lb} = (lb_1, \ldots, lb_n) \quad \myvec{ub} = (ub_1, \ldots, ub_n)\\
         \myvec{v} = (\floor{(ub_1+lb_1)/2}, \ldots, \floor{(ub_n+lb_n)/2})\\
        \psi_{cm}[v_1/x_1, \ldots, v_n/x_n] \mbox{ is } \lfalse\\
         x_i \ge a_0+\sum \limits_{j \in [n]} a_j x_j \in cm(x_i) \mbox{ and } v_i < a_0+\sum \limits_{j \in [n]} a_j v_j  \\ 
         j_0 \in [n] \quad a_{j_0} >0 \quad lb_{j_0}<v_{j_0}\\
         \myvec{ub'} = (ub_1, \ldots, ub_{j_0-1}, v_{j_0} -1, ub_{j_0+1}, \ldots, ub_n)\\
         \bab(cm, \myvec{lb}, \myvec{ub'}) = \unsat \\
         \myvec{lb'} = (lb_1, \ldots, lb_{j_0-1}, v _{j_0}, lb_{j_0+1}, \ldots, lb_n)
           \end{array}
         }
 \end{gather*}
 
%\vspace{-2mm}  
\begin{gather*}\small
   \label{bab-neq-false-lhs}
   \infer[{\sf \bab: neq-false-lhs}]
         {\seqq{\bab(cm, \myvec{lb}, \myvec{ub}) = \bab(cm, \myvec{lb'}, \myvec{ub})}}
         {\begin{array}{c}
%         \myvec{lb} = (lb_1, \ldots, lb_n) \quad \myvec{ub} = (ub_1, \ldots, ub_n) \quad   
         \myvec{lb} \lneqq \myvec{ub} \quad
         \myvec{v} = (\floor{(ub_1+lb_1)/2}, \ldots, \floor{(ub_n+lb_n)/2})\\
         \psi_{cm}[v_1/x_1, \ldots, v_n/x_n] \mbox{ is } \lfalse\\
         x_i \ge a_0+\sum \limits_{j \in [n]} a_j x_j \in cm(x_i) \mbox{ and } v_i < a_0+\sum \limits_{j \in [n]} a_j v_j  \\ 
         \forall j_0 \in [n].\ a_{j_0} >0 \rightarrow lb_{j_0}=v_{j_0}\\
         \myvec{lb'} = (lb_1, \ldots, lb_{i-1}, v_i+1, lb_{i+1}, \ldots, lb_n)
           \end{array}
         }
 \end{gather*}
 
%Building a dependency graph in Coq is similar to the description in the Section 3.2 so no more further elaboration here. The tarjan module in mathcomp2.0 repository of Coq provides a verified tarjan function which compute a topological sorted list of set of variables in each strongly connected component. Denote the sorting result as $l_C : C_1, \ldots, C_l$, where $C_1  \cup \cdots \cup C_l = X$. 

%In the implementation in Coq, we focus on $\fwc$-constraints, %form as a conjunction of width inequalities of the form $z \ge t$ such that $z \in X$ and $t$ is defined by the rules $t \eqdef y \mid c \mid t+t$ to state our solving procedure. 
%Constraints 
%which are stored according to its left-hand side variable by a map $cm$ where $cm(x_i) = \{x_i \ge t_{i,1}, x_i \ge t_{i,2}, \ldots, x_i \ge t_{i,k}\}$. We do this because variables are categorized into distinct groups afterwards and we mainly focus on the constraints on the variables in the specific group each time. So that it is more convenient to distinguish a $\fwc$-constraint by its left-hand side variable.

%Building a dependency graph in Coq is similar to the description in the Section 3.2 so no more further elaboration here. The tarjan module in mathcomp2.0 repository of Coq provides a verified tarjan function which compute a topological sorted list of set of variables in each strongly connected component. Denote the sorting result as $l_C : C_1, \ldots, C_l$, where $C_1  \cup \cdots \cup C_l = X$. 

\subsection{Specification of the width inference procedure}
%Let $\vars(\varphi)$ denote the set of variables in $\varphi$.
%and $\dom(cm)$ denote the domain of the map $cm$.

%Before specifying the correctness of $\inferwidth(cm, ls, S, \eta)$, let us 
We first introduce some notations. 
For $ls = C_1 :: \cdots :: C_k$, we use $V(ls)$ to denote $\bigcup_{j \in [k]} V(C_j)$. 
Moreover, we define
\[
\psi_{cm, ls, \eta} \eqdef \bigwedge \limits_{x \in V(ls)} \bigwedge \limits_{x \ge t \in cm(x)} x \ge t[(\eta(x')/x')_{x' \in S}].
\]
Note that $S \cap V(ls) = \emptyset$ and $S \cup V(ls) = \dom(cm)$.
If $ls = \nil$, then $\psi_{cm, ls, \eta} = \ltrue$ by convention.  

The correctness of $\inferwidth(cm, ls, S, \eta)$ is specified by $P_{\inferwidth}$, which asserts that if $\inferwidth$ returns $(\sat, \eta')$, then $\eta'$ is the least solution of $\psi_{cm, ls,\eta}$, otherwise (i.e., $\inferwidth$ returns $\unsat$),  $\psi_{cm, ls,\eta}$ is unsatisfiable. 
%(Recall that $\psi_{cm} = \bigwedge \limits_{x \in \dom(cm)} \bigwedge \limits_{x \ge t \in cm(x)} x \ge t$.)
\[\small
\begin{array}{l}
P_{\inferwidth} \eqdef \forall cm, ls, S, \eta, \eta'.\\
\qquad\qquad\left(\begin{array}{l}
 \left(
\begin{array}{l}
\inferwidth(cm, ls, S, \eta) = (\sat,\eta') \rightarrow \\
\left(\begin{array}{l}\psi_{cm, ls, \eta}[(\eta'(x)/x)_{x \in V(ls)}]\ \wedge \\
\forall \eta''. \left( \psi_{cm, ls, \eta}[(\eta''(x)/x)_{x \in V(ls)}] \rightarrow \eta' \leq \eta'' \right)\end{array}\right)
\end{array}
\right)\ \bigwedge\\
\inferwidth(cm, ls, S, \eta) = \unsat \rightarrow \forall \eta''.\  \neg \psi_{cm, ls, \eta}[(\eta''(x)/x)_{x\in V(ls)}]
\end{array}\right).
\end{array}
\]
Note that $\eta, \eta', \eta''$ in $P_{\inferwidth}$ %should be seen as 
are quantified (second-order) variables representing assignment functions, instead of concrete assignments. 
%\fu{$\forall x \in \vars(\varphi).\ \neg \varphi$ should be $\forall \eta.\  \neg \varphi[\eta(x)/x]_{x\in \vars(\varphi)}$}

The proof that $\inferwidth(cm, ls, S, \eta)$ satisfies $P_\inferwidth$ is done by an induction on $ls$, a list of SCCs in $G_\varphi$. 
Moreover, the proof relies on the proof that $\inferscc$ satisfies the property $P_{\inferscc}$ defined in the sequel,
\[\small
P_{\inferscc} \eqdef
\forall cm, \theta. \left(
\begin{array}{l}
 \left(
\begin{array}{l}
\inferscc(C, cm) = (\sat,\theta) \rightarrow \\
\left(\begin{array}{l}\psi_{cm} [(\theta(x)/x)_{x \in \dom(cm)}]\ \wedge \\
\forall \theta'. \left( \psi_{cm} [(\theta'(x)/x)_{x \in \dom(cm)}]\rightarrow \theta \leq \theta'
\right)\end{array}\right)
\end{array}
\right)\ \bigwedge\\
\inferscc(C, cm) = \unsat \rightarrow \forall \theta'.\ \neg \psi_{cm}[(\theta'(x)/x)_{x \in \dom(cm)}]
\end{array}\right).
\]
%where $\psi_{cm} = \bigwedge \limits_{x \in \dom(cm)} \bigwedge \limits_{x \ge t \in cm(x)} x \ge t$. 
%\fu{$\forall x \in \dom(cm).\ \neg \psi_{cm}$ should be changed accordingly. It should be assignments.}

In turn, the proof that $\inferscc$ satisfies $P_{\inferscc}$ relies on the proof that $\bab(cm, \myvec{lb}, \myvec{ub})$ and $\maxfw(C)$ satisfy the correctness property $P_{\bab}$ and $P_{\maxfw}$ respectively. 
\[\small
\begin{array}{l}
P_{\bab} \eqdef \forall cm, \myvec{lb}, \myvec{ub}, \theta. \\
\qquad\left(
\begin{array}{l}
\left(
\begin{array}{l}
\bab(cm, \myvec{lb}, \myvec{ub}) = (\sat,\theta) \rightarrow \\
\left(\begin{array}{l}\psi_{cm} [(\theta(x)/x)_{x \in \dom(cm)}] \wedge \myvec{lb} \le \theta \le \myvec{ub} \ \wedge \\
\forall \theta'.\big(\psi_{cm}[(\theta'(x)/x)_{x \in \dom(cm)}] \wedge \myvec{lb} \le \theta' \le \myvec{ub} \rightarrow \theta \leq \theta'
\big)\end{array}\right)
\end{array}
\right)\ \bigwedge\\
\bab(cm, \myvec{lb}, \myvec{ub}) = \unsat \rightarrow \forall \theta'.\ \neg \big(\psi_{cm}[(\theta'(x)/x)_{x \in \dom(cm)}] \wedge \myvec{lb} \le \theta' \le \myvec{ub}\big)
\end{array} \right).
\end{array}
\]

\[\small
\begin{array}{l}
P_{\maxfw} \eqdef 
\forall C, md.\ \maxfw(C) = md\ \rightarrow \\
\qquad\left( 
\begin{array}{l}
\left( 
\begin{array}{l}
(\forall x \in V(C).\ md(x,x) \le 0)\ \rightarrow\\
%    \myvec{v} = \big(\big(\max\big\{0,c \mid x \ge c  \in cm(x)\big\}\big)_{x \in V(C)}\big)\ \land \\
%    \theta = \big(\big(\max\big\{0, md(x,x')+v(x') \mid x' \in C\big\}\big)_{x \in V(C)}\big) \Big) \rightarrow \\
\forall x,x' \in V(C).\ \forall \mbox{ path } \pi \mbox{ from } x \mbox{ to } x'.\ md(x, x') \ge weight(\pi)
\end{array}
\right)\ \bigwedge\\
\left(
\begin{array}{l}
(\exists x \in V(C). \ md(x,x) > 0) \ \rightarrow \\
\exists x \in V(C).\ \exists \mbox{ a simple cycle } \pi \mbox{ from }  x \mbox{ to } x.\ weight(\pi) > 0
\end{array}
\right)
\end{array}\right),
\end{array}
\]
where $weight(\pi)$ denotes the weight of $\pi$, that is, the sum of the weights of the edges therein. 

% \zhilin{add some proof sketch in the sequel}

% \keyin{working on this.}
%The main strategy of proving 
%To prove $P_{\inferwidth}$, %we apply strong induction on topological SCC order $ls_\varphi$.

\subsection{Verification of the width inference procedure}

The inductive proof that $\inferwidth(cm, ls, S, \eta)$ satisfies $P_{\inferwidth}$ proceeds as follows. 

\smallskip\noindent
{\bf Basis step $ls=\nil$.} Then $\inferwidth(cm, ls, S, \eta) = (\sat, \emptyset)$. Then $\psi_{cm, ls, \eta} = \ltrue$. It is easy to check that $P_{\inferwidth}$ is $\ltrue$ in this case.  

\smallskip\noindent
{\bf Induction step.} Suppose the result holds for $ls'$ of length less than or equal to $n$. Let us consider $ls = ls'::C$ of length $n+1$. 
\begin{itemize}
    \item If $\inferwidth(cm, ls, S, \eta) = (\sat, \eta')$, then from the rule {\inferwidth: $\neq$\nil-{\sf sat}}, we know that $\inferscc(C, cm'_C) = (\sat, \theta)$ and $\inferwidth(cm, \linebreak[0] ls', \linebreak[0] S \cup V(C),\linebreak[0] \eta \cup \theta) = (\sat, \eta'')$ for some $\theta$ and $\eta''$. By the induction hypothesis, $\eta''$ is the least solution of $\psi_{cm, ls', \eta \cup \theta}$. Moreover, from $P_{\inferscc}$, we know that $\theta$ is the least solution of $\psi_{cm'_C}$. It is not hard to see that $\psi_{cm, ls, \eta} \models \psi_{cm'_C} \wedge \psi_{cm, ls', \eta \cup \theta}$. From the fact that $\theta \cup \eta''$ is the least solution of $\psi_{cm'_C} \wedge \psi_{cm, ls', \eta \cup \theta}$, we deduce that $\theta \cup \eta'' = \eta'$ is the least solution of $\psi_{cm, ls, \eta}$.
    
    \item If $\inferwidth(cm, ls, S, \eta) = \unsat$,  then either $\inferscc(C, cm'_C) = \unsat$ or ($\inferscc(C, cm'_C) = (\sat,\theta)$ and $\inferwidth(cm, ls', S \cup C, \eta \cup \theta) = \unsat$). 
    \begin{itemize}
    \item If $\inferscc(C, cm'_C) = \unsat$, then from $P_{\inferscc}$, we know that $\psi_{cm'_C}$ is unsatisfiable. From the fact that $\psi_{cm, ls, \eta} \models \psi_{cm'_C}$, we deduce that $\psi_{cm, ls, \eta}$ is unsatisfiable as well.  
    \item If $\inferscc(C, cm'_C) = (\sat,\theta)$ and $\inferwidth(cm, ls', S \cup C, \eta \cup \theta) = \unsat$, then by the induction hypothesis, $\psi_{cm, ls', \eta \cup \theta}$ is unsatisfiable. Moreover, from $P_{\inferscc}$, $\theta$ is the least solution of $\psi_{cm'_C}$. It is not hard to see that $\psi_{cm, ls, \eta} \models \psi_{cm'_C} \wedge \psi_{cm, ls', \eta \cup \theta}$. From this fact, we deduce that $\psi_{cm, ls, \eta}$ is unsatisfiable. 
    \end{itemize}
\end{itemize}

The proof that $\inferscc(C, cm)$ satisfies $P_{\inferscc}$ is done by a case analysis. 
\begin{itemize}
\item If $C$ is trivial, then every inequality in $cm(x)$ is of the form $x \ge c$.  As a result, the least (nonnegative) solution is $\max\{0, c \mid x \ge c \in cm(x)\}$.

\item If $C$ is nontrivial and expansive, then from the rule $\inferscc \mbox{: \sf nontrivial-bab}$,  the proof mainly relies on the proof that $\bab(cm, \myvec{lb}, \myvec{ub})$ satisfies $P_{\bab}$, which can be done by following the branch-and-bound procedure and applying an induction on the recursive calls of $\bab$. 

\item If $C$ is nontrivial and nonexpansive, then we use the rules $\inferscc:{}${\sf nontrivial-maxfw-unsat} and $\inferscc:{}${\sf nontrivial-maxfw-sat}
and mainly rely on the proof that the maximal Floyd--Warshall algorithm $\maxfw(C)$ satisfies $P_{\maxfw}$, which essentially follows the proof of the correctness of Floyd--Warshall in Isabelle~\cite{Wimmer_Lammich_AFP_FloydWarshall}. 
\end{itemize}

We now elaborate how to prove that $\bab(cm, \myvec{lb}, \myvec{ub})$ satisfies $P_{\bab}$.

\begin{itemize}
    \item If there exists $i \in [n]$ such that $lb_i > ub_i$, then according to the rule {\sf \bab: not lb $\le$ ub}, $\bab(cm, \myvec{lb}, \myvec{ub})$ returns $\unsat$. Then evidently $\psi_{cm}(\myvec{x}) \wedge \myvec{lb} \le \myvec{x} \le \myvec{ub}$ is unsatisfiable. Therefore, in this case, $P_{\bab}$ holds. In the sequel, let us assume that $\forall i \in [n].\ lb_i \le ub_i$. 

\hide{
\begin{gather*}\small
   \label{bab-lb-ub}
   \infer[{\sf \bab: not lb $$\le$ ub}]
         {\seqq{\bab(cm, \myvec{lb}, \myvec{ub}) = \unsat}}
         {\begin{array}{c}
        \exists i \in [n].\ lb_i > ub_i 
           \end{array}
         }
 \end{gather*}
}

    %
%    \item For the computed $\myvec{lb}$, $\myvec{ub}$, if $\exists i \in [n].lb_i > ub_i$, there must exist violated inequalities computing the upper bound $ub_i$ and the lower bound $lb_i$. Therefore, we can conclude $\varphi$ is \unsat. 
    \item If $\myvec{lb} = \myvec{ub}$,  then $\myvec{lb}$ is the only possible solution of $\psi_{cm}(\myvec{x}) \wedge \myvec{lb} \le \myvec{x} \le \myvec{ub}$. 
    \begin{itemize}
    \item If $\psi_{cm}[lb_1/x_1, \ldots, lb_n/x_n]$ is $\ltrue$, then according to the rule {\sf \bab: eq-sat}, $\bab(cm, \myvec{lb}, \myvec{ub})$ returns $(\sat, \myvec{lb})$. Evidently, $\myvec{lb}$ is the least solution of $\psi_{cm}(\myvec{x}) \wedge \myvec{lb} \le \myvec{x} \le \myvec{ub}$, thus $P_{\bab}$ holds in this case.
    \item Otherwise, $\psi_{cm}[lb_1/x_1, \ldots, lb_n/x_n]$ is $\lfalse$. According to the rule {\sf \bab: eq-unsat}, we know that $\bab(cm, \myvec{lb}, \myvec{ub})$ returns $\unsat$. In this case, evidently, $\psi_{cm}(\myvec{x}) \wedge \myvec{lb} \le \myvec{x} \le \myvec{ub}$ is unsatisfiable. Therefore, $P_{\bab}$ holds in this case. 
    \end{itemize}
%    the searching is end with evaluating the satisfaction of $\varphi[lb_1/x_1,\ldots,lb_n/x_n]$. Because $\myvec{lb}$ is the only possible assignment in the bound, the algorithm would return with $(\sf SAT,\myvec{lb})$ as the least solution if the evaluation is $\ltrue$. Otherwise, $\varphi$ should be $\unsat$. 

\hide{
\begin{gather*}\small
   \label{bab-eq-sat}
   \infer[{\sf \bab: eq-sat}]
         {\seqq{\bab(cm, \myvec{lb}, \myvec{ub}) = (\sat,\myvec{lb})}}
         {\begin{array}{c}
         \myvec{lb} = \myvec{ub} \quad \psi_{cm}[lb_1/x_1, \ldots, lb_n/x_n] \mbox{ is } \ltrue
           \end{array}
         }
 \end{gather*}

\begin{gather*}\small
   \label{bab-eq}
   \infer[{\sf \bab: eq-unsat}]
         {\seqq{\bab(cm, \myvec{lb}, \myvec{ub}) = \unsat}}
         {\begin{array}{c}
         \myvec{lb} = \myvec{ub} \quad \psi_{cm}[lb_1/x_1, \ldots, lb_n/x_n] \mbox{ is } \lfalse
           \end{array}
         }
 \end{gather*}
}

     \item If $\myvec{lb} \neq \myvec{ub}$ and $\psi_{cm}[v_1/x_1,\ldots,v_n/x_n]$ is evaluated to $\ltrue$, then according to the rule {\sf \bab: neq-true}, $\bab(cm, \myvec{lb}, \myvec{ub})$ calls $\bab(cm, \myvec{lb}, \myvec{v})$ recursively and takes its output as the output.
     It is easy to see that $\myvec{v} < \myvec{ub}$.
     Then by the induction hypothesis, $\bab(cm, \myvec{lb}, \myvec{v})$ returns the least solution of of $\psi_{cm}(\myvec{x}) \wedge \myvec{lb} \le \myvec{x} \le \myvec{v}$ (since a solution $\myvec{v}$ already exists). 
     From $\myvec{v} < \myvec{ub}$, the least solution of $\psi_{cm}(\myvec{x}) \wedge \myvec{lb} \le \myvec{x} \le \myvec{v}$ is also the least solution of $\psi_{cm}(\myvec{x}) \wedge \myvec{lb} \le \myvec{x} \le \myvec{ub}$.
     Therefore, $P_{\bab}$ holds in this case.

\hide{
\vspace{-2mm}   \begin{gather*}\small
   \label{bab-neq-true}
   \infer[{\sf \bab: neq-true}]
         {\seqq{\bab(cm, \myvec{lb}, \myvec{ub}) = \bab(cm, \myvec{lb}, \myvec{ub'})}}
         {\begin{array}{c}
         \myvec{lb} \neq \myvec{ub} \quad
%         \myvec{lb} = (lb_1, \ldots, lb_n) \quad \myvec{ub} = (ub_1, \ldots, ub_n) \\
         \myvec{v} = (\floor{(ub_1+lb_1)/2}, \ldots,  \floor{(ub_n+lb_n)/2})\\
        \psi_{cm}[v_1/x_1, \ldots, v_n/x_n] \ is\ \ltrue \quad \myvec{ub'} = \myvec{v}
           \end{array}
         }
 \end{gather*}
}

    \item If $\myvec{lb} \neq \myvec{ub}$ and $\psi_{cm}[v_1/x_1,\ldots,v_n/x_n]$ is evaluated to $\lfalse$, then there is an inequality $x_i \ge a_0 + \sum_{j \in [n]} a_j x_j$ for some $i \in [n]$ in $\psi_{cm}$ such that 
    $v_i > a_0 + \sum_{j \in [n]} a_j v_j$. We distinguish between the following cases. 
    \begin{itemize}
        \item If $\exists j_0\in [n].\ v_{j_0}>0 \land a_{j_0}>0 \land lb_{j0}<v_{j_0}$, then according to the rule {\sf \bab: neq-false-rhs-1} and {\sf \bab: neq-false-rhs-2}, $\bab(cm, \myvec{lb}, \myvec{ub})$ recursively calls $\bab(cm,\myvec{lb},\myvec{ub'})$, where $\myvec{ub'} = (ub_1, \ldots, ub_{j_0-1}, v_{j_0} -1, ub_{j_0+1}, \ldots, ub_n)$. 
        \begin{itemize}
        \item If $\bab(cm,\myvec{lb},\myvec{ub'}) = (\sat, \theta)$, then by the induction hypothesis, $\theta$ is the least solution of $\psi_{cm}(\myvec{x}) \wedge \myvec{lb} \le \myvec{x} \le \myvec{ub'}$. 
        Because $\myvec{ub'} \le \myvec{ub}$, it follows that $\theta$ is also the least solution of  $\psi_{cm}(\myvec{x}) \wedge \myvec{lb} \le \myvec{x} \le \myvec{ub}$.
        From the rule {\sf \bab: neq-false-rhs-1}, $\bab(cm,\myvec{lb},\myvec{ub}) = (\sat, \theta)$.
        Therefore, $P_{\bab}$ holds in this case. 
        \item Otherwise, $\bab(cm, \myvec{lb}, \myvec{ub'}) = \unsat$.
        Then by the induction hypothesis, $\psi_{cm}(\myvec{x}) \wedge \myvec{lb} \le \myvec{x} \le \myvec{ub'}$ is unsatisfiable.
        As a result, if $\psi_{cm}(\myvec{x}) \wedge \myvec{lb} \le \myvec{x} \le \myvec{ub}$ is satisfiable, then we must have $x_{j_0} \ge v_{j_0}$.
        Therefore, the least solution of $\psi_{cm}(\myvec{x}) \wedge \myvec{lb} \le \myvec{x} \le \myvec{ub}$ (if it exists) is that of $\psi_{cm}(\myvec{x}) \wedge \myvec{lb'} \le \myvec{x} \le \myvec{ub}$, where 
        $$\myvec{lb'} = (lb_1, \ldots, lb_{j_0-1}, v _{j_0}, lb_{j_0+1}, \ldots, lb_n).$$ 
        From the rule {\sf \bab: neq-false-rhs-2}, $\bab(\psi_{cm},\myvec{lb},\myvec{ub})$ recursively calls $\bab(\psi_{cm}, \myvec{lb'}, \myvec{ub})$.
        By the induction hypothesis, if $\bab(\psi_{cm}, \myvec{lb'}, \myvec{ub})$ returns $(\sat, \theta)$,
        then $\theta$ is the least solution of $\psi_{cm}(\myvec{x}) \wedge \myvec{lb'} \le \myvec{x} \le \myvec{ub}$,
        otherwise, $\psi_{cm}(\myvec{x}) \wedge \myvec{lb'} \le \myvec{x} \le \myvec{ub}$ is unsatisfiable.
        Therefore, $P_{\bab}$ holds in this case. 
        \end{itemize}
        \item If $\forall j_0\in [n].\  a_{j_0}>0 \rightarrow lb_{j0} = v_{j_0}$, then it is impossible to decrease the right-hand side of $x_i \ge a_0 + \sum_{j \in [n]} a_j x_j$ by choosing values smaller than $\myvec{v}$.
        Therefore, it is necessary to increase the value of $x_i$, that is, it must be the case that $x_i \ge v_i+1$. 
        So, the least solution of $\psi_{cm}(\myvec{x}) \wedge \myvec{lb} \le \myvec{x} \le \myvec{ub}$ (if it exists) is that of $\psi_{cm}(\myvec{x}) \wedge \myvec{lb'} \le \myvec{x} \le \myvec{ub}$,
        where $\myvec{lb'} = (lb_1, \ldots, lb_{i-1}, v_i + 1, lb_{i+1}, \ldots, lb_n)$.
        According to the rule {\sf \bab: neq-false-lhs}, in this case, $\bab(cm, \myvec{lb}, \myvec{ub})$ recursively calls $\bab(cm, \myvec{lb'}, \myvec{ub})$.
        By the induction hypothesis, if $\bab(cm, \myvec{lb'}, \myvec{ub})$ returns $(\sat, \theta)$,
        then $\theta$ is the least solution of $\psi_{cm}(\myvec{x}) \wedge \myvec{lb'} \le \myvec{x} \le \myvec{ub}$,
        otherwise, $\psi_{cm}(\myvec{x}) \wedge \myvec{lb'} \le \myvec{x} \le \myvec{ub}$ is unsatisfiable.
        As a result, in this case, $P_{\bab}$ holds. 
    \end{itemize}
\end{itemize}

%%%%%%%%%%%%%%%%%%%%%removed%%%%%%%%%%%%%%%%%%%%%
%%%%%%%%%%%%%%%%%%%%%removed%%%%%%%%%%%%%%%%%%%%%
\hide{
%we apply the induction on the depth of the recursion call to $\inferwidth$.
%
%. We apply the induction on the derivation of $\xrightarrow[\inferwidth(cm)]{ls_\varphi}$. 
%
\begin{itemize}
 %   \item Suppose the current solution $\eta$ is a solution for solved variables $x_i \in S$ on $cm$.
    \item  
Let $(S,\eta) \xrightarrow[\inferwidth(cm)]{ls'::C} (S',\eta')$ be the current recursive call where 
%$C$ is the last SCC in the topological order $ls'::C$, and 
$\eta$ is the current least solution of variables $x_i \in S$. Let $cm'_C = (x \rightarrow cm(x)[(\eta(z)/z)_{z \in S}])_{x \in C}$, i.e.,
 replacing each occurrence of $z \in S$ in $cm(x)$
 for $x\in C$ with the corresponding value $\eta(z)$.
    Then we have either $\inferscc(C,cm_C') = ({\sf SAT},\theta)$ where 
    $\theta$ is the least solution for variables in $x_i\in C$ %(referring to $P_{\inferscc}$) 
    or $\inferscc(C,cm_C') = {\sf UNSAT}$ indicating that $cm_C'$ is \unsat. 
  %  \item Since $cm_C'$ is obtained by replacing the occurrence of $x_i \in S$ in $(cm(x))_{x\in C}$ with the corresponding value $\eta(x)$,
  \item If $cm_C'$ is $\unsat$, we deduce that $cm$ is $\unsat$.
  Otherwise, $\eta' = \eta\cup\theta$ is the least solution of the variables $x_i\in S\cup C$ for $cm$.
    \item By induction hypothesis, for each SCC $C$ in the sequence $ls_\varphi$, we have that $\eta'$ is the least solution for variables $x_i\in S' = S\cup C$ on $cm$.
\end{itemize}

When the induction is carried through the whole topological sort of SCCs, the result we obtained is the least assignment that satisfies all inequalities. 
}
%%%%%%%%%%%%%%%%%%%%%%%%%%%%%%%%%%%%%%%%%%%%%%%%%
%%%%%%%%%%%%%%%%%%%%%%%%%%%%%%%%%%%%%%%%%%%%%%%%%
%- the smallest possible solution among all feasible assignments that satisfy $\varphi$.

%%%%%%%%%%%%%%%%%%%removed%%%%%%%%%%%%%%%%%
%%%%%%%%%%%%%%%%%%%removed%%%%%%%%%%%%%%%%%
\hide{
\noindent (1) The case that $C$ is trivial is straightforward. %Due to the particularity, 
The correctness property $P_{\inferscc}$ in this case can be reformulated to the following one,
\begin{align*}
P_{\inferscc:{\sf trivial}} \eqdef & \forall \psi\subseteq\{x\geqslant k_i |k_i\in Z\}.\  \max\{0,k_i|x\geqslant k_i \in \psi\} = u  \\ &
\qquad\qquad\qquad\qquad\qquad \rightarrow\psi[u/x] \land (\forall v\in\natnum.\psi[v/x]\ \rightarrow u \leqslant v).
\end{align*}

First of all, it is obvious that $\forall k\in Z.\max\{0,k\} \geqslant k$, implying that $\psi[u/x]$ is $\ltrue$. Let $k = \max\{k_i|x\geqslant k_i\in \psi\}$.
%Moreover, we know $\exists k_j \in \{k_i |x\geqslant k_i\in \psi\}. k_j = \max\{k_i|x\geqslant k_i\in \psi\}$. 
Then either $u = 0$ (if $k\le 0$) or $u = k$. %For every $v \in \natnum$ that satisfies $\{v\geqslant k_i|x\geqslant k_i\in \psi\}$, 
For every $v \in \natnum$ such that $v\geqslant k_i$
for all $x\geqslant k_i\in \psi$, 
we have $v\ge \max\{0,k\}\ge u$.
%we have $v\geqslant 0 = u$ and $v\ge k = \max\{k_i|x\geqslant k_i\in \psi\} = u$. 
So that our algorithm always obtains the least solution in the trivial case.

\noindent (2) For the case that $C$ is nonexpansive. The proof of $P_{\inferscc}$ relies on the proof of $\maxfw$ which essentially follows the proof of Floyd--Warshall in Isabelle~\cite{Wimmer_Lammich_AFP_FloydWarshall}.

\noindent (3) For the case that $C$ is expansive. The proof of $P_{\inferscc}$ replies on the proof of $\bab$ which follows the branches of the branch-and-bound($\bab$) procedure in Figure~\ref{fig:bab}.
}
%%%%%%%%%%%%%%%%%%%removed%%%%%%%%%%%%%%%%%
%%%%%%%%%%%%%%%%%%%removed%%%%%%%%%%%%%%%%%

%%%%%%%%%%%%%%%%%%%%%%%%%%%%%%%%%%%%%%%%%%%%%%%%%%%%%%%%%
%%%%%%%%%%%%%%%%%%%the following contents removed%%%%%%%%
%%%%%%%%%%%%%%%%%%%%%%%%%%%%%%%%%%%%%%%%%%%%%%%%%%%%%%%%%
\hide{
The correctness of our width inference algorithm is expressed by the following two formulas,
$$\forall \varphi. \mathit{inferWidth}(\varphi,l_C,nil,\eta_{empty}) = (S,\eta) \Rightarrow \varphi[\eta(x)/x]\ is\ true$$
and
$$\forall \varphi,\eta'. \mathit{inferWidth}(\varphi,l_C,nil,\eta_{empty}) = (S,\eta) \land \varphi[\eta'(x)/x]\ is\ true \Rightarrow \eta \preceq \eta'$$
The first formula states that if $\eta$ is the assignment returned by applying given inequalities $\varphi$ and to be solved variables $S$ to function $\mathit{inferWidth}$, then $\varphi[\eta(x)/x]$ should be evaluated to be $\ltrue$. The second formula states that if $\eta'$ is an assignment that makes $\varphi[\eta'(x)/x]$ is $\ltrue$, then $\eta$ is always smaller than $\eta'$.

\textbf{Verification.} The main idea is follow the strong induction on topological SCC order. Let us consider the proof of the first formula.  
\begin{itemize}
    \item Suppose the current solution $\eta_S$ is a solution for solved variables $x_i \in S_0$ on $\psi_{S_0}$.
    \item Let $(S_0,\eta_S) \xrightarrow[\mathit{inferWidth}(\varphi)]{l_C'::C} (S,\eta)$ be the recursive call where $C$ is the last SCC in the topological order $l_C'::C$. Then we have $\mathit{inferSCC}(\psi_C',C) = \theta$ where $\theta$ should be a solution for $C$ on $\psi_C'$(whose correctness is discussed in the following paragraphs). 
    \item Since $\psi_C'$ is obtained by replacing the occurrence of $x_i \in S_0$ in $\psi_C$ with the corresponding value in $\eta_S$ and $\eta_1$ is computed by $\mathit{mergeSolution}(\eta_S,\theta)$, $\eta_1$ should be a solution for the inequalities $\psi_C$ before replacing. Therefore, $\eta_1$ is a solution on $\psi_{S_0\cup C}$.
    \item By induction hypothesis, we know that for each SCC $C_i$ in the sequence $l_C$, we have $\eta_i$ is a solution for $S_i$ on $\psi_{S_i}$.
\end{itemize}

For the second formula, similarly to the proof of the first formula, we consider the induction on the computation of $\xrightarrow[\mathit{inferWidth}(\varphi)]{l_C}$.
\begin{itemize}
    \item During the computation, we get a recursive call $(S_0,\eta_S) \xrightarrow[\mathit{inferWidth}(\varphi)]{l_C'::C} (S,\eta)$. From the correctness of function $\mathit{inferSCC}$, we know that $\theta = \mathit{inferSCC}(\psi_C',C)$ is the minimum solution for $C$ on $\psi_C'$.
    \item According to the substitution process to get $\psi_C'$ from $\psi_C$, $\eta_1 = \mathit{mergeSolution}(\eta_S,\theta)$ should be the minimum solution for $x_i \in S_0\cup C$ .
    \item By induction hypothesis, we know that for each SCC $C_i$ in the sequence $l_C$, we have $\eta_i$ is the minimum solution for all solved variables $S_i$ on $\psi_{S_i}$.
\end{itemize}

When the induction is carried through the whole topological SCC order, the result we obtained is the minimum assignment that satisfies all constraints - the smallest possible solution among all feasible assignments that satisfy the constraints.

%%%%%%%%%%%%%%%%%%%%%%%%%%%%%%%%%%%%%%%%%%%%%%%%%%%%%%%%%%

\subsubsection{Correctness of $\mathit{inferSCC}$.}
The previous proof is mainly based on induction on topological order and the correctness of function $\mathit{inferSCC}$. The correctness of $\mathit{inferSCC}$ is specified by the following two formulas,
$$\forall \psi. \mathit{inferSCC}(\psi,V) = \theta \Rightarrow \forall x_i \in V, \psi[\theta(x_i)/x_i]\ is\ true$$
and
$$\forall \psi,\theta'. \mathit{inferSCC}(\psi,V) = \theta \land \psi[\theta'(x_i)/x_i]\ is\ true \Rightarrow \theta \preceq \theta'$$

Note that $\psi$ is the simplified inequalities that only involve variables in $V$ and $V$ is the set of variables that form a SCC. We introduce the correctness proof of the $\mathit{inferSCC}$ function referring to the formalization of three cases in sequence.

\paragraph{Correctness of solving single variable.} Due to the particularity of the situation, the correctness of this case can be simplified to the following formulas,
$$\forall x. \psi \equiv \{x\geqslant k_i|k_i\in Z\}, u = max\{k_i\} \Rightarrow \psi[u/x]\ is\ true$$
and
$$\forall x,v. \psi \equiv \{x\geqslant k_i|k_i\in Z\}, u = max\{k_i\}, \psi[v/x]\ is\ true \Rightarrow v \geqslant u$$

\textbf{Verification.} This is a trivial one, the first formula is proved by $max\{k_i\} \geqslant k_i$. As for the second one, we know $\exists k_j, k_j = max\{k_i|x\geqslant k_i\in \psi\}$. for every $v$ that satisfy $\{v\geqslant k_i|x\geqslant k_i\in \psi\}$, it is obvious that $v\geqslant k_j = max\{k_i\} = u$. So that our algorithm always obtain the minimum solution in the single variable case.

\paragraph{Correctness of solving general case.} Since the induction strategy for this situation is not trivial (not simply on a list), but relies on the reduction of possible assignment in the bound, we also claim its termination through the induction step.
\begin{itemize}
    \item neq,true : through $(lb,ub)\xrightarrow[BAB(\psi,V)]{v}(lb,v)$, we know that for every variable we have $v_i = \floor{(ub_i+lb_i)/2} < ub_i$
    \item neq,false-1 : through $(lb,ub)\xrightarrow[BAB(\psi,V)]{v}(lb,ub')$, we know that for $x_{j0}$ we have $ub_{j0}' = v_{j0} -1 = \floor{(ub_i+lb_i)/2} -1 < ub_i$
    \item neq,false-2 : through $(lb,ub)\xrightarrow[BAB(\psi,V)]{v}(lb',ub)$, we know that for $x_{j0}$ we have $lb_{j0} < \floor{(ub_i+lb_i)/2} =lb_{j0}'$
\end{itemize}
Then the algorithm must terminate due to this decreasing property. The first correctness formula is trivial because we only return a solution $v$ if and only if $\psi[v_1/x_{c1}, \ldots, v_n/x_{cn}]$ is $\ltrue$. Since our search strategy is exhaustive and lower bound always take priority, the proof for the second can also be derived based on termination property.

\paragraph{Correctness of solving difference-bound constraint.} The Floyd--Warshall algorithm is a classic dynamic programming algorithm for solving the all-pairs shortest path problem. It progressively refines the estimates of shortest paths until it ultimately obtains the shortest paths between all pairs of vertices. We follow this idea and record the longest path to find the solution that satisfies the conjunction of all inequalities.

The first formula is equivalent to demonstrating the satisfiability of every path. We formalize it as the following formula:
\begin{align*}
&\forall i,j,x_k,\ldots,x_{k+l}.maxFW(m_0,l,V) = m, m(x_i,x_j) = c_{ij},\\&m_0(x_i,x_k) = c_1,m_0(x_k,x_{k+1})=c_2,\ldots,m_0(x_{k+l},x_j) = c_l \Rightarrow c_1+\cdots+ c_l\leqslant c_{ij}
\end{align*}

The formula states that if $m_0$ is the initial adjacent matrix and we update the matrix for variables in $V$ in the order of $l$. Then the weight for any path between a pair $(x_i,x_j)$ should less than the weight $m(x_i,x_j)$ between them recording in the resulted matrix. 

\begin{itemize}
    \item During initialization, the matrix $m_0$ directly uses edge weight as the initial longest path. If no edge exists between two vertices, the distance is set to negative infinity. 
    \item Let $m_{k-1}\xrightarrow[maxFW(V)]{x_k::l'}m$ be the recursive call where $x_k$ is the first vertex in the insert waiting list $x_k :: l'$ and $m_{k-1}$ is the current adjacent matrix after inserting the first $k-1$ vertices as transit nodes. We have $m_{k-1}(x_i,x_j) \geqslant m_{k-1}(x_i,x) + m_{k-1}(x,x_j)$, for every $x$ is among the first $k-1$ inserted vertices as induction hypothesis.
    \item The algorithm checks whether passing through $x_k$ yields a longer path so that $m_k(x_i,x_j) = max(m_{k-1}(x_i,x_j),m_{k-1}(x_i,x_k)+m_{k-1}(x_k,x_j))$. Since $m_k(x_i,x_j) \geqslant m_{k-1}(x_i,x_j)$ covers all paths transit by the first $k-1$ vertices and $m_k(x_i,x_j) \geqslant m_{k-1}(x_i,x_k)+m_{k-1}(x_k,x_j)$ covers all paths transit by $x_k$. Then we have $m_k(x_i,x_j)$ is larger than any path from $x_i$ to $x_j$ transit by the first $k$ vertices for each pair of $(x_i,x_j)$, because $m_k(x_i,x_j) \geqslant m_{k-1}(x_i,x_j) \land m_k(x_i,x_j) \geqslant m_{k-1}(x_i,x_k)+m_{k-1}(x_k,x_j)$. 
    \item By induction steps, we know that we can go from $x_i$ to $x_j$ through any path within the weight of $m(x_i,x_j)$.
\end{itemize}
Since the algorithm considers all possible intermediate vertices, any actual path will eventually be discovered. Specifically, if a path exists from vertex $v_i$ to $v_j$, the intermediate vertices along this path will be systematically incorporated, ensuring the final result reflects all exist paths.

Then we look at the second formula of $\mathit{inferSCC}$. We rewrite it as the following formula:

\begin{align*}
&\forall i,j,x_k,\ldots,x_{k+l},u. maxFW(m_0,l,V) = m, m(x_i,x_j) = c_{ij},m_0(x_i,x_k) = c_1,\\&m_0(x_k,x_{k+1})=c_2,\ldots,m_0(x_{k+l},x_j) = c_l, c_1+\cdots+ c_l\leqslant u \Rightarrow c_{ij} \leqslant u
\end{align*}

$m_0$ is the initial matrix recording the direct edges between two vertices. If no edge exists between two vertices, the distance is set to negative infinity. The formula states that if a given weight $u$ can cover any path between any pair $(x_i,x_j)$, then $u$ is not less than the weight recorded by the result $m(x_i,x_j)$.
\keyin{TBD}
\begin{itemize}
    \item Assume that after processing the first $k-1$ vertices in the list $l$, the adjacent matrix $m_{k-1}$ contains the least weight for any pair $(x_i,x_j)$ that can cover any path from $x_i$ to $x_j$ transits by the first $k-1$ vertices. We have $\forall u, u \geqslant m_0(x_i,x_r) + m_0(x_r,x_{r+1})+\cdots+m_0(x_{r+s},x_j), \Rightarrow u \geqslant m_{k-1}(x_i,x_j)$ as hypothesis. Here $x_r,\ldots,x_{r+s}$ are among the first $k-1$ inserted vertices.
    \item When processing the $k^{th}$ vertex $x_k$, let $m_{k-1}\xrightarrow[maxFW(V)]{x_k::l'}m$ be the recursive call where $x_k$ is the first vertex in the insert waiting list $x_k :: l'$. The algorithm record the larger weight between passing through $x_k$ and the current weight in $m_k$ for every $x_i,x_j$ as $m_k(x_i,x_j) = max(m_{k-1}(x_i,x_j),m_{k-1}(x_i,x_k)+m_{k-1}(x_k,x_j))$. Focus on a fixed pair $(x_i,x_j)$, for any $u$ that satisfy $u \geqslant m_0(x_i,x_r) + m_0(x_r,x_{r+1})+\cdots+m_0(x_{r+t},x_j), x_r,\ldots,x_{r+t}$ are among the $k$ inserted vertices, $u$ must satisfy $u \geqslant m_{k-1}(x_i,x_j)$(any path transits by the first $k-1$ vertices) $ \land u \geq m_{k-1}(x_i,x_k)+m_{k-1}(x_k,x_j)$(any path transits by $x_k$). Therefore, $u \geqslant max(m_{k-1}(x_i,x_j),m_{k-1}(x_i,x_k)+m_{k-1}(x_k,x_j)) = m_k(x_i,x_j)$.
    \item Since every update ensures the least weight for all possible intermediate vertices up to $x_k$, the final result (after iterating through all vertices) necessarily covers all possible paths, guaranteeing minimality.
\end{itemize}

The algorithm permits edges with positive weights but disallows positive-weight cycles. If a positive cycle exists, the longest path becomes unbounded (infinitely large), and the algorithm fails to produce valid result. However, in the absence of such cycles, the algorithm always correctly computes the least weight that cover all possible paths through iterative updates.

The algorithm terminates after \( O(n^3) \) operations due to its triple nested loop structure. Consequently, the algorithm guarantees finite termination and produces correct results.
}

%%%%%%%%%%%%%%%%%%%%%%%%%%%%%%%%%
%%%%%%%%%%%%%%%%%%%%%%%%%%%%%%%%%
\hide{
This section elaborates on key implementation details on our formally verified width inference procedure and discusses the proof strategies adopted to establish its correctness. We address several practical challenges encountered during the development procedure and present optimizations critical for scaling to industrial-scale designs.

\subsection{Implementation Challenges and Optimizations}  
Q1: Space Management for Large-Scale Designs.\\
Industrial processors (e.g. RocketChip, NutCore) generate constraint systems with 100k–1M variables. Naive constraint extraction methods caused segmentation faults due to redundant disjunctive constraints from {\sf{rem}} operations. Our analysis revealed that {\sf{rem}} operations occur sparsely (1 to 3 instances per design), motivating two key optimizations:  

\begin{itemize}
    \item [1] Lazy disjunction expansion. \\
    We redesigned the constraint extraction algorithm to represent disjunctive constraints through nested data structures. Instead of eagerly splitting constraints up at each min occurrence, we maintain a list of pending disjunctions. This defers branch expansion until all non-disjunctive constraints are processed, eliminating redundant intermediate states. 

    \item  [2] OCaml standard library enhancements. \\
   Non-tail-recursive List module functions (split, concat, flat map) in Ocaml results in stack overflow problems on Industrial scale examples. We replaced these with tail-recursive implementations using explicit accumulators and iterators, enabling safe execution on 1M-variable constraint systems.  
\end{itemize}

Q2: Efficient SCC Decomposition.\\
While the mathcomp tarjan library in Coq provides formally verified SCC decomposition for directed graph, its performance proved impractical for acyclic industrial design ({\firtool} also met performance problem on CIRCT Issue \# 6742). Therefore, We implemented another variant: The OCamlgraph-based implementation uses hash tables for adjacency lists and achieves linear scaling. Though lacking formal verification, its empirical reliability and considerable speedup justified its adoption for production use.

\subsection{Core Algorithm Implementation}
Given a dependency graph $G_\varphi$ with SCCs $C_1, \ldots, C_k$ in topological order:  
\begin{itemize}
\item [1] if $C_i$ is a single-variable SCC containing singleton component $x_i$, then $\max\{\eta(t_j)\mid v_i\geqslant t_j\in \varphi\}$ is assigned as $x_i$'s inferred value, where $\eta(t)$ denotes the minimum solution for terms under already resolved variables.  

\item [2] $C_i$ is a multi-variable strongly connected component including $\{x_1,\ldots,x_n\}$:  
\begin{itemize}
    \item [-] Substitute known values from prior SCC resolutions into $\varphi_{c_i}$ and deduce the constraints on $\{x_1,\ldots,x_n\}$ to only involves $\{x_1,\ldots,x_n\}$ as $\varphi_{c_i}'$.
   \item [-] Classify the deduced constraints into:  \\
\textbf{Simple linear}(all of the cosntraints form as$x \ge y + c$ or $x \ge k$). Solved via Floyd--Warshall variant for longest path, which terminates in $O(n^3)$ time as common sense dictates.\\
\textbf{Non-simple linear}(contain constraint form as $x \ge 2 \cdot x + c$ or $x \ge x + y + c$). Solved via upper-bound estimation and branch-and-bound.
\end{itemize}
\end{itemize}

\subsection{Formal Correctness Proofs}
To rigorously establish that our algorithm results strictly guarantee the correctness and solution minimality, we mechanized the following critical properties in Coq.

\begin{theorem}\label{theorem-1}(Constraint Necessity)
Every component with unspecified width appears on the left-hand side of at least 1 constraint.
\end{theorem}
By {\firrtl} semantics, unannotated widths must be connected to driven signals. Structural induction on module statements establishes constraint generation completeness.  

\begin{theorem}\label{theorem-2}(Minimal Solution Correctness for single variable)
For constraints $\{x\geqslant t_i|i\in[n]\}$, $max\{\eta(t_i)|i\in[n]\}$ is the least solution satisfying the constraints based on the known valuation $\eta$.
This is a trivial one, for every $v$ that satisfies $\{v\geqslant \eta(t_i)|i\in[n]\}$, it is obvious that $v\geqslant max\{\eta(t_i)|i\in[n]\}$.
So that our algorithm always obtains the least solution in the single-variable case.
\end{theorem}
This is for the single variable SCC case, so that $v_i = max\{\eta(t_j) | v_i \ge t_j \in \fwc \}$ is the least valuation satisfies all the constraints on $v_i$ in $\fwc$.

\begin{theorem}\label{theorem-3}(Minimal Solution Correctness for simple linear constraints). (Floyd--Warshall)
\end{theorem}
TBD

\begin{lemma}(Upper Bound Soundness)
If a $\fwc$ system admits a solution, every variable $x$ has a upper  bound $U_x$, where $U_x$ is computed by Appendix C’s algorithm.
\end{lemma}
Proof. Coq implementation of Proposition~\ref{prop-g-1} shows that coefficient growth in substitution sequences forces upper bounds. Contradiction arises from infinite ascent if any solution exceeds its corresponding upper bound.  

\begin{theorem}\label{theorem-4}(Minimal Solution Correctness for branch-and-bound algorithm).
\end{theorem}
TBD

\begin{theorem}\label{theorem-5}(Minimal Solution Correctness for $\fwc$-constraints)
Our algorithm always returns the global minimal solution if exists, or return no result.
\end{theorem}
Proof. By strong induction on topological SCC order:  
\begin{itemize}
    \item nil \textbf{Base case}: the current valuation $\eta$ is the least solution on all constraints on solved variables $x_i \in C_{solved}$ in the system.
    \item nil \textbf{Inductive step}: For the following SCC $C_i$, solutions for external variables remain fixed during $C_i$’s resolution.
    Discuss on the case of $C_i$(single variable/simple linear/non-simple linear dependency) and utilize Theorem~\ref{theorem-2}, Theorem~\ref{theorem-3}, Theorem~\ref{theorem-4} respectively to claim that the new valuation $\eta'$ computed by updating the value of $x_i \in C_{solved}$ in $\eta$ according to the solution of $C_i$ is the least solution on all constraints on $x_i \in C_{solved} \cup C_i$ in the system.
\end{itemize}

When induction is carried through the whole topological SCC order, the result we obtained is the minimal set of values that satisfies all constraints - the smallest possible solution among all feasible assignments that satisfy the constraints.}

%%%%%%%%%%%%%%%%%%%%%%%%%%%%%removed%%%%%%%%%%%%%%%%%%%%%%%
%%%%%%%%%%%%%%%%%%%%%%%%%%%%%removed%%%%%%%%%%%%%%%%%%%%%%%
\hide{
$\mathit{inferWidth}$ is the main function that receive the set of inequalities $\varphi$, the sorted list of variables sets $l_C$ and a set of solved variables $S$ as parameters to compute a function $\eta : S \rightarrow \natnum$ that record the inferred result for solved variables. $S$ and $\eta$ is updated every time when a set of strongly connected variables are solved by $\mathit{inferSCC}$. The process start with $S = nil$ and $\eta$ do nothing. Formally, $\mathit{inferWidth}$ is recursively defined as follows.

 \begin{gather*}
   \label{solvefun-nil}
   \infer[inferWidths : = nil]
         {\seqq{\mathit{inferWidths}(\psi, l_C,S,\eta) = (S,\eta)}}
         {\begin{array}{c}
         l_C = nil
           \end{array}
         }
 \end{gather*}

  \begin{gather*}
   \label{solvefun-not-nil}
   \infer[inferWidths : $\neq$ nil]
         {\seqq{\mathit{inferWidths}(\psi, l_C,S,\eta) = \mathit{inferWidths}(\psi,l_C',S\cup C,\eta')}}
         {\begin{array}{c}
         l_C = l_C' :: C \quad C = \{x_{i1}, x_{i2}, \ldots, x_{ik}\}\\
         \psi_C = cm(x_{i1}) \cup cm(x_{i2}) \cup \cdots \cup cm(x_{ik})\\
         \psi_C' = \psi_C[\eta(x_{s1})/x_{s1}, \ldots, \eta(x_{sn})/x_{sn}]\\
         \theta = \mathit{inferSCC}(\psi_C', C) \quad \eta' = \mathit{mergeSolution}(\eta,\theta)
           \end{array}
         }
 \end{gather*}

The rule ``\mbox{inferWidth : = nil}'' deals with the case $l_C = nil$, which means the iteration is finished; it simply returns the computed solution.

The rule ``\mbox{inferWidth : $\neq$ nil}'' deals with the sequence $l_C = l_C’ :: C$. It calls $\mathit{inferSCC}$ to compute the solution for every element in $C$, which is used to process the subsequent sets recursively. For each variable set $C$ to be solved, $C$ must contain one or more variables, say $\{x_{i1}, x_{i2}, \ldots, x_{ik}\}$. Because of the topological order, $\psi_C = cm(x_{i1}) \cup cm(x_{i2}) \cup \cdots \cup cm(x_{ik})$ should only involve variables in $S \cup C$. Replace each occurrence of $x_j \in S$ in the right-hand sides of the inequalities in $\psi_C$ and get $\psi_C’$. Then $\psi_C’$ should only contain variables in $C$.

The function $\mathit{inferSCC}$ is the core to find the minimum solution for to be solved variables. It operates on the preprocessed set of inequalities that only contains variables in $V$ and return an assignment on variables in $V$. The function $\mathit{mergeSolution}$ combine the assignment to variables from $\eta$ and $\theta$ to get $\eta'$ where $\eta'(x) = \eta(x)$ for $x \in S$ and $\eta'(x) = \theta(x)$ for $x \in V$. We distinguish different cases of $\mathit{inferSCC}$ according to the dependency situation.

\subsubsection{Formalization of solving single variable.}

  \begin{gather*}
   \label{inferSCC-single}
   \infer[inferSCC : single variable]
         {\seqq{\mathit{inferSCC}(\psi, V) = \theta}}
         {\begin{array}{c}
         V = \{x\}\\
         b \le0 \ for \ x \ge a \cdot x + b \in \psi\\
         \theta(x) = max\{c,0\} \ for\ x \ge c\in \psi
           \end{array}
         }
 \end{gather*}
 
The rule ``\mbox{inferSCC : single variable}'' deals with the case that $C$ only contain one element, say $x$. After the substitution preprocessing, $\psi$ should only involve $x$. Then $\psi$ is either of the form $\psi_1 = x \ge c$ for some $c \in \mathbb{Z}$ or $\psi_2 = x \ge a \cdot x + b$ for some $a \in \natnum$, $b \in \mathbb{Z}$. If no constant $b$ in an inequality of form $\psi_2$ is greater than 0, then the conjunction of $\psi_2$ is always true, otherwise there is no solution. If $C$ is acyclic, there are no inequalities in $\psi_2$ and the precondition is always true. In the satisfiable case, the maximum of those $c$ in $\psi_1$ and 0 is the least non-negative solution for $x$.

\subsubsection{Formalization of solving multi-variable SCC.}

For the case of multiple variables, we distinguish it by the form of inequalities in $\psi$. If all of the inequalities are of form $x \ge y + c$ or $ x \ge c$ where $x,y \in C, c \in Z$, we say it is a difference-bound case, otherwise we go to general case.

\begin{gather*}
   \label{inferSCC-bab}
   \infer[inferSCC : general\ case]
         {\seqq{\mathit{inferSCC}(\psi, V) = \theta}}
         {\begin{array}{c}
         ub = \mathit{findUb}(\psi,V) \quad lb = \mathit{findLb}(\psi,V)\\
         \theta = BAB(lb,ub,\psi,V)
           \end{array}
         }
 \end{gather*}

The rule ``\mbox{inferSCC : general case}'' first calls function $\mathit{findUb}$ to obtain the upper bound of corresponding variable. If the upper bound is not found then no result is returned. The lower bound of any $x_i \in C$ calculated by $\mathit{findLb}$ is the maximum of $0$ and the constants $a_0$ in the inequalities $x_i \ge a_0 + \sum \limits_{j \in [k]} a_j x_j$ of $\psi$. The definition of function $BAB$ is shown next.

\begin{gather*}
   \label{bab-eq}
   \infer[BAB : eq]
         {\seqq{BAB(lb,ub,\psi,V) = lb}}
         {\begin{array}{c}
         lb = ub \quad V=\{x_{c1}, \ldots, x_{cn}\}\\
         \psi[lb(x_{c1})/x_{c1}, \ldots, lb(x_{cn})/x_{cn}] \ is\ true
           \end{array}
         }
 \end{gather*}

The rule ``\mbox{BAB : eq}'' deals with the case that searching is finished and the solution would be returned if the inequalities are satisfied.

\begin{gather*}
   \label{bab-neq-sat}
   \infer[BAB : neq, true]
         {\seqq{BAB(lb,ub,\psi,V) = BAB(lb,v,\psi,V)}}
         {\begin{array}{c}
         V=\{x_{c1}, \ldots, x_{cn}\} \\
         lb = \{lb_1, \ldots, lb_n\} \quad ub = \{ub_1, \ldots, ub_n\}\\
         v = (\floor{(ub_1+lb_1)/2}, \ldots, \floor{(ub_n+lb_n)/2})\\
         \psi[v_1/x_{c1}, \ldots, v_n/x_{cn}] \ is\ true
           \end{array}
         }
 \end{gather*}

The rule ``\mbox{BAB : neq, true}'' deals with the case that some solution is found but the searching is not finished. Then the upper bounds would be pruned and called $bab$ function recursively until the searching space is completely traversed.

\begin{gather*}
   \label{bab-neq-sat}
   \infer[BAB : neq, false-1]
         {\seqq{BAB(lb,ub,\psi,V) = BAB(lb,ub',\psi,V)}}
         {\begin{array}{c}
         V=\{x_{c1}, \ldots, x_{cn}\} \\
         lb = \{lb_1, \ldots, lb_n\} \quad ub = \{ub_1, \ldots, ub_n\}\\
         v = (\floor{(ub_1+lb_1)/2}, \ldots, \floor{(ub_n+lb_n)/2})\\
         \psi[v_1/x_{c1}, \ldots, v_n/x_{cn}] \ is\ false \quad lb_{j0}<v_{j0}\\
         ub' = (ub_1, \ldots, ub_{j_0-1}, v_{j_0} -1, ub_{j_0+1}, \ldots, ub_n)
           \end{array}
         }
 \end{gather*}

 \begin{gather*}
   \label{bab-neq-sat}
   \infer[BAB : neq, false-2]
         {\seqq{BAB(lb,ub,\psi,V) = BAB(lb',ub,\psi,V)}}
         {\begin{array}{c}
         V=\{x_{c1}, \ldots, x_{cn}\} \\
         lb = \{lb_1, \ldots, lb_n\} \quad ub = \{ub_1, \ldots, ub_n\}\\
         v = (\floor{(ub_1+lb_1)/2}, \ldots, \floor{(ub_n+lb_n)/2})\\
         \psi[v_1/x_{c1}, \ldots, v_n/x_{cn}] \ is\ false\\
         lb' = (lb_1, \ldots, lb_{j_0-1}, v _{j_0}, lb_{j_0+1}, \ldots, lb_n)
           \end{array}
         }
 \end{gather*}

The rule ``\mbox{BAB : neq, false}'' deals with the case that finding a wrong solution and continuing searching. The bound is edited heuristically according to the form of the violated constraint and the possible assignments bounded by lower and upper bounds. If $\psi$ is infeasible in case1, then we go to case2.

  \begin{gather*}
   \label{inferSCC-floyd}
   \infer[inferSCC : difference-bound]
         {\seqq{\mathit{inferSCC}(\psi, V) = \theta}}
         {\begin{array}{c}
         m_0 = \mathit{initAdj}(\psi, V)\\
         m_1 = \mathit{maxFloydWarshall}(m_0, V)\\
         \theta(x_i) = max\{m_1(x_i, x_j)\} \ for\ x_j \in V\cup v_{zero}
           \end{array}
         }
 \end{gather*}
 
The rule ``\mbox{inferSCC : difference-bound}'' is basically a variant of the Floyd--Warshall algorithm, following the idea of updating the adjacency matrix while processing each vertex. To similarly deal with the inequalities of form $x \ge c$, we construct a zero vertex $v_{zero}$ and claim that an inequality $x \ge c$ indicate a path from $v_{zero}$ to $v_x$ with weight $c$ (a weight can be negative). The $\mathit{initAdj}$ function initialize the adjacent matrix for every vertex $v_i \in V$ according to the inequalities -- for a inequality of form $x \ge y + k$ result in $m_0(y,x) = k$. $m_0(x,x) = 0$ for all $x \in V$. For weights that is not initialized, the value in the initial matrix is negative infinity. Then the function $maxFW$(abbreviation for $\mathit{maxFloydWarshall}$) is called to updated the matrix for computing the longest path between any two vertices. It is defined recursively as the follows.

\begin{gather*}
   \label{addvertex-nil}
   \infer[maxFW : nil]
         {\seqq{maxFW(m,l,V) = m}}
         {\begin{array}{c}
         l = nil\\
           \end{array}
         }
 \end{gather*}

 \begin{gather*}
   \label{addvertex-not-nil}
   \infer[maxFW : $\neq$ nil]
         {\seqq{maxFW(m,l,V) = maxFW(m',l',V)}}
         {\begin{array}{c}
         l = x :: l'\\
         m'(x_i,x_j) = \max\{m(x_i, x_j), m(x_i, x) + m(x, x_j)\} \\ for\ x_i \in V,x_j \in V \cup v_{zero}
           \end{array}
         }
 \end{gather*}

The rule ``\mbox{maxFW : nil}'' means all vertices are considered and simply return the result.\\
The rule ``\mbox{maxFW : $\neq$ nil}'' tries to find for every pair $(i, j)$ whether go to $v_j$ from $v_i$ by an unvisited vertex $v$ is a longer path than the current weight and process the subsequence recursively.

With the result of adjacency matrix recording the longest path between every two vertices, check that the weight from any $v_i$ to $v_i$ should still be 0, otherwise there is a positive cycle in the graph, indicating there is no finite solution for $x_i$. The solution for $x_i$ is the maximum weight from $v_i$ to any other vertex including $v_{zero}$.
}
%%%%%%%%%%%%%%%%%%%%%%%%%%%%%removed%%%%%%%%%%%%%%%%%%%%%%%
%%%%%%%%%%%%%%%%%%%%%%%%%%%%%removed%%%%%%%%%%%%%%%%%%%%%%%

%%%%%%%%%%%%%%%%%%%%%%%%%%%%%%%%%%%%%%%%%%%
 
\section{OCaml implementation and evaluation}\label{sec-impl-eval}
%!TEX root = main.tex

\subsection{Implementation} 
From the Rocq implementation of the inductive function $\inferwidth$ in Section~\ref{sec-coq-proof}, we utilized the extraction mechanism available in the Rocq proof assistant~\cite{Letouzey02,LET08} to extract OCaml code. 
Moreover, as the $\inferwidth$ function takes a $\Phi_W$-constraint as input, we implemented a parser to translate a {\firrtl} program into its abstract syntax tree (AST), from which
the $\Phi_W$-constraint is extracted
by a constraint extractor.
%which extracts from an AST $\Phi_W$-constraints. 
The parser is implemented by {\tt ocamlyacc}~\cite{ocamlyacc}, where a context-free grammar of the {\firrtl} language (with attached semantic actions) is specified, while 
the constraint extractor is extracted from the Rocq implementation. 
The OCaml implementation of the $\inferwidth$ function, together with the parser and constraint extractor, constitutes an OCaml implementation of the InferWidths pass of  {\firrtl} (see Fig.~\ref{fig-tool-arch}), which we call BFWInferWidths (``BFW'' is an abbreviation of ``Branch-and-bound + Floyd--Warshall''). 

%--------------------
\begin{figure}[!h]
\vspace{-4mm}
\centering
\includegraphics[scale = 0.82]{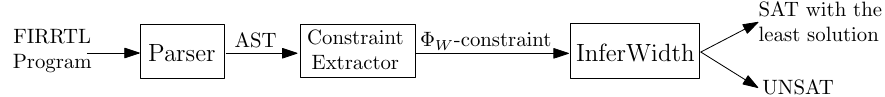} 
\caption{The architecture of BFWInferWidths}\label{fig-tool-arch}\vspace{-4mm}
\end{figure}

When running BFWInferWidths on the width inference instance generated from 
the RISC-V BOOM processor, the non-tail-recursive implementations of the functions {\sf split}, {\sf concat} and {\sf flatmap} in the OCaml standard library {\sf Base} caused
segmentation faults and stack overflows.
 %, due to a long list, 
The main reason is that they do not scale to large real-world instances, which may have more than 200,000 variables and 
%where the longest list has 
more than 130,000 inequalities. 
We replaced these non-tail-recursive implementations  
with tail-recursive versions. %using explicit accumulators and iterators, 
\subsection{Evaluation}\label{sec-evaluation}
 
\subsubsection*{Benchmarks.} 
To evaluate the performance of BFWInferWidths, we collected 75~{\firrtl} programs that contain  components with unspecified widths:
\begin{itemize}
\item 4 {\firrtl} programs in the specification of {\firrtl}~\cite{firrtl-spec},  
\item 20 {\firrtl} programs generated from Chisel programs in the Chisel book~\cite{chisel:book},  
\item 11 {\firrtl} programs used in the unit tests for {\firtool}~\cite{firtool},
%(the official {\firrtl} compiler), 
%
\item 26 {\firrtl} programs reported as issues for {\firtool}~\cite{firtool-issues}, 
\item 11 (manually generated) {\firrtl} programs that contain circular dependencies between the  widths-unspecified components,
\item 3 {\firrtl} programs for the production-grade Chisel-based RISC-V processor designs, NutShell~\cite{NutShell}, Rocket Chip~\cite{RocketChip} and RISC-V BOOM~\cite{Boom}.
\end{itemize}
The 75 {\firrtl} programs are partitioned  into two benchmark suites, i.e., REALWORLD (the 3 {\firrtl} programs of NutShell, Rocket Chip and RISC-V BOOM) and  MANUAL (all the other {\firrtl} programs.)

\revise{We found that \prettyll{dshl} occurs in three benchmarks, namely, NutShell (7 times),  Rocket Chip (3 times) and RISC-V BOOM (3955 times).
All occurrences belong to one of the following two cases.
\begin{itemize}
    \item The shift is actually not dynamic, but by a constant number of bits: For instance, \prettyll{y=dshl(x,3)}, whose  width constraint is $w_\mathtt{y} \geq w_\mathtt{x} + 2^3-1$, i.e., $w_\mathtt{y} \geq w_\mathtt{x} + 7$.
    \item There is no circular dependency: For instance, \prettyll{y=dshl(x, v)} with width constraint $w_\mathtt{y} \geq w_\mathtt{x} + 2^{w_\mathtt{v}}-1$, but the width of \prettyll{v} does not depend on the width of \prettyll{y}.
    Then $w_{\tt v}$ can be inferred before inferring $w_\mathtt{y}$ (in separate calls to the translation of $\inferscc$).
    Suppose $w_{\tt v}=c2$, then $w_\mathtt{y} \geq w_\mathtt{x} + 2^{w_{\tt v}}-1$ reduces to $w_\mathtt{y} \geq w_\mathtt{x} + 2^{c2}-1$, which is a linear constraint.
\end{itemize}
Thus, all \prettyll{dshl}'s occurring in the benchmarks can be handled by our procedure. However,  
we leave supporting general dynamic shift left \prettyll{dshl} as future work.}

%To demonstrate the completeness of our width inference algorithm, we randomly generated $\fwc$ constraints satisfies the conditions in Proposition 4, whose dependency case does not appear in any previous benchmarks. We collect 11 satisfiable tests in this cases.

\subsubsection*{Experimental setup.}
We compare the correctness and efficiency of BFWInferWidths with the implementation of the InferWidths pass in {\firtool}.
\revise{We also compare with the industrial ILP solver Gurobi~\cite{Gurobi}, that is a branch-and-bound based optimizer and represents the state-of-the-art according to~\cite{2024ParaILP}. Gurobi provides the $\min$ function, allowing us to encode the FIRWINE constraints and solve them %to Gurobi 
using a single objective function that sums up all the width variables. Recall that if the problem has a solution, then it must have a unique least solution (cf. Proposition~\ref{prop-min-sol}). Thus,
%we can simply use such an objective function instead of using 
a single objective function is sufficient instead of  multi-objective optimization (i.e., one objective function per width variable).}

\revise{We remark that the $\min$ function can be eliminated by using Big-M (a sufficiently large integer number) for ILP optimizers that do not support the $\min$ function. For instance, $a \geq \min(y,z)$ can be expressed by
$a \geq x \wedge y \geq x \wedge z \geq x \wedge x \geq y - M * b \wedge x \geq - M + z + M * b$,
where $x$ is an auxiliary non-negative variable and $b$ is a Boolean variable.
\hide{
Intuitively,
\begin{itemize}
    \item if $b=1$, then $a \geq x \wedge y \geq x \wedge z \geq x \wedge x \geq y - M \wedge x \geq z$, implying that $a \geq x = z = \min(y,z)$;
    \item if $b=0$, then  $a \geq x \wedge y \geq x \wedge z \geq x \wedge x \geq y \wedge x \geq -M + z$, implying that $a \geq x = y = \min(y,z)$.
\end{itemize}}
}

Note that although the correctness of our InferWidths was formally proven in Section~\ref{sec-coq-proof}, BFWInferWidths contains unverified components (the parser and constraint extractor), as well as the unverified OCaml standard library
{\sf Base}. Hence an empirical evaluation is useful to confirm the correctness of BFWInferWidths as a whole. 

All experiments were run on a 3.2 GHz 8-core Apple M1 laptop with 8 GB RAM. For the comparison, we use {\firtool} v1.73 and Gurobi v12.0.1 with 8-thread parallel optimization.
(Note that while Gurobi uses 8~threads,  BFWInferWidths and {\firtool} use only a single thread.)

%we use Coq 8.16 with Mathematical Components 2.2 library, 

%For the correctness, we compare the results of BFWInferWidths with those of firtool and Gurobi on the two benchmark suites. 
%The reason why we empirically evaluate the correctness of  BFWInferWidth is explained in the sequel. 
% Note that although we have formally proven the correctness of the width inference procedure in Section~\ref{sec-coq-proof}, BFWInferWidths contains the unverified components (``Parser'' and ``Constraint Extractor''). As a result, an empirical evaluation of the correctness is useful to confirm the correctness of BFWInferWidths as a whole. 

\subsubsection*{Experimental results.} 
The results for evaluating correctness are reported in %can be found in
Table~\ref{tab:correctness-testing}.
We observe that the results of BFWInferWidths coincide to those of {\firtool} and Gurobi, except that {\firtool} failed on 12 instances in the MANUAL benchmark suite.  
%The non-equivalence of BFWInferWidth and firtool on the 12 instances is explained in the sequel. 
% Let us have a closer look at the reasons behind the non-equivalence on the 12 programs. 
%\begin{itemize}
Specifically, {\firtool} did not terminate in 24 hours on one {\firrtl} program (yielding 192 inequalities) which does not contain any circular dependencies, while both BFWInferWidths and Gurobi solved it in less than 10 ms. Indeed, it is due to
an unresolved issue in {\firtool}~\cite{firtoolIssue}. 
%While it does not contain circular dependencies, firtool %meets a performance issue when detecting the existence of 
%was not efficient to determine this and thus %circular dependencies and cannot go through 
%it fails the InferWidths pass. 
%
{\firtool} misthrew exceptions on the other 11 {\firrtl} programs due to circular dependencies.

%%%%%%%%%%%%%%%%%%%%%%%%%%%%%%%%%%%%%%%%%%%%%%%%%%
\begin{table}[t]
%\vspace{-7pt}
\caption{Correctness of BFWInferWidths: compared with {\firtool} and Gurobi.}
\label{tab:correctness-testing}
\setlength{\tabcolsep}{8pt}
\centering
\begin{tabular}{|@{\;}c@{\;}|@{\;}c@{\;}|c|c|c|}
  \hline
  \multirow{2}{*}{\textbf{Benchmark suite}} & 
  \multirow{2}{*}{\textbf{\# programs}} & 
  \multicolumn{3}{c|}{\textbf{Number of solved instances}} \\ 
  \cline{3-5}
  & & {\firtool} & \textbf{Gurobi} & \textbf{BFWInferWidths} \\ 
  \hline
  MANUAL & 72 & 60 & 72 & 72 \\ 
  \hline
  REALWORLD & \phantom{0}3 & \phantom{0}3 & \phantom{0}3 & \phantom{0}3\\ 
  \hline
%  Chisel Ecosystem Integration & 295.04 & $\checkmark$ & $\checkmark$\\ 
%  \hline
%  Real-world & 72547.33 & $\checkmark$ & $\checkmark$\\ 
%  \hline
%  Historical Issues & 8.58 & $\times(1)$ & $\checkmark$\\ 
%  \hline
%  designed instances & 4.17 & $\times(11)$ & $\checkmark$\\ 
%  \hline
\end{tabular} 
\vspace{-2mm}
\end{table}    
%%%%%%%%%%%%%%%%%%%%%%%%%%%%%%%%%%%%%%%%%%%%%%%%%%

The results for evaluating efficiency are reported in Table~\ref{tab:efficiency-conformancetesting},
where Column Avg. \#cpnts shows the average number of components declared with unspecified widths. We observe that on the MANUAL benchmark suite (where the number of components declared with unspecified widths %contained in each instance 
is small), %(more precisely, 119), 
BFWInferWidths is at least 7 (resp. 10) times faster than {\firtool} (resp. Gurobi) on average. 
On the REALWORLD benchmark suite, 
BFWInferWidths is more efficient than {\firtool}, and is largely comparable with 
Gurobi (BFWInferWidths is faster on NutShell and Rocket Chip, but is slightly slower on RISC-V BOOM). 
Detailed results are reported in \arxiv{Appendix~\ref{app:exp}}\crv{\cite{ESOP26-full}}.

\begin{table}[t]
\setlength{\tabcolsep}{3pt}
\caption{Efficiency of BFWInferWidths compared with {\firtool} and Gurobi. (The time of {\firtool} is calculated by using the 60 instances that can be solved by it.)}
\label{tab:efficiency-conformancetesting}
\centering
\begin{tabular}{|cc|c|c|c|c|}
\hline
\multicolumn{2}{|c|}{\multirow{2}{*}{\textbf{Benchmark}}} & 
  \textbf{Avg.} & 
  \multicolumn{3}{c|}{\textbf{Time (ms) per instance}}\\ 
   \cline{4-6} 
  &  & \textbf{\#cpnts} & {\firtool} & \textbf{Gurobi} & \textbf{BFWInferWidths}\\ 
  \hline
  \multicolumn{2}{|c|}{MANUAL (72)} & \phantom{000,}119 & \phantom{0,00}7.49 & \phantom{0,0}12.07 & \fontseries{b}\selectfont\phantom{0,00}1.00 \\ 
  \hline
  \multicolumn{1}{|c|}{\multirow{3}{*} {REALWORLD}} &  NutShell & \phantom{00}7,152 & \phantom{0,}190.70 & \phantom{0,}194.55 & \fontseries{b}\selectfont\phantom{0,}158.31 \\ 
  \cline{2-6}
  \multicolumn{1}{|c|}{} & Rocket Chip & \phantom{00}4,882 & \phantom{0,}127.90 & \phantom{0,}120.64 & \fontseries{b}\selectfont\phantom{0,0}22.24 \\ 
  \cline{2-6}
   \multicolumn{1}{|c|}{}  & RISC-V BOOM & 205,608 & 8,338.30 & \fontseries{b}\selectfont 3,326.94 & 3,467.80 \\ 
  \hline
%  \textbf{Average} & \textbf{2976.43} & \textbf{144.54} & \textbf{59.41} & \textbf{48.96} \\ 
%  \hline
\end{tabular} 
\vspace{-2mm}
\end{table}

\revise{We also have performed a preliminary comparison with Z3’s optimizer~\cite{Programming-Z3}, by encoding the width inference problem as the minimization problem. On the BOOM benchmark, our approach is approximately $26 \times $ faster than Z3. Crucially, like Gurobi, Z3 does not provide a formally verified solver, which is the key distinction we should emphasize.}

To summarize, as an OCaml program that is extracted from the formally verified Coq implementation, BFWInferWidths demonstrates superior performance than {\firtool},
%its performance is comparable to that of {\firtool},
the state-of-the-art compiler for {\firrtl}.
It is comparable to Gurobi, the industrial ILP solver, executed with 8~threads.

%Efficiency. The summary of the width inference conformance testing is reported in Table~\ref{tab:efficiency-conformancetesting}.

% The first column shows the name of benchmark/benchmark suit, where example contains less than 1000 implicit ground types are considered as small-scale and example contains more than 2000 but less then 5000 implicit ground types are considered as medium-scale. 
% \hide{Specifically, this evaluation suite comprises 57 small-scale benchmarks, 2 mudium-scale benchmarks and 4 industrial-level benchmarks - 63 out of the total 75. The remaining 12 benchmarks exhibiting width inference failures of {\firtool} are excluded from the final comparative analysis to ensure a fair evaluation of efficiency.}

%%%%%%%%%%%%%%%%%%%%%%%%%%%%%%%%%%%%%%%%%%
%%%%%%%%%%%%%%%%%%removed%%%%%%%%%%%%%%%%%
%%%%%%%%%%%%%%%%%%%%%%%%%%%%%%%%%%%%%%%%%%
\hide{
Column (Component Amount) shows the (average) number of implicit ground types for each benchmark (suit).
Column ({\firtool}) shows the (average) execution time of {\firtool} in mili-second (ms) for each benchmark (suit).
Column (Gurobi) shows the (average) execution time of Gurobi in mili-second (ms) for each benchmark (suit).
Column (BFWInferWidth) shows the (average) execution time of our OCaml {\firrtl} implementation (in ms) for each benchmark (suit).

The last row (Average) shows the average results overall 75 benchmarks.
Detailed results are given in Supplementary Material.
 	 	 	
In terms of efficiency, our OCaml {\firrtl} width inference procedure is generally faster than {\firtool} and Gurobi.
On small-scale benchmarks, it achieves 10–100× and 100–1000× speedup over {\firtool} and Gurobi respectively, leveraging topological ordering optimizations for rapid constraint resolution. In industrial-scale designs with more than 3 thousand unspecified width ground types, our method still achieve considerable speedup compared to {\firtool}. Although it has runtime increasement compared to Gurobi, the difference remain insignificant refers to its outstanding performance on other examples.

This performance profile positions our solution as both a precision instrument for critical verification and a practical alternative for production, achieving an optimal balance between verified completeness and operational efficiency.
The demonstrated capabilities suggest our methodology successfully addresses the fundamental trade-off between computational completeness and runtime efficiency in width inference problem.

%Moreover, since we would like to compare BFWInferWidth with the industrial-level Gurobi tool for solving optimization problems~\cite{Gurobi} 

%To systematically evaluate our contributions, we establish the following research questions addressing both functional correctness and computational efficiency:

%%%%%%%%%%%%%%%%%%%%%%%%%%%%%%%%%%%%%%
\hide{
To rigorously validate the conformance of our OCaml {\firrtl} width inference algorithm, we established a comprehensive benchmark suite comprising six distinct categories of test programs. The evaluation framework incorporates:

\textbf{Compiler Regression Testing.} From {\firtool}\footnote{\url{https://github.com/llvm/circt/tree/main/test/firtool}.} original 121 test cases, we systematically excluded non-self-contained modules, port-deficient designs, illegal connections, and unsupported probe/layer constructs. This yielded 11 valid test cases containing implicit width specifications.

\textbf{Specification Compliance.} There are 130 {\firrtl} examples in firrtl-spec\footnote{\url{https://github.com/chipsalliance/firrtl-spec}.}, most of which serve as illustrative examples of the {\firrtl} language specification. Among these, we collect four examples that are complete {\firrtl} programs and include implicit width.

\textbf{Chisel Ecosystem Integration.} There are 51 Chisel examples in Scala inside the repository of the Chisel Book\footnote{\url{https://github.com/schoeberl/chisel-book/tree/master/src/main/scala}.}.
These 51 Chisel examples are transformed into 62 {\firrtl} programs by applying \textsf{ChiselStage} to each subclass of the class {\tt Module} in Scala. (Note that some Chisel examples contain multiple such subclasses, thus yield multiple  {\firrtl} programs.) Among these {\firrtl} programs, we collect 20 of them that involve implicit widths.

\textbf{Real-world Industrial Designs.} We evaluate three production-grade Chisel-based RISC-V processor implementations - NutShell\footnote{\url{https://github.com/OSCPU/NutShell}.} (generates NutCore), Rocket Chip\footnote{\url{https://github.com/chipsalliance/rocket-chip}.} (generates RocketCore) and RISC-V BOOM\footnote{\url{https://github.com/riscv-boom/riscv-boom}.}(generate BoomCore).

\textbf{Historical Issues.} Besides the above  {\firrtl} programs, we also collect 24 issues reported as bugs of the width inference pass in the GitHub repository
of CIRCT. While 23 issues have been fixed in the latest {\firtool}, one issue related to InferWidths unbreakable loop checking performance issue remains unresolved.
We develop 26 {\firrtl} programs to reproduce these issues.
}
%%%%%%%%%%%%%%%%%%%%%%%%%%%%%%%%%%%%%%

To demonstrate the completeness of our width inference algorithm, we randomly generated $\fwc$-constraints satisfies the conditions in Proposition 4, whose dependency case does not appear in any previous benchmarks. We collect 11 satisfiable tests in this cases.

These 75 test programs (11+20+3+4+26+11) provides methodological diversity spanning compiler validation, language specification compliance, real-world design patterns, and formal mathematical verification, ensuring comprehensive evaluation of width inference correctness. To make the efficiency comparison clear, we categorize all the examples into Small-scale Examples (less than 2000 components), Medium-scale examples (2000 - 5000 components), and Real-world CPU Examples. Note that although RocketCore also conforms to the standard of the Medium-scale examples, since it is a complete processor instance, it is not placed in this category.

%%%%%%%%%%%%%%%%%%%%%%%%%%%
%repeating?!
%%%%%%%%%%%%%%%%%%%%%%%%%%%
\hide{\subsubsection*{Research questions.}
To systematically evaluate our contributions, we establish the following research questions addressing both functional correctness and computational efficiency:

\begin{itemize}
\item [] RQ1 (Functional Correctness). Does our width inference implementation:
\begin{itemize}
    \item [] a) Generate type-equivalent circuit components relative to {\firtool}'s reference implementation when processing identical {\firrtl} programs?
    \item [] b) Maintain solution equivalence with Gurobi resolution for identical $\fwc$-constraint?
    \item [] c) Achieve specification-compliant width inference coverage across all valid {\firrtl} programs?
\end{itemize}
\item [] RQ2 (Computational Efficiency). What is the comparative performance between {\firtool}, Gurobi and our constraint-solving algorithm when processing:
\begin{itemize}
    \item [] a) {\firrtl} programs of increasing architectural complexity?
    \item [] b) $\fwc$-constraint at varying scales of variables amount?
\end{itemize}
\end{itemize}
    
This dual-aspect evaluation framework rigorously examines both functional equivalence and practical scalability considerations, providing comprehensive insights into our approach's technical merits.

\subsubsection*{Experiment setup.}
We evaluate functional equivalence by processing identical {\firrtl} programs through both our tool and {\firtool} and performing type equivalence checking on all circuit components. Since {\firrtl}’s hierarchical aggregate types (vectors/bundles) remain unflattened during width inference, we use variable identifier to track its offset position in a circuit component. When the solving is finished, the inferred ground type widths (\prettyll{UInt/SInt<n>}) are updated precisely within nested structures according to its offset. 

\begin{itemize}
    \item [] \textbf{Functional Equivalence}: For each {\firrtl} circuit, functional equivalence is claimed only when all corresponding components satisfy these criteria: ground types have identical bit width and same types (e.g., \prettyll{UInt<4>} $\equiv$ \prettyll{UInt<4>}), vectors share equal length and equivalent element types and bundles exhibit field-wise type equivalence with hierarchical ordering. 

    \item [] \textbf{Runtime Efficiency}: We measure the end-to-end execution time of our algorithm, which includes:
\begin{itemize}
    \item [-] Constraint extraction: Generating width inequality constraints from the {\firrtl} circuit.
    \item [-] Dependency graph construction: Modeling variable dependencies as a graph.
    \item [-] SCC decomposition: Applying tarjan’s algorithm to partition the graph into  list of strongly connected components (SCCs) for sequential solving.
    \item [-] Sequential solving: Resolving SCCs in topological order.
\end{itemize}
\end{itemize}

The total time (sum of the above phases) is compared against {\firtool}’s InferWidths pass runtime under identical hardware/OS conditions.

As for Gurobi, we validate numerical equivalence of solution of identical $\fwc$-constraint and runtime efficiency of solving process:

\begin{itemize}
    \item [] \textbf{Solution Consistency}: A OCaml-based emitter extracts $\fwc$-constraints from {\firrtl} programs. These constraints are fed identically to both our solver and Gurobi v12.0.1. Solutions are compared at variable level, requiring exact integer equality for all width assignments.

    \item [] \textbf{Runtime Efficiency}: Our solver’s total runtime includes all phases from constraint extraction to SCC-based solving.
Gurobi’s runtime is measured for its optimization engine. Note that Gurobi’s multi-threaded execution uses 8~threads (default), while our solver is single-threaded to produce deterministic results.
\end{itemize}

This rigorous evaluation protocol ensures both solution accuracy and performance characteristics are systematically quantified through reproducible experimental controls.

\paragraph{Results.}

Experimental results demonstrate that our Coq-verified implementation of the complete InferWidths algorithm achieves broader coverage of width inference problems than {\firtool} while maintaining comparable efficiency against {\firtool} and Gurobi.

\textbf{Correctness.}
Overall, the proposed methodology successfully resolves all 75 benchmark cases, outperforming {\firtool} which exhibits resolution failures in 12 instances. The summary of the width inference equivalence testing is reported in Table~\ref{tab:correctness-testing}.

The first column shows the name of benchmark suit classified by the categories. 
Column (\# Component) shows the average number of implicit ground types for each benchmark suit.
Column (versus {\firtool}) and column (versus Gurobi) shows the equivalence checking result for each benchmark suit, where $\checkmark$ indicates that the widths assignment of our method is equivalent to {\firtool}\textbackslash Gurobi and $\times$(n) indicates that {\firtool} fails to infer $n$ problems in the benchmark suit.

These failures categorize into two distinct cases:
\begin{itemize}
    \item [-] \textbf{Algorithmic Completeness Gap}: 11 cases satisfying Proposition~\ref{prop-g-1} conditions have {\firtool} unhandled circular dependency.
    \item [-] \textbf{Efficiency Limitations}: 1 unresolved CIRCT issue involving spurious loop detection failures. After an in-depth investigation, we found that {\firtool} InferWidths pass met efficiency problem detecting loops in the program. 
(In fact, the program does not have a loop and its bit widths should be successfully inferred.)
\end{itemize}

\begin{table}[t]
\setlength{\tabcolsep}{3pt}
\caption{Performance Comparison Results : Correctness}
\label{tab:correctness-testing}
\scalebox{1}{
\begin{tabular}{|c|c|c|c|}
  \hline
  \multirow{2}{*}{\textbf{Benchmark}} & 
  \multirow{2}{*}{\textbf{\# Components}} & 
  \multicolumn{2}{c|}{\textbf{equivalence}} \\ 
  \cline{3-4}
  & & \textbf{versus {\firtool}} & \textbf{versus Gurobi} \\ 
  \hline \hline
  Compiler Regression Testing & 7 & $\checkmark$ & $\checkmark$\\ 
  \hline
  Specification Compliance & 3 & $\checkmark$ & $\checkmark$\\ 
  \hline
  Chisel Ecosystem Integration & 295.04 & $\checkmark$ & $\checkmark$\\ 
  \hline
  Real-world Industrial Designes & 72547.33 & $\checkmark$ & $\checkmark$\\ 
  \hline
  Historical Issues & 8.58 & $\times(1)$ & $\checkmark$\\ 
  \hline
  designed instances & 4.17 & $\times(11)$ & $\checkmark$\\ 
  \hline
\end{tabular}}
\end{table}    

For all 63 remaining benchmarks, our width inference results demonstrated exact equivalence to those generated by {\firtool}’s InferWidths pass, as validated through recursive type checking of corresponding circuit components (ground types, vectors, and bundles).
For these 12 benchmarks {\firtool} failed to produce valid inferences, the correctness of our method’s computational results is confirmed against Gurobi’s solutions. In other words,
The solver exhibits solution congruence with Gurobi across the entire 74 $\fwc$-constraints extracted from benchmark suite, with every variable’s width assignment matching precisely between them. This pairwise consistency demonstrates our algorithm’s ability to resolve constraint systems exceeds {\firtool}’s current capability.

Efficiency. The summary of the width inference conformance testing is reported in Table~\ref{tab:efficiency-conformancetesting}.

The first column shows the name of benchmark/benchmark suit, where example contains less than 1000 implicit ground types are considered as small-scale and example contains more than 2000 but less then 5000 implicit ground types are considered as medium-scale. 
\hide{Specifically, this evaluation suite comprises 57 small-scale benchmarks, 2 mudium-scale benchmarks and 4 industrial-level benchmarks - 63 out of the total 75. The remaining 12 benchmarks exhibiting width inference failures of {\firtool} are excluded from the final comparative analysis to ensure a fair evaluation of efficiency.}

Column (Component Amount) shows the (average) number of implicit ground types for each benchmark (suit).
Column ({\firtool}) shows the (average) execution time of {\firtool} in mili-second (ms) for each benchmark (suit).
Column (Gurobi) shows the (average) execution time of Gurobi in mili-second (ms) for each benchmark (suit).
Column (BFWInferWidth) shows the (average) execution time of our OCaml {\firrtl} implementation (in ms) for each benchmark (suit).

The last row (Average) shows the average results overall 75 benchmarks.
Detailed results are given in Supplementary Material.
 	 	 	
In terms of efficiency, our OCaml {\firrtl} width inference procedure is generally faster than {\firtool} and Gurobi.
On small-scale benchmarks, it achieves 10–100× and 100–1000× speedup over {\firtool} and Gurobi respectively, leveraging topological ordering optimizations for rapid constraint resolution. In industrial-scale designs with more than 3 thousand unspecified width ground types, our method still achieve considerable speedup compared to {\firtool}. Although it has runtime increasement compared to Gurobi, the difference remain insignificant refers to its outstanding performance on other examples.

This performance profile positions our solution as both a precision instrument for critical verification and a practical alternative for production, achieving an optimal balance between verified completeness and operational efficiency.
The demonstrated capabilities suggest our methodology successfully addresses the fundamental trade-off between computational completeness and runtime efficiency in width inference problem.

\begin{table}[t]
\setlength{\tabcolsep}{3pt}
\caption{Compilation Performance Comparison Results : Efficiency}
\label{tab:efficiency-conformancetesting'}
\scalebox{1}{
\begin{tabular}{|c|c|c|c|c|}
  \hline
  \multirow{2}{*}{\textbf{Benchmark}} & 
  \multirow{2}{*}{\textbf{\# Components}} & 
  \multicolumn{3}{c|}{\textbf{Execution Time (ms)}}\\ 
  \cline{3-5} 
  & & \textbf{firtool} & \textbf{Gurobi} & \textbf{BFWInferWidths}\\ 
  \hline \hline
  small-scale examples & 7.04 & 1.14 & 8.57 & \textbf{0.10} \\ 
  \hline
  medium-scale examples & 2691.33 & 128.10 & 91.14 & \textbf{21.69} \\ 
  \hline
  NutCore & 7152 & 190.70 & 194.55 & \textbf{158.31} \\ 
  \hline
  RocketCore & 4882 & 127.90 & 120.64 & \textbf{22.24} \\ 
  \hline
  BoomCore & 205608 & 8338.30 & \textbf{3326.94} & 3467.80 \\ 
  \hline \hline
  \textbf{Average} & \textbf{2976.43} & \textbf{144.54} & \textbf{59.41} & \textbf{48.96} \\ 
  \hline
\end{tabular}}
\end{table}
}

}

%%%%%%%%%%%%%%%%%%%%%%%%%%%%%%%%%%%%%%%%%%%

\section{Related work}\label{sec-related}

%!TEX root = main.tex
%\keyin{working on this}

\subsubsection*{Verification of compilers.} %The earliest work in 
Compiler verification dates back to 1967,
when \cite{MP67} proved the correctness of a translation from %compiler that translates
arithmetic expressions to stack
machine code. It has become a vibrant area %Since then, many other proofs have been conducted, 
ranging from
single compilation pass to sophisticated code optimizations in compilers~\cite{Dave03}.
%that translate programs in high-level software or hardware programming languages into low-level and/or optimized code

Representative works on verified compilation of software programming languages
include CompCert for C~\cite{Leroy09,kastner2018compcert,Leroy2006,Leroy-BKSPF-2016,BlazyDL06,Kastner-LBSSF-2017},
V\'{e}lus for Lustre~\cite{BourkeBDLPR17}, Ve\-llvm~\cite{ZhaoNMZ12,ZhaoNMZ13} and Cre\-llvm~\cite{KangKSLPSKCCHY18} for LLVM IR,
CakeML for ML~\cite{KumarMNO14,LoowKTMNAF19,TanMKFON19,OwensNKMT17}, to cite a few.
In particular, the interactive theorem prover Rocq has been widely adopted to formalize semantics of programs,
compilation and optimization passes; on its basis compilation correctness can be proven, and
executable inter\-preters/\linebreak[0]com\-pilers can be automatically synthesized (cf.~\cite{Awesomecoq}).
%for a curated list).
For instance, \cite{benton2009} formalized and proved a compiler that translates programs in a simple functional language into code in an idealized assembly language; 
\cite{Chlipala07,Chlipala10} formalized and proved compilers that translate programs in simply-typed or untyped lambda calculus to code in an idealized assembly language; 
CompCert is a verified compiler that translates programs in Clight (a large subset of C) to code in PowerPC assembly language.
%Expanding the reach of verified compilation to 
For domain-specific systems, 
\cite{GuillaumeJ24} designed a safety-oriented language for low-level algebra libraries. Their CompCert-based compiler formally preserves semantics (verified in Rocq), while the language enables deductive verification of safety properties.
\cite{NoelENHDGZ24} proposed a hybrid symbolic method to verify semantics preservation between original and transformed C/C++ programs targeting high-level synthesis, handling complex loop/buffer restructuring, which extends certified compilation principles to hardware synthesis pipelines.

\subsubsection*{Verification of constraint solvers.} 
There are several efforts developing formally verified constraint solvers, most of which focus on SAT solvers, with different methods though. For instance,  
Lescuyer and Sylvain %verified a SAT solver 
use Rocq~\cite{lescuyer2008reflexive};
Shankar and Vaucher %verified a DPLL-based SAT solver in 
use PVS~\cite{ShankarV11};
Maric %verified a SAT solver by 
takes the shallow embedding in Isabelle/HOL~\cite{Maric10,NipkowPW02};
%With the theorem prover Isabelle/HOL~\cite{DBLP:books/sp/NipkowPW02}, Maric implements and verifies a SAT solver by shallow embedding~\cite{DBLP:journals/tcs/Maric10};
Fleury et al. %verified  an optimized SAT solver using 
use a refinement framework~\cite{BlanchetteFLW18,FleuryBL18,Fleury19};
Oe et al. %verified a SAT solver using 
utilize the verified programming language Guru~\cite{OeSOC12,StumpDPSS09}.
Besides SAT solvers, solvers~\cite{CarlierDG12,KanLRS22} for constraint programming over finite domains and string constraints 
are developed and verified with Rocq and Isabelle/HOL, respectively.
%Apart from SAT solvers, CertiStr is a string constraint solver verified in Isabelle/HOL~\cite{KanLRS22}.
%To the best of our knowledge, the existing efforts do not cover 
In contrast, our work focuses on a verified solver for a class of linear integer constraints, which are distant from the prior work.
%far beyond the scope of the currently \emph{verified} constraint solvers.  %the FIRWINE constraints.

\subsubsection*{Solving linear constraints.} The use of Presburger arithmetic and its fragments plays a pivotal role in verification and optimization. 
%Our constraint-solving approach for 
The FIRWINE constraints studied in this paper are a fragment of integer linear arithmetic, whose solving algorithms have been studied extensively  %, relates directly to Presburger arithmetic. Presburger arithmetic has been foundational in automata theory, model checking, and verification\cite{HaaseC18}, with foundational algorithmic treatments like Kroening and Strichman's comprehensive guide to 
%decidable theories
\cite{2008Decision}. 
% While general decision procedures for Presburger arithmetic exhibit high complexity (doubly exponential in the worst case\cite{HaaseC18}), our procedure exploits the structural specificity of \textit{FIRWINE} constraints to achieve efficiency.
%
Modern SMT solvers such as Z3 \cite{LeonardoN08,LeonardoN07} and CVC5 \cite{2022cvc5} implement highly optimized decision procedures for quantifier-free linear integer arithmetic. 
%—relying on standardized interfaces like SMT-LIB\cite{Ranise2003TheSF}—implement highly optimized decision procedures for quantifier-free linear integer arithmetic, forming the backbone of modern industrial verification tools. 
Their Optimization Modulo Theories (OMT) extensions \cite{NokolajAL15,NiewenhuisO06} optimise objective functions over ILA constraints. While SMT solvers with OMT extensions can solve FIRWINE constraints efficiently, their use of e.g., internal heuristic strategies are too complicated to accommodate full verification.

\revise{ILP is NP-complete in general, and therefore admits exponential worst-case time complexity. While general ILP solvers (e.g., Gurobi and Z3) feature various heuristic optimizations, their implementations are unverified and hard to be verified}. 
In contrast, we provide a tailored procedure for solving FIRWINE constraints. It features a considerably lower complexity. Moreover, it is based on classical branch-and-bound (BaB) algorithms which are widely adopted in combinatorial optimization as well as the classical Floyd--Warshall algorithm for solving the all-pairs shortest path problem. \revise{We have empirically compared our implementation with the commercial ILP solver Gurobi on both manually constructed and real-world benchmarks. The results show that our implementation outperforms Gurobi or 
achieves comparable performance.} In particular, %specializes the problem to admit 
our solving algorithm enjoys a modular Rocq proof, which significantly pushes the frontier of a verified compiler for {\firrtl}. 

% Our procedure adapts classical branch-and-bound (BaB) algorithms, originally formalized by \cite{1965A} for integer programming and rooted in earlier work by \cite{1960An}, later extended to combinatorial problems like the 0-1 knapsack\cite{1982An}. 

% For computing strongly connected components (SCCs) in dependency graphs, we adopt the seminal linear-time algorithm by Tarjan\cite{tarjan,Tarjan2008Depth,1983Data}, which remains the standard approach in program analysis. The topological sorting of SCC graphs follows classic algorithms \cite{Kahn1962Topological,1997The} that guarantee linear-time complexity.

%%%%%%%%%%%%%%%%%%%%%%%%%%%%%%%%%%%%%%%%%%%

\section{Conclusion}\label{sec-conclusion}

%!TEX root = main.tex

In this paper, we have studied the width inference problem for {\firrtl} program compilation.
We derived a subclass of linear integer constraints from a {\firrtl} program where the bit widths of some components are not explicitly given, and  provided a complete procedure for solving these constraints. 
Furthermore, the procedure is formally verified in Rocq, where an  
OCaml implementation is also derived with competitive performance as the official InferWidths pass in {\firtool}. 

This work presents the first formally verified InferWidths pass, marking a significant step toward a production-quality, verified FIRRTL compiler. 
% %
% %Running verified software on unverified hardware may void any correctness guarantees.
% %By 
% contributing to a production-quality verified FIRRTL compiler. 
%The large landscape is that 
Broadly speaking, it advances the goal of formally verified hardware, % a step closer, 
which 
%So our work contributes 
would contribute to a verified execution stack that preserves correctness guarantees from software down to hardware. 

As the next step, we will extend our approach to deal with dynamic shift left, which introduces exponential terms in the width constraints. Moreover, it is interesting to  implement the procedure proposed in this work in C++, which could replace the current InferWidths pass in {\firtool}. %with ours.  

%%%%%%%%%%%%%%%%%%%%%%%%%%%%%%%%%%%%%%%%%%%%%%%%
%\newpage
\clearpage

\section*{Data Availability Statement}
\revise{The Rocq implementation of our procedure, its  correctness proof, the source code of our width inference tool and benchmarks 
are available at~\cite{BFWInferWidths}.}
%\url{https://github.com/wky17/VerinferWidth} for reuse and \url{https://doi.org/10.5281/zenodo.18194615} for reproduction.}

\bibliographystyle{splncs04}

\bibliography{firrtl}

%%%%%%%%%%%%%%%%%%%%%%%%%%%%%%%%%%%%%%%%%%%%%%%%

\clearpage

% \noindent
% This appendix belongs to the anonymized ESOP 2026 submission ``A Formally Verified Procedure for Width Inference in FIRRTL''.

\appendix 

%\begin{appendix}

%!TEX root = main.tex

\section{Syntax of \firrtl}\label{appendix:syntax}

$$\Phi_W := \bigwedge_{x_i\in X}\left(x_i\geq\min(t_{i,1},\ldots,t_{i,k_i})\right)$$

$$\begin{aligned}
\Phi_{\mathrm{nf}}:=\bigvee_{j_1\in[\ell]}\bigwedge_{j_2\in[m_{j_1}]}z_{j_1,j_2}\geq t_{j_1,j_2}
\end{aligned}$$

The syntactic rules of {\firrtl} are given in Figure~\ref{fig-firrtl-synt-simp} in Extended Backus-Naur Form, where \{$\cdots$\} denotes repetition (repeating zero or more times),
[$\cdots$] denotes option (either included or discarded),  and the terminals are doubly quoted by \mbox{\tt ``} and \mbox{\tt ''}.
The syntactic rules of {\firrtl}  presented here are a simplification of those in the official specification.\footnote{\url{https://github.com/chipsalliance/firrtl-spec/blob/main/spec.md}}
For instance, indent, dedent and newline terminals are omitted for readability;
the types and constructs for analog signals are also omitted, since we focus on digital circuits in this work. In the syntax, memory is also omitted since its semantics is still under active debate.\footnote{\url{https://github.com/chipsalliance/chisel/issues/3112}}

%\begin{itemize}
%\item A circuit in FIRRTL consists of modules, each of which include several ports and statements.
%\item Each port has an identifier and types and is either an input or output.
%\item Each statement defines a wire, register, memory, or node, connect (<=) or partial connect (<-), attach (for analog signals), or conditional (when), and so on.
%\item Expressions are widely used in statements. Each expression is an unsigned/signed integer constant, a reference, a multiplexer, a validif expression, or an expression obtained by applying primitive operations on ground types.
%\item A ground type is ``Clock'', ``Reset'', ``AsyncReset'', ``UInt'' (unsigned integer), ``SInt'' (signed integer), ``Analog'', or ``Fixed'' (fixed point number).
%\item An aggregate type is either declared by describing the (possibly nested) fields, or an array (e.g. UInt[10]).
%\end{itemize}

\begin{figure}[hp]
\vspace{-5mm}
{ %\footnotesize
\[
\begin{array}{l l l}
%\mbox{\bf Circuit} &
c & ::= & \mbox{``\fnprettyll{circuit}''}\ id\ \mbox{``:''} \ \{m\}\\
%
%\mbox{\bf Module} &
m & ::= & \mbox{``\fnprettyll{module}''} \ id \{p\} \{s\}\\
%
%\mbox{\bf External Module} &
%em & ::= & \mbox{``extmodule''}\ id \{p\} [\mbox{``defname =''}\ id] \{\mbox{``parameter''}\ id \mbox{``= ''}(string \mid int)\}\\
%
%\mbox{\bf Port} &
p & ::= & (\mbox{``\fnprettyll{input}''} \mid \mbox{``\fnprettyll{output}''})\ id\ \mbox{``:''}\ t \\
%
 %\mbox{\bf Statment} &
 s & ::= & \mbox{``\fnprettyll{wire}''}\ id\ \mbox{``:''}\ t  \mid \mbox{``\fnprettyll{reg}''}\ id\ \mbox{``:''}\ t\ e [\mbox{``(\fnprettyll{with}:\{\fnprettyll{reset}{\tt =>}(''} e\ \mbox{``,''}\ e \mbox{``)\})''}] \\
 %&
  &   &  \mid
%  \mid memory \mid
  \mbox{``\fnprettyll{inst}''}\ id\ \mbox{``\fnprettyll{of}''}\ id\ \mid \mbox{``\fnprettyll{node}''}\ id\ \mbox{``{\tt =}''} id\  \mid r \mbox{``{\tt <=}''} e  \\
  &  & \mid r\ \mbox{``\fnprettyll{is invalid}''}  \mid \mbox{``\fnprettyll{when}''}\ e\ \mbox{``:''}\ \{s\} [\mbox{``\fnprettyll{else}:''} \{s\}] \mid \mbox{``\fnprettyll{skip}''}\\
%\mid r \mbox{``$<$--''} e
%
%\mbox{\bf Memory} &
%memory & = &  \mbox{``mem''}, id, newline, indent, \\
 %&
% & & \mbox{``data-type''}, \mbox{``=>''}, type, newline, \\
 %&
% & & \mbox{``depth''}, \mbox{``=>''}, int, newline, \\
 %&
% & & \mbox{``read-latency''}, \mbox{``=>''}, int, newline, \\
 %&
% & & \mbox{``write-latency}, \mbox{``=>''}, int, newline,  \\
 %&
% & & \mbox{``read-under-write''}, \mbox{``=>''}, ruw, newline, \\
  %&
%  & & \{\mbox{``reader''}, \mbox{``=>''}, id, newline\}, \\
  %&
%  & & \{\mbox{``writer''}, \mbox{``=>''}, id, newline\}, \\
  %&
%  & & \{\mbox{``readwriter''}, \mbox{``=>''}, id, newline\}, dedent;\\
%
%\mbox{\bf Read Under Write Flag} &
%ruw & = & \mbox{``old''} \mid \mbox{``new''} \mid \mbox{``undefined''};\\
%
%\mbox{\bf Expression} &
  e & ::= & (\mbox{``\fnprettyll{UInt}''} | \mbox{``\fnprettyll{SInt}''}) [w] \mbox{``(''}int\mbox{``)''} \mid r \mid \mbox{``\fnprettyll{mux}(''}e\mbox{``,''} e\mbox{``,''} e\mbox{``)''}
  %\\
   % &     &
% \mid \mbox{``\fnprettyll{validif}(''}e\mbox{``,''} e\mbox{``)''}
        %\mid \mbox{``\textcolor{blue}{read}(''}sr \mbox{``)''}
        \mid op\\
%
%&
%r & ::= & sr \mid r \mbox{``[''} e \mbox{``]''}\\
%
r & ::= & id \mid r\mbox{``.''}id \mid r\mbox{``[''}int\mbox{``]''}\\
%&
w & ::= & \mbox{``{\tt <}''} int \mbox{``{\tt >}''}\\
%
%&
%binarypoint & = & \mbox{``<''}, \mbox{``<''}, int, \mbox{``>''}, \mbox{``>''};\\
%
 %\mbox{\bf Bundle Field} &
%
%\mbox{\bf Data Types} &
t & ::= & tg \mid ta \\
 %&
 tg & ::= &  \mbox{``\fnprettyll{Clock}''} \mid \mbox{``\fnprettyll{Reset}''} \mid \mbox{``\fnprettyll{AsyncReset}''} \mid (\mbox{``\fnprettyll{UInt}''} \mid \mbox{``\fnprettyll{SInt}''}) [w] \\
 %&
% & & \mid \mbox{``Fixed''}, [width], [binarypoint];\\
 %&
%
  ta & ::= &\mbox{``\{''} f \{f\}\mbox{``\}''} \mid t \mbox{``[''} int \mbox{``]''} \\
    f & ::= & [\mbox{``\fnprettyll{flip}''}]\ id\ \mbox{``:''}\ t\\
%& & & \mbox{``UInt''}, [\mbox{``<''}, int, \mbox{``>''}] \mid \mbox{``SInt''}, [\mbox{``<''}, int, \mbox{``>''} ] \\
%& & & \mid \mbox{``Fixed''}, [\mbox{``<''}, int, \mbox{``>''} ] [\mbox{``<}\mbox{<''}, int, \mbox{``>}\mbox{>''}]\\
% &  &  & \mid \mbox{``Clock''} \mid \mbox{``Analog''},  [\mbox{``<''}, int, \mbox{``>''}] \mid \mbox{``\{''}, field, \mbox{``\}''} \mid type, \mbox{``[''}, int, \mbox{``]''}; \\
%
id & ::= & ( \mbox{``\_''} \mid l) \{ \mbox{``\_''} \mid l \mid d\} \\
%\mbox{\bf Primitive Operation} &
op & ::= & op\_2e \mid op\_1e \mid op\_1e1i \mid op\_1e2i \\
%&
op\_2e & ::= & op\_2e\_key \mbox{``(''}  e \mbox{``,''}  e \mbox{``)''} \\
%
%&
op\_2e\_key & ::= &  \mbox{``\fnprettyll{add}''}  \mid  \mbox{``\fnprettyll{sub}''}  \mid \mbox{``\fnprettyll{mul}''} \mid \mbox{``\fnprettyll{div}''} \mid \mbox{``\fnprettyll{mod}''} \mid \mbox{``\fnprettyll{lt}''} \mid \mbox{``\fnprettyll{leq}''} \mid \mbox{``\fnprettyll{gt}''} \mid \mbox{``\fnprettyll{geq}''}\\
%&
& &   \mid \mbox{``\fnprettyll{eq}''} \mid \mbox{``\fnprettyll{neq}''}  \mid \mbox{``\fnprettyll{dshl}''}  \mid \mbox{``\fnprettyll{dshr}''} \mid \mbox{``\fnprettyll{and}''} \mid \mbox{``\fnprettyll{or}''} \mid \mbox{``\fnprettyll{xor}''} \mid \mbox{``\fnprettyll{cat}''} \\
 %
%&
op\_1e & ::= & op\_1e\_key \mbox{``(''}  e \mbox{``)''} \\
%&
op\_1e\_key & ::= & \mbox{``\fnprettyll{asUInt}''} \mid \mbox{``\fnprettyll{asSInt}''} \mid \mbox{``\fnprettyll{asClock}''} \mid \mbox{``\fnprettyll{cvt}''} \mid \mbox{``\fnprettyll{neg}''}\\
& & \mid \mbox{``\fnprettyll{not}''}  \mid \mbox{``\fnprettyll{andr}''} \mid \mbox{``\fnprettyll{orr}''} \mid  \mbox{``\fnprettyll{xorr}''}\\
%
%&
op\_1e1i & ::= & op\_1e1i\_key \mbox{``(''}  e \mbox{``,''} int \mbox{``)''} \\
%&
op\_1e1i\_key & ::= & \mbox{``\fnprettyll{pad}''} \mid \mbox{``\fnprettyll{shl}''} \mid \mbox{``\fnprettyll{shr}''} \mid \mbox{``\fnprettyll{head}''} \mid \mbox{``\fnprettyll{tail}''} \\
%&
op\_1e2i & ::= & op\_1e2i\_key \mbox{``(''}  e \mbox{``,''}  int  \mbox{``,''}  int \mbox{``)''} \\
%&
op\_1e2i\_key & ::= & \mbox{``\fnprettyll{bits}''}\\
int & ::= & [\mbox{``-''}] d \{d\} \mid \mbox{``{\tt "}b''} [\mbox{``-''}] d\_b \{d\_b\} \mbox{``{\tt "}''} \\
& & \mid \mbox{``{\tt "}o''} [\mbox{``-''}] d\_o \{d\_o\} \mbox{``{\tt "}''} \mid
\mbox{``{\tt "}h''} [\mbox{``-''}] d\_h \{d\_h\} \mbox{``{\tt "}''} \\
%
%d & ::= &\mbox{``0''} \mid  \mbox{ ``1''}\mid \mbox{``2''} \mid\mbox{``3''} \mid  \mbox{ ``4''}\mid \mbox{``5''} \mid\mbox{``6''} \mid  \mbox{ ``7''}\mid \mbox{``8''} \mid\mbox{``9''} \\
%d\_b & ::= &\mbox{``0''} \mid  \mbox{ ``1''}  \\
%d\_o & ::= &\mbox{``0''} \mid  \mbox{ ``1''}\mid \mbox{``2''} \mid\mbox{``3''} \mid  \mbox{ ``4''}\mid \mbox{``5''} \mid\mbox{``6''} \mid  \mbox{ ``7''} \\
%d\_h & ::= &\mbox{``0''} \mid  \mbox{ ``1''}\mid \mbox{``2''} \mid\mbox{``3''} \mid  \mbox{ ``4''}\mid \mbox{``5''} \mid\mbox{``6''} \mid  \mbox{ ``7''}\mid \mbox{``8''} \mid\mbox{``9''}  \mid  \mbox{ ``A''}\mid \mbox{``B''} \mid\mbox{``C''} \mid  \mbox{ ``D''}\mid \mbox{``E''} \mid\mbox{``F''}\\
%

%\mbox{\bf File Information Token} &
%info & = & \mbox{``@''}, \mbox{``[''}, \{string, \mbox{`` ''}, linecol\}, \mbox{``]''};\\
%&
%linecol & = &  digit\_dec, \{digit\_dec\}, \mbox{``:''}, digit\_dec, \{digit\_dec\};\\
%&
%
\end{array}
\]
$d$ (resp. $d\_b$, $d\_o$ and $d\_h$) denotes a decimal (resp. binary, octal and hexadecimal) digit, and $l$ denotes a letter
}
\caption{Syntax of {\firrtl} }
\label{fig-firrtl-synt-simp}
\vspace{-4mm}
\end{figure}

According to the syntax, {\firrtl} features high-level constructs such as vector types, bundle types, conditional statements, connects, and modules.
These high-level constructs are then gradually removed by a sequence of \emph{lowering transformations}. During each lowering transformation, the circuit is rewritten into an equivalent circuit using simpler, lower-level
constructs. Eventually, the circuit is simplified to its most restricted form, resembling a structured netlist,
which allows for easy translation to an output language (e.g., Verilog).
This form is given the name lowered {\firrtl}  (i.e., \lofirrtl) and is a strict subset of the full {\firrtl} language.
For clarity, the original form is referred to as high {\firrtl}  (\hifirrtl).
%\footnote{Besides LoFIRRTL, there is another lowered form of FIRRTL, referred to as MidFIRRTL, which resides between HiFIRRTL and LoFIRRTL. We do not introduce MidFIRRTL in this work,
%since it can be seen as an internal representation of a FIRRTL compiler that does not interface with the other parts of the Chisel toolchain.}

A {\firrtl}  circuit is a {\lofirrtl} circuit if it obeys the following restrictions:
\begin{itemize}
\item All widths are explicitly defined.
\item No conditional statements (\prettyll{when}) are used.
\item All circuit components are declared with a ground type.
%\item No partial connect statements (\prettyll{<-}) are used.
\item Each component is connected to at most once.
\end{itemize}
More precisely, a {\lofirrtl} circuit is a {\firrtl}  circuit that is defined by the syntactic rules in Figure~\ref{fig-lofirrtl-synt}
and satisfies that in each module, each component is connected to at most once.

\begin{figure}[htp]
\vspace{-3mm}
{
\[
\begin{array}{l l l}
 s & ::= & \mbox{``\fnprettyll{wire}''}\ id\ \mbox{``:''}\ t  \mid \mbox{``\fnprettyll{reg}''}\ id\ \mbox{``:''}\ t\ e [\mbox{``(\fnprettyll{with}:\{\fnprettyll{reset}{\tt =>}(''} e\mbox{``,''} e \mbox{``)\})''}] \\
 %&
  &   &  \mid
%  \mid memory \mid
  \mbox{``\fnprettyll{inst}''}\ id\ \mbox{``\fnprettyll{of}''}\ id \mid \mbox{``\fnprettyll{node}''}\ id \mbox{``{\tt =}''} id  \mid r \mbox{``{\tt <=}''} e  \mid r\ \mbox{``\fnprettyll{is invalid}''}\mid \mbox{``\fnprettyll{skip}''}\\
%
%\mbox{\bf Expression} &
e & ::= & (\mbox{``\fnprettyll{UInt}''} | \mbox{``\fnprettyll{SInt}''})\ w \mbox{``(''}int\mbox{``)''} \mid r \mid \mbox{``\fnprettyll{mux}(''}e\mbox{``,''} e\mbox{``,''} e\mbox{``)''}
% &  & \mid \mbox{``\fnprettyll{validif}(''}e\mbox{``,''} e\mbox{``)''} \mid \mbox{``\textcolor{blue}{read}(''}sr \mbox{``)''}
 \mid op\\
%
%&
%r & ::= & sr \mid r \mbox{``[''} e \mbox{``]''}\\
%
r & ::= & id \mid r\mbox{``.''}id \mid r\mbox{``[''}int\mbox{``]''}\\
%&
w & ::= & \mbox{``{\tt <}''} int \mbox{``{\tt >}''}\\
%
%&
%binarypoint & = & \mbox{``<''}, \mbox{``<''}, int, \mbox{``>''}, \mbox{``>''};\\
%
 %\mbox{\bf Bundle Field} &
%
%\mbox{\bf Data Types} &
t & ::= & tg \\
 %&
 tg & ::= &  \mbox{``\fnprettyll{Clock}''} \mid \mbox{``\fnprettyll{AsyncReset}''} \mid (\mbox{``\fnprettyll{UInt}''} \mid \mbox{``\fnprettyll{SInt}''})\ w
 %&
% & & \mid \mbox{``Fixed''}, [width], [binarypoint];\\
 %&
%
%\mbox{\bf File Information Token} &
%info & = & \mbox{``@''}, \mbox{``[''}, \{string, \mbox{`` ''}, linecol\}, \mbox{``]''};\\
%&
%linecol & = &  digit\_dec, \{digit\_dec\}, \mbox{``:''}, digit\_dec, \{digit\_dec\};\\
%&
%
\end{array}
\]
}
\caption{Syntax of {\lofirrtl} (Rules that are identical to those in Figure~\ref{fig-firrtl-synt-simp} are not repeated)}
\label{fig-lofirrtl-synt}
\end{figure}

\begin{figure}[htp]
\vspace{-2mm}
\centering
\begin{subfigure}[b]{0.4\textwidth}
         \centering
%[basicstyle=\footnotesize\ttfamily]
\begin{lstlisting}
module MyModule :
  input a: UInt<1>
  input b: UInt<1>
  output myport1: UInt<1>
  output myport2: UInt<1>
  myport1 <= b
  myport2 <= a
\end{lstlisting}
\vspace*{0.92\baselineskip}
\caption{}
\end{subfigure}
\hspace{4mm}
%     \hfill
\begin{subfigure}[b]{0.4\textwidth}
         \centering
%[basicstyle=\footnotesize\ttfamily]
\begin{lstlisting}
module MyModule :
  input a: UInt<1>
  input b: UInt<1>
  output myport1: UInt<1>
  output myport2: UInt<1>
  myport1 <= a
  myport1 <= b
  myport2 <= a
\end{lstlisting}
\caption{}
\end{subfigure}
\vspace{-2mm}
\caption{Examples for illustrating the syntactic restrictions of {\lofirrtl}}
\label{fig-lofirrtl-exmp}
\vspace{-4mm}
\end{figure}

Note that in Figure~\ref{fig-lofirrtl-synt}, the rules for $f$, $id$, $op$ are the same as those in Figure~\ref{fig-firrtl-synt-simp}  (and thus omitted). %to avoid tediousness.
In a nutshell, the syntactic rules in Figure~\ref{fig-lofirrtl-synt} are obtained from those in Figure~\ref{fig-firrtl-synt-simp} by replacing $[w]$ with $w$ (so that the width becomes mandatory),
removing the \prettyll{when} statement from the rule for $s$,
and removing the rule for aggregate types $ta$.
The requirement that each component is connected to at most once
gives {\lofirrtl} a flavor similar to static single-assignment (SSA) form of software programs.
Programs in SSA form are those where each variable is assigned at most once and each variable is defined before it is used.
For instance, the program in Figure~\ref{fig-lofirrtl-exmp}(a) is a {\lofirrtl} program,
while the program in Figure~\ref{fig-lofirrtl-exmp}(b) is not,
since the output port \prettyll{myport1} is connected to twice from the input ports \prettyll{a} and \prettyll{b}, respectively.

%%%%%%%%%%%%%%%%%%%%%%%%%%%%%%%%%%%%%%%%%%%%%%%%%%%%%%%%%%%%%%%%%%%

%\subsection{Lowering Transformations of FIRRTL}
%\label{subsec-lowering-transformations}

%In the sequel, we give an informal introduction to the lowering transformations (aka compilation passes) of  {\firrtl}.

%The dependencies between these compilation passes are illustrated in Figure~\ref{fig-c-pass} \tl{redo this}.

%\begin{figure}[htbp]
%\centering
%\includegraphics[width = 0.65\textwidth]{firrtl-compilation-passes}
%\caption{Dependencies between FIRRTL compilation passes}
%\label{fig-c-pass}
%\end{figure}

%% \paragraph*{ResolveKinds} xxx \zhilin{@Xiaomu, please write an introduction here.}
%
%\paragraph*{InferTypes} The InferTypes pass infers the types of components. For instance, InferTypes infers from the code snippet
%\prettyll{ wire x: UInt   x <= UInt<5>(15) x <= UInt<1>(1) } that the type of the wire $x$ is \prettyll{ UInt}. Note that for a connect statement   \prettyll{ x <= e}, InferTypes only checks that the type of
% the expression $e$ is equivalent to that of the component $x$ (where the bit widths are ignored), and does not update the bit width of $x$ to that of the expression $e$.

%\tl{incorporate infertypes please}
\hide{
\smallskip
\noindent\emph{InferWidths.}
The InferWidths pass first infers the types of components where the widths %of some components 
are unknown. Additionally, 
it infers these unknown widths.
%, after the InferTypes pass has been applied \tl{to be revised}.
It is a \emph{global} width inference algorithm.
%Instances of the same module with unknown input port widths will be assigned the largest width of all assignments to each of its instance ports.
It infers the smallest width that is larger than all assigned widths to a component. Note that this means that dummy assignments that are overwritten by last-connect semantics can still influence width inference. The last-connect semantics means that the last connect statement for a component will take effects and all the preceding connect statements for it will be ignored  in the end. For instance, InferWidths infers from the code snippet \prettyll{wire x: UInt   x <= UInt<5>(15)    x <= UInt<1>(1)} that the type of the wire $x$ is \prettyll{UInt<5>} where its width is $5$, but with a value \prettyll{UInt<1>(1)}.
}

\hide{
\section{NP-completeness of the width inference problem}

\medskip
\noindent {\bf Proposition~\ref{prop-np}}. \emph{Checking the satisfiability of the width inference problem is NP-complete.} 
\medskip

\begin{proof}
The NP upper bound follows from the fact that the satisfiability of quantifier-free Presburger arithmetic constraints is NP-complete.  
We show  the NP lower bound by a reduction from 3SAT.  

Let us use 
%$\varphi_1 = (x_1 \vee x_2 \vee x_3) \wedge (\neg x_1 \vee \neg x_2 \vee \neg x_3)$ and 
$\varphi = (\neg x_1 \vee \neg x_2 \vee \neg x_3) \wedge x_1 \wedge x_2 \wedge x_3$ as an example to illustrate the reduction. 
It is not hard to see that the formula $\varphi$ is unsatisfiable, since from $(\neg x_1 \vee \neg x_2 \vee \neg x_3)$, we know that $x_1$, $x_2$, or $x_3$ should be $\lfalse$, which contradicts $x_1 \wedge x_2 \wedge x_3$. 

The width inference problem constructed from $\varphi$ is illustrated as a dependency graph $G_\varphi$ in Figure~\ref{fig-reduction-unsat}. Let us explain the notations used in Figure~\ref{fig-reduction-unsat}. 
\begin{itemize}
\item The vertices/variables $C_1, C_2, \cdots$ denote the clauses in $\varphi$ and the vertices/variables $l_{1,1}, \cdots$ denote the literals of clauses. 

\item Each edge corresponds to an inequality. For instance, the edge from $C_3$ to $l_{3,1}$ of weight $-1$ represents the inequality $C_3 \ge l_{3,1} - 1$. 
\item Moreover, the edges out of one vertex with an arc crossing them denote the existence of a $\min$ operator between them. For instance, three edges out of $C_1$ represent the inequality $C_1 \ge \min(l_{1,1} +3, l_{1,2} +3, l_{1,3} +3)$, which is equivalent to $C_1 \ge l_{1,1} +3 \vee C_1 \ge l_{1,2} +3 \vee C_1  \ge l_{1,3} +3$.
\item On the other hand, the edges out of one vertex without an arc crossing them denote the conjunction of the inequalities. For instance, the two edges out of $l_{2,1}$ represent $l_{2,1} \ge C_3 \wedge l_{2,1} \ge l_{1,1} + 2$.
\end{itemize}

\begin{figure}[htbp]
\begin{center}
\includegraphics[scale = 0.7]{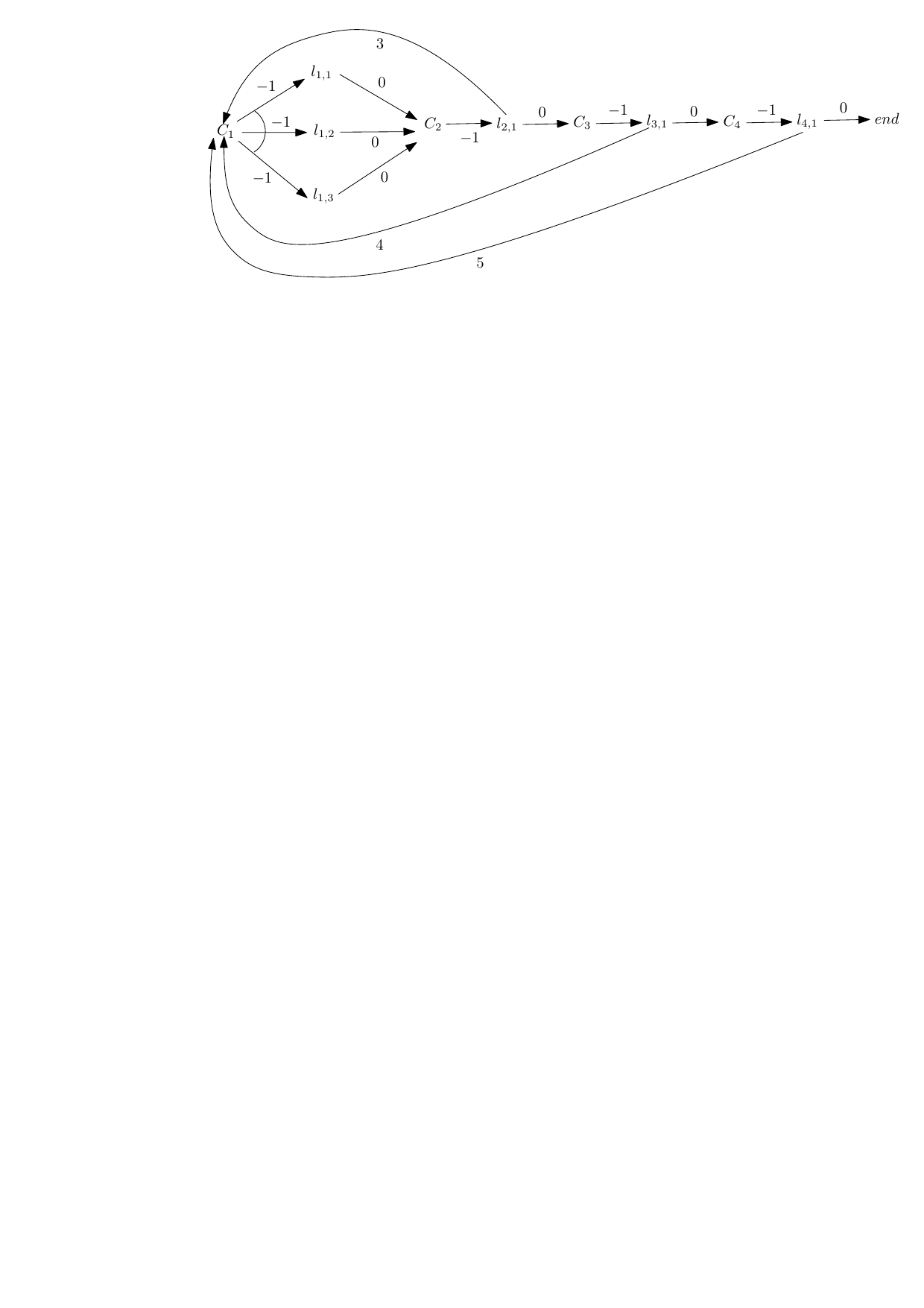}
\end{center}
\caption{The width inference problem corresponding to $\varphi$}
\label{fig-reduction-unsat}
\end{figure}

More precisely, the width inference problem represented by the dependency graph $G_\varphi$ is to infer the widths of the components in the {\firrtl} program in Figure~\ref{fig-fir-program-g-varphi}.

\begin{figure}[htbp]
\begin{center}
\begin{tabular}{c}
%[basicstyle=\footnotesize\ttfamily]
\begin{lstlisting}
circuit A :
  module A :
    input in : UInt<1>
    output out : UInt<1>
    wire C1: UInt
    wire C2: UInt
    wire C3: UInt
    wire C4: UInt
    wire l11: UInt
    wire l12: UInt
    wire l13: UInt
    wire l21: UInt
    wire l31: UInt
    wire l41: UInt
    wire end: UInt
    C1 <= rem(shl(l11, 3), rem(shl(l12, 3), shl(l13,3)))
    l11 <= C2
    l12 <= C2
    l13 <= C2
    C2 <=  shr(l21,1)
    l21 <= C3
    l21 <= shl(l11,2)
    C3 <= shr(l31,1)
    l31 <= shl(l12,3)
    C4 <= shr(l41,1)
    l41 <= shl(l13,4)
    l41 <= end
\end{lstlisting}
\end{tabular}
\end{center}
\caption{The {\firrtl} program corresponding to $G_\varphi$}
\label{fig-fir-program-g-varphi}
\end{figure}

Intuitively, a choice should be made in $G_\varphi$ to select one edge out of $C_1$, so that the graph obtained by removing from $G_\varphi$ the destination vertices of all the other edges out of $C_1$, represents a satisfiable width inference problem where the $\min$ operator does not occur. 

Let us explain the main ideas behind the reduction. 
\begin{itemize}
\item The edges out of $C_1$ have weight $3$ and all the other edges out of $C_2, C_3, C_4$ are of weight $-1$. 
\item In the reduction, in order to capture the fact that two vertices $l_{i,j}$ and $l_{i', j'}$ with $i < i'$ represent the literals of the same variable contradicting each other, we add an edge from $l_{i',j'}$ to $l_{i,j}$ of weight $i'-i+1$. For instance, there is an edge from $l_{2,1}$ to $l_{1,1}$ of weight $2$, and there is an edge from $l_{3,1}$ to $l_{1,2}$ of  weight $3$. 
By doing this, we forbid the situation that the edge from $C_{i'}$ to $l_{i',j'}$ and the edge from $C_i$ to $l_{i,j}$ are chosen simultaneously. Otherwise, since each simple path from $l_{i,j}$ to $l_{i', j'}$ has weight $i-i'$,  the path together with the edge from $l_{i', j'}$ to $l_{i,j}$ forms a cycle of weight $i-i'+i'-i+1 = 1$. Then we deduce $l_{i, j} \ge l_{i,j} +1$ and get a contradiction. 
\end{itemize}
Because in $G_\varphi$, there are edges from $l_{2,1}, l_{3,1}, l_{4,1}$ to $l_{1,1}$, $l_{1,2}$, and $l_{1,3}$ respectively. 
As a result, no matter which edge out of $C_1$ are chosen, in the graph after removing the destinations of all the other edges out of $C_1$, there is always a cycle of positive weight. We deduce that the width inference problem represented by $G_\varphi$ is unsatisfiable. 

%\begin{figure}[htbp]
%\includegraphics[scale = 0.6]{reduction-sat.pdf}
%\caption{The width inference problem corresponding to $\varphi_1$}
%\end{figure}
\end{proof}

%\begin{proof}[Attempt]
%%	We may use vertex cover (whether the size of a cover is bounded by $B$). Intuitively, %$x_e$ is for edges, and 
%%	$x_v$ is for vertices. %$x_e=1$ if $e$ is covered and $x_e=0$ if $e$ is not covered. 
%%	$x_v=1$ if $v$ is NOT selected for the cover;  $x_v=0$ if $v$ is selected for the cover. 
%%	
%%	For each edge $e=\{u,v\}$ in the graph, we introduce $1\geq \min(x_u+1, x_v+1)$.   
%
%We reduce from 3SAT. For each literal $x$ we consider two variables $y, \bar{y}$. Intuitively, $x$ is true if $y=1$ and $\bar{y}=0$ and $x$ is false if $y=0$ and $\bar{y}=1$.
%
%This can be enforced by $-1\geq \min(y-1, \bar{y})+ \min(\bar{y}-1, y)$. (One can easily verify that only (1,0) or (0,1) satisfies it.)
%
%Then we apply the standard encoding, e.g. $x_1\vee x_2\vee \bar{x}_3$  
%\end{proof}
}
%%%%%%%%%%%%%%%%%%%%%%%%%%%% end commented out %%%%%%%%%%%%%%%%%%%%%%%%%%%%%%%%%%%%

%%%%%%%%%%%%%%%%%%%%%%%%%%%% move to the main text %%%%%%%%%%%%%%%%%%%%%%%%%%%%%%%%%%%%
%%%%%%%%%%%%%%%%%%%%%%%%%%%% move to the main text %%%%%%%%%%%%%%%%%%%%%%%%%%%%%%%%%%%%
\hide{
\section{Proof of Proposition~\ref{prop-min-sol}}

\noindent{\bf Proposition~\ref{prop-min-sol}}.
\emph{The least solution of a satisfiable {\firlis} constraint always exists.} 

\begin{proof}
%Let $\varphi = \varphi_1 \wedge \varphi_2$ be a {\firlis} constraint such that $\varphi$ is a conjunction of inequalities of the form
%\[\bigwedge \limits_{i \in [n]} x_i \ge a_{0} + \sum \limits_{j \in [n]} a_{j} x_j,\]
%where $a_{0} \in \intnum$ and $a_{j} \in \natnum$, and $\varphi_2$ is a conjunction of the inequalities of the form $b_0 \ge \sum \limits_{i \in [n]} b_i x_i$. 

%The proposition follows from  %show 
%the following claim. 
Let $\varphi$ be a $\fwc$-constraint. 
It suffices to show that,  
%\begin{quote}
%{\bf  Claim}. 
if $(u_1, \cdots, u_n)$ and $(v_1, \cdots, v_n)$ are solutions of $\varphi$,  
then $(w_1, \cdots, w_n) = (\min(u_1, v_1), \cdots, \min(u_n, v_n))$ is also a solution of $\varphi$. 

Let $\eta_1$ and $\eta_2$ be the solutions such that $\eta_1(x_i) = u_i$ and $\eta_2(x_i)= v_i$ for each $i \in [n]$. Moreover, let $\eta$ denote the assignment such that $\eta(x_i) = w_i=\min(u_i,v_i)$ for each $i \in [n]$. 
%\end{quote}
%
%\smallskip
%
%\noindent {\it Proof of the claim}. 
%
Let $I$ denote the set of indices $i \in [n]$ such that $u_i \le v_i$. Then for each $i \in [n] \setminus I$, we have $u_i > v_i$. As a result,  for each $i \in I$, $w_i = u_i$, and for each $i \in [n] \setminus I$, $w_i = v_i$.

For each $i \in [n]$, let $x_i \ge \min(t_{i,1}, \cdots, t_{i, k_i})$ be the inequality for the variable $x_i$ in the $\fwc$-constraint $\varphi$, 
where for each $j \in [k_i]$, $t_{i,j} = a_{i,j,0} + \sum \limits_{l \in [n]} a_{i, j, l} x_l$.
%\tau_{i, j, l}$, where $\tau_{i, j, l} = x_l$ or $2^{x_l}$. 
In the sequel, we show that $w_i \ge \min(\eta(t_{i,1}), \cdots, \eta(t_{i, k_i}))$ for each $i \in [n]$, from which
we can directly conclude that $\eta$ is a solution of $\varphi$. Note that as usual, $\eta(t)$ denotes the lifting of the assignment $\eta$ from variables to terms.
%where $\eta(x_i) = w_i$ for each $i \in [n]$. 

%For each $i \in [n]$, we show that $w_i \ge a_{i,0} + \sum \limits_{j \in [n]} a_{i,j} w_j$.

%Evidently, for each inequality of the form $b_0 \ge \sum \limits_{i \in [n]} b_i x_i$ in $\varphi_2$, we have $b_0 \ge \sum \limits_{i \in [n]} b_i u_i \ge \sum \limits_{i \in [n]} b_i w_i$.

%Let us consider an inequality of $\varphi_1$, say $x_i \ge a_{0} + \sum \limits_{j \in [n]} a_{j} x_j$ with $a_0 \in \intnum$ and $a_j \in \natnum$.
%
%Let $i \in [n]$. Then $u_i \ge a_{i,0} + \sum \limits_{j \in [n]} a_{i,j} u_j$ and $v_i \ge a_{i,0} + \sum \limits_{j \in [n]} a_{i,j} v_j$.
We proceed by distinguishing between the situations $u_i \le v_i$ and $u_i > v_i$.
\begin{itemize}
\item If $u_i \le v_i$, then 
$$\renewcommand{\arraystretch}{1.5}
\begin{array}{l c l}
w_i & = & u_i \ge \min(\eta_1(t_{i,1}), \cdots, \eta_1(t_{i, k_i})) \\
& = & \min\Big\{a_{i, j, 0} + \sum \limits_{l \in I} a_{i, j, l} \eta_1(x_l) + \sum \limits_{l \in [n] \setminus I} a_{i, j, l} \eta_1(x_l)\ \big \vert\ j \in [k_i] \Big\}\\
& = & \min\Big\{a_{i, j, 0} + \sum \limits_{l \in I} a_{i, j, l} u_l + \sum \limits_{l \in [n] \setminus I} a_{i, j, l} u_l\ \big \vert\ j \in [k_i] \Big\}\\
%%%%%%%%%%%%%
& \ge & \min\Big\{a_{i, j, 0} + \sum \limits_{l \in I} a_{i, j, l} u_l + \sum \limits_{l \in [n] \setminus I} a_{i, j, l} v_l\ \big \vert\ j \in [k_i] \Big\} \\
%%%%%%%%%%%%%%%%%
& = &  \min\Big\{a_{i, j, 0} + \sum \limits_{l \in I} a_{i, j, l} w_l + \sum \limits_{l \in [n] \setminus I} a_{i, j, l} w_l\ \big \vert\ j \in [k_i] \Big\} \\
%%%%%%%%%%%%%%%%%
& = &  \min\Big\{a_{i, j, 0} + \sum \limits_{l \in I} a_{i, j, l} \eta(x_l) + \sum \limits_{l \in [n] \setminus I} a_{i, j, l} \eta(x_l)\ \big \vert\ j \in [k_i] \Big\} \\
%%%%%%%%%%%%%%%%%
& = &  \min(\eta(t_{i,1}), \cdots, \eta(t_{i, k_i})). 
\end{array}
$$
Note that the first inequality above follows from the fact that $\eta_1$ is a solution of $\varphi$ and the second inequality follows from the fact that  for each $l \in [n] \setminus I$, $u_l > v_l$ and $a_{i, j, l} \ge 0$.
%, and either $\eta_1(\tau_{i, j, l}) = \eta_1(x_l) = u_l > v_l = \eta_2(x_l) = \eta_2(\tau_{i, j, l})$ or $\eta_1(\tau_{i, j, l}) = \eta_1(2^{x_l}) = 2^{u_l} > 2^{v_l} = \eta_2(2^{x_l}) = \eta_2(\tau_{i, j, l})$.
%
\item If $u_i > v_i$, then 
$$\renewcommand{\arraystretch}{1.5}
\begin{array}{l c l}
w_i & = & v_i \ge \min(\eta_2(t_{i,1}), \cdots, \eta_2(t_{i, k_i})) \\ %%%%%%%%%%%%%%%%%
& = & \min\Big\{a_{i, j, 0} + \sum \limits_{l \in I} a_{i, j, l} \eta_2(x_l) + \sum \limits_{l \in [n] \setminus I} a_{i, j, l} \eta_2(x_l)\ \big \vert\ j \in [k_i] \Big\}\\
%%%%%%%%%%%%%%%%%
& = & \min\Big\{a_{i, j, 0} + \sum \limits_{l \in I} a_{i, j, l} v_l + \sum \limits_{l \in [n] \setminus I} a_{i, j, l} v_l\ \big \vert\ j \in [k_i] \Big\}\\
%%%%%%%%%%%%%%%%%
& \ge & \min\Big\{a_{i, j, 0} + \sum \limits_{l \in I} a_{i, j, l} u_l + \sum \limits_{l \in [n] \setminus I} a_{i, j, l} v_l\ \big \vert\ j \in [k_i] \Big\} \\
%%%%%%%%%%%%%%%%%
& = &  \min\Big\{a_{i, j, 0} + \sum \limits_{l \in I} a_{i, j, l} w_l + \sum \limits_{l \in [n] \setminus I} a_{i, j, l} w_l\ \big \vert\ j \in [k_i] \Big\} \\
%%%%%%%%%%%%%%%%%
& = &  \min\Big\{a_{i, j, 0} + \sum \limits_{l \in I} a_{i, j, l} \eta(x_l) + \sum \limits_{l \in [n] \setminus I} a_{i, j, l} \eta(x_l)\ \big \vert\ j \in [k_i] \Big\} \\
& = & \min(\eta(t_{i,1}), \cdots, \eta(t_{i, k_i})). 
\end{array}
$$
\end{itemize}
%Therefore, $(w_1, \cdots, w_n)$ is a solution of $\varphi$. %The proof of the claim is complete. 
%From the claim, we deduce that if $\varphi$ is satisfiable, then the minimum (nonnegative) solution exists. \qed
\end{proof}
}
%%%%%%%%%%%%%%%%%%%%%%%%%%%% move to the main text %%%%%%%%%%%%%%%%%%%%%%%%%%%%%%%%%%%%
%%%%%%%%%%%%%%%%%%%%%%%%%%%% move to the main text %%%%%%%%%%%%%%%%%%%%%%%%%%%%%%%%%%%%

%%%%%%%%%%%%%%mvoe to the main text%%%%%%%%%%%%%%%%
%%%%%%%%%%%%%%mvoe to the main text%%%%%%%%%%%%%%%%
\hide{
\section{Proof of Proposition~\ref{prop-g-1}}

\smallskip
\noindent{\bf Proposition~\ref{prop-g-1}}.
\emph{Assume $G_\varphi$ is strongly connected and $\varphi$ is satisfiable. $G_\varphi$ is expansive iff 
%contains an edge $(x_i, l, a, x_j)$ with $a > 1$ or two distinct edges out of some vertex $x_i$ with the same label, %\tl{to check later}, 
there is a constant $B \in \natnum$ such that  the values of  the variables in every solution of $\varphi$ are upper-bounded by $B$.}
%Suppose that $G_\varphi$ is strongly connected. If $G_\varphi$ contains two distinct edges out of some vertex $x_i$ of the same label or an edge $(x_i, l, a, b, x_j)$ such that $a > 1$ or $b=1$, then there is a constant $C \in \natnum$ such that  the values of  the variables in every solution of $\varphi$ are bounded by $C$.}
%\smallskip

\begin{proof}
Let us introduce a notation first. 

\begin{definition}[Inequalities for paths]
Consider a path $\pi$ in $G_\varphi$:
$$\pi = (z_0, l_1, a_1, z_1)  (z_1, l_2,a_2,  z_2)\cdots (z_{m-1}, l_m, a_m,  z_m).$$
 For each $r \in [m]$, 
let $l_r: z_{r-1} \ge a_{r,0} + \sum_{j \in [n]} a_{r, j} x_j$ be
the inequality for $z_{r-1}$ labeled by $l_r$.

The \emph{inequality for the path $\pi$}, ${\sf Ineq}_\pi$, is defined as ${\sf Ineq}_m$, where ${\sf Ineq}_1, \cdots, {\sf Ineq}_m$ are defined inductively as follows: 
\begin{itemize}
\item {\bf Base:} ${\sf Ineq}_1$ is the inequality $z_{0} \ge a_{1,0} + \sum_{j \in [n]} a_{1, j} x_j$. 
\item {\bf Inductive:} For each $r: 2 \le r \le m$, let ${\sf Ineq}_{r}$ be the inequality obtained from ${\sf Ineq}_{r-1}$ by replacing each occurrence of $z_{r-1}$ in the right-hand side of ${\sf Ineq}_{r-1}$ with $a_{r,0} + \sum_{j \in [n]} a_{r, j} x_j$ and simplifying the resulting inequality.
%Note that nested exponential terms may be introduced during the construction of $Ineq_1, \cdots, Ineq_m$. 
\end{itemize}
\end{definition}

For instance, an inequality $z_0 \ge -1 + 2x_j$ may be obtained from $z_0 \ge z_1$ by replacing $z_1$ with $-1+2x_j$ if the inequality for $z_1$ labeled by $l_2$ is $z_1 \ge -1+2x_j$.

\hide{
By an induction on $r$, we can show that $Ineq_r$ is $z_0 \ge  c_0 + \sum \limits_{j' \in [m']} c_{j'} t_{j'}$ such that
\begin{itemize}
\item $c_0 \in \intnum$, 
\item for each $j' \in [m']$,  
\begin{itemize}
\item either $c_{j'} \ge 1$ is a natural number and $t_{j'} = x_j$ for some $j \in [n]$, 
\item or $c_{j'} > 0$ is a rational number and $t_{j'}$ is a product of exponential terms of the form $2^{x_j}$, $2^{2^{x_j}}$, $\cdots$, for $j \in [n]$,
\end{itemize}
\item there is $j' \in [m']$ such that $z_r$ occurs in $t_{j'}$, moreover, if $a_{r'} > 1$ or $b_{r'} = 1$ for some $r': 1 \le r' < r$, then either $c_{j'} \ge 2$ and $t_{j'} = z_r$ or $t_{j'}$ is a product of exponential terms. 
\end{itemize}
For instance, an inequality $z_0 \ge -1 + 2x_j + 2^{-1} 2^{x_k}$ may be obtained from $z_0 \ge z_1$ by replacing $z_1$ with $2x_j + 2^{x_k -1} -1 $ if the inequality for $z_1$ labeled by $l_2$ is $z_1 \ge 2x_j + 2^{x_k -1}-1$.
%Then the inequality $Inez_m$ is of the form $z_0 \ge c_0 + \sum \limits_{j' \in [n]} c_{j'} x_{j'}$. 
}

We denote by $\pi_1 \concat\pi_2$, the concatenation of two paths $\pi_1$ and $\pi_2$.

\smallskip
\noindent\underline{\bf ``If'' direction}.
Suppose $G_\varphi$ is expansive, then $G_\varphi$ contains an edge $(x_i, l, a, x_j)$ with $a > 1$ or two distinct edges out of some vertex with the same label. 
We proceed
by a case distinction.

\smallskip
\noindent \emph{\bf Case that $G_\varphi$ contains an edge $(x_i, l, a, x_j)$ with $a > 1$}.
\smallskip

Take an arbitrary vertex $x_k$ in $G_\varphi$.
Since $G_\varphi$ is strongly connected, there must exist a path $\pi_1$ from $x_k$ to $x_i$, and a path $\pi_2$ from $x_j$ to $x_k$. 
Therefore, the path $\pi=\pi_1\concat (x_i, l, a, x_j)\concat\pi_2$ %obtained by concatenating $\pi_1$, $(x_i, l, a, x_j)$, and $\pi_2$ 
forms a cycle from $x_k$ to $x_k$ in $G_\varphi$. 

Let ${\sf Ineq}_{\pi}$ be $x_k \ge c_0 + \sum_{j \in [n]} c_{j} x_j$. Then $c_j\in\natnum$ for $j\in [n]$ and moreover $c_k\geq 2$ because $a>1$.
From the inequality $x_k \ge c_0 + \sum_{j \in [n]} c_{j} x_j$, if $\varphi$ is satisfiable, we can deduce that:
$$-c_0 \ge (c_{k}-1) x_k + \sum_{j\in [n] \setminus\{k\}} c_{j} x_{j}\ge (c_{k}-1) x_k \ge (2-1) x_k\ge x_k\ge 0.$$

%It implies that if $\varphi$ is satisfiable, then %we must have $-c_0 \ge 0$ and 
Thus, if $\varphi$ is satisfiable, the value of $x_k$ is upper-bounded by $-c_0$. 

%there is $j'\in [n]$ such that $x_{k}$ 
%and $x_{j'}$ are, moreover, either $c_{j'_0} \ge 2$ and $t_{j'_0} = x_k$ or $t_{j'_0}$ is a product of exponential terms.
\hide{
\begin{itemize}
\item If $c_{j'_0} \ge 2$ and $t_{j'_0} = x_k$, then $-c_0 \ge (c_{j'_0}-1) x_k + \sum \limits_{j' \in [m'] \setminus\{j'_0\}} c_{j'} t_{j'}$. This means that if $\varphi$ is satisfiable, then we must have $-c_0 \ge 0$ and the value of $x_k$ is bounded by $-c_0$. 
\item If $t_{j'_0} $ is a product of exponential terms, then $-c_0 - c_{j'_0} t_{j'_0} + x_k \ge \sum \limits_{j' \in [m'] \setminus \{j'_0\}} c_{j'} t_{j'}$. If $\varphi$ is satisfiable, then it must be the case that $-c_0 - c_{j'_0} t_{j'_0} + x_k \ge 0$. Because $t_{j'_0}$ is a product of exponential terms and it contains a term of the form $2^{x_k}$, $2^{2^{x_k}}$, $\cdots$, it follows that the number of values of $x_k$ so that $-c_0 - c_{j'_0} t_{j'_0} + x_k \ge 0$ (equivalently, $ c_{j'_0} t_{j'_0} - x_k \le c_0$) is bounded by some constant. As a result, if $\varphi$ is satisfiable, then the value of $x_k$ is bounded by some constant. 
%As a result, there is $C \ge 0$ such that $C \ge -c_0 - c_{j'_0} t_{j'_0} + x_k$ for every possible assignments of the variables in $-c_{j'_0} t_{j'_0} + x_k$.
%Then for each $n': 0 \le n' \le C$, $-c_0   \ge c_{j'_0} t_{j'_0}[n'/x_k] - n' + \sum \limits_{j' \in [m'] \setminus \{j'_0\}} c_{j'} t_{j'}$
\end{itemize}
}

%
%Let $\pi_0 = (z_0, l_1, a_1, z_1)  (z_1, l_2, a_2, z_2) \cdots (z_{m-1}, l_m, a_m, z_m)$, where $z_0 = z_m = x_k$. Moreover, for each $r \in [m]$, let the inequality of $z_{r-1}$ labeled by $l_r$ be $z_{r-1} \ge b_{r-1,0} + \sum \limits_{j' \in [n]} b_{r-1, j'} x_{j'}$. Then we compute the equalities $Ineq_r$ for $r \in [m]$ inductively as follows: Let $Ineq_1$ be $z_{0} \ge b_{0,0} + \sum \limits_{j' \in [n]} b_{0, j'} x_{j'}$. For each $r \in [m-1]$, $Ineq_{r+1}$ is obtained from $Ineq_{r}$ by replacing $z_{r}$ with  $b_{r,0} + \sum \limits_{j' \in [n]} b_{r, j'} x_{j'}$ and simplifying the resulting inequality. 
%Then the inequality $Ineq_m$ is of the form $z_0 \ge c_0 + \sum \limits_{j' \in [n]} c_{j'} x_{j'}$. From the inductive computation of $Ineq_m$ and the fact that $z_0 = z_m = x_k$, we know that $c_k = \prod_{j' \in [m]} a_{j'}$.  
%Because $(x_i, l, a, x_j)$ is a path on $\pi_0$ and $a > 1$, we deduce that $c_k > 1$. 

\smallskip
\noindent \emph{\bf Case that $G_\varphi$ contains two distinct edges out of the some vertex with the same label}.
\smallskip

Let $e_1 = (x_i, l, a_1, x_{j_1})$ and $e_2 = (x_i, l, a_2, x_{j_2})$ be two distinct edges out of the vertex $x_i$ in $G_\varphi$ with the same label $l$.
We assume that for each edge $(x_{i'}, l, a, x_{j'})$ of $G_\varphi$, $a = 1$, otherwise, 
the result immediately follows from $a>1$. 

%It follows that $j_1 \neq j_2$ since $e_1$ and $e_2$ are two distinct edges. 

%Let us first consider the case that $G_\varphi$ contains two distinct edges out of some vertex $x_i$ of the same label, say $(x_i, l, a_1, b_1, x_{j_1})$ and $(x_i, l, a_2, b_2, x_{j_2})$ (where $a_1, a_2 \ge 1$ and $j_1 \neq j_2$). 
%Let the inequality for $x_i$ labeled by $l$ be $x_i \ge a'_0 + \sum \limits_{j' \in [n]} a'_{j'} \tau_{j'}$. Then $a'_{j_1} = a_1$, $a'_{j_2} = a_2$, moreover, $\tau_{j_1} = 2^{x_{j_1}}$ iff $b_{j_1} = 1$, and $\tau_{j_1} = 2^{x_{j_2}}$ iff $b_{j_2} = 1$. 
Take an arbitrary vertex $x_k$ in $G_\varphi$.
Since $G_\varphi$ is strongly connected, there must exist a simple path $\pi_0$ from $x_k$ to $x_i$, and two distinct simple paths $\pi_1$ and $\pi_2$ from $x_{j_1}$ to $x_k$ and from $x_{j_2}$ to $x_k$, respectively. Let $\pi'_1=\pi_0\concat e_1 \concat\pi_1$ and $\pi'_2=\pi_0\concat e_2 \concat\pi_2$
which are two distinct cycles $x_k$ to $x_k$. 

%be the concatenation of $\pi_0$, $e_1$, and $\pi_1$. Then $\pi'_1$ is a cycle from $x_k$ to $x_k$. 
%Similarly, let $\pi'_2$ be the concatenation of $\pi_0$, $(x_i, l, a_2, b_2, x_{j_2})$, and $\pi_2$. Then $\pi'_2$ is another path from $x_k$ to $x_k$. 

Let ${\sf Ineq}_{\pi'_1}$ be $x_k \ge c_0 + \sum_{j \in [n]} c_{j} x_j$. Then $c_j\in\natnum$ for $j\in [n]$ and moreover $c_k \ge 1$.
The proof proceeds by distinguishing whether $\pi_1$ passes $x_{j_2}$ or not. (By passing, we mean that $\pi_1$ contains $x_{j_2}$ and $x_{j_2} \neq x_k$.) 
\begin{itemize}
\item If $\pi_1$ does not pass $x_{j_2}$, then $x_{j_2}$ must occur in ${\sf Ineq}_{\pi_0 \concat e_1}$ and remain therein until ${\sf Ineq}_{\pi_1'}$ (i.e., ${\sf Ineq}_{\pi_0 \concat e_1\concat \pi_1}$) is computed. 
\begin{itemize}
\item If $j_2 = k$, then we have $c_{k} \ge 2$. (Intuitively, the iterative replacement that goes through $x_{j_1}$ produces at least one $x_k$ in ${\sf Ineq}_{\pi'_1}$, which, together with going through $x_{j_2} = x_k$, produces at least two $x_k$ in ${\sf Ineq}_{\pi'_1}$.) Therefore, we get that 
if $\varphi$ is satisfiable, then
$$-c_0 \ge (c_{k}-1) x_k + \sum_{j\in [n] \setminus\{k\}} c_{j} x_{j}\ge (c_{k}-1) x_k \ge (2-1) x_k\ge x_k\ge 0.$$
Thus, if $\varphi$ is satisfiable, then the value of $x_k$ is upper-bounded by $-c_0$.
%%%%
%%%%
\item Otherwise, we have $c_{j_2} \ge 1$. Let ${\sf Ineq}_{\pi_2}$ be $x_{j_2} \ge d_0 + \sum_{j \in [n]} d_{j} x_j$. Then $d_k \ge 1$ and 
\[\renewcommand{\arraystretch}{1.5}
\begin{array}{l l l }
x_k & \ge & c_0 + \sum_{j \in [n]} c_{j} x_j \\
& \ge & c_0 + \sum_{j \in [n] \setminus \{j_2\}} c_j x_j + c_{j_2} (d_0 + \sum_{j \in [n]} d_{j} x_j)\\
& = &  (c_0+c_{j_2} d_0)  + \sum_{j \in [n] \setminus \{j_2\}} (c_j + c_{j_2} d_j) x_j + c_{j_2} d_{j_2} x_{j_2}. 
\end{array}
\] 
From $c_k \ge 1$, $c_{j_2} \ge 1$ and $d_k \ge 1$, we deduce that $c_k + c_{j_2} d_k \ge 2$. As a result, 
\[-(c_0+c_{j_2} d_0)  \ge (c_k + c_{j_2} d_k - 1)x_k + \sum \limits_{j \in [n] \setminus \{j_2, k\}} (c_j + c_{j_2} d_j) x_j + c_{j_2} d_{j_2} x_{j_2}. \]
\end{itemize}
If $\varphi$ is satisfiable, then $-(c_0+c_{j_2} d_0) \ge 0$ and the value of $x_k$ is upper-bounded by $-(c_0+c_{j_2} d_0)$.
\item If $\pi_1$ passes $x_{j_2}$, i.e.,  $\pi_1$ contains $x_{j_2}$ and $x_{j_2} \neq x_k$, then let $\pi_1 = \pi_{1,1} \concat \pi_{1,2}$ such that $\pi_{1,1}$ is the prefix of $\pi_1$ that goes from $x_{j_1}$ to $x_{j_2}$. 

Let ${\sf Ineq}_{\pi_0 \concat e_1}$ be $x_{k} \ge d_0 + \sum_{j \in [n]} d_{j} x_j$. Then $d_{j_1} \ge 1$ and $d_{j_2} \ge 1$. Moreover, let ${\sf Ineq}_{\pi_{1,1}}$ be $x_{j_1} \ge d'_0 + \sum_{j \in [n]} d'_{j} x_j$. Then $d'_{j_2} \ge 1$. As a result, ${\sf Ineq}_{\pi_0 \concat e_1 \concat \pi_{1,1}}$ is 
\[\renewcommand{\arraystretch}{1.5}
\begin{array}{l l l}
x_k & \ge & d_0 + \sum_{j \in [n] \setminus \{j_1\}} d_{j} x_j + d_{j_1} (d'_0 + \sum_{j \in [n]} d'_{j} x_j)\\
& = & (d_0 + d_{j_1} d'_0) + \sum_{j \in [n] \setminus \{j_1\}} (d_{j} + d_{j_1}d'_j) x_j  + d_{j_1} d'_{j_1} x_{j_1}.
\end{array}
\]
From $d_{j_1} \ge 1$, $d_{j_2} \ge 1$ and $d'_{j_2} \ge 1$, we deduce that $d_{j_2} + d_{j_1}d'_{j_2} \ge 2$.
Consequently, in ${\sf Ineq}_{\pi'_1} = {\sf Ineq}_{\pi_0 \concat e_1 \concat \pi_{1,1} \concat \pi_{1,2}}$, which is obtained from ${\sf Ineq}_{\pi_0 \concat e_1 \concat \pi_{1,1}}$ by replacing $x_{j_2}$ with the right-hand side of ${\sf Ineq}_{\pi_{1,2}}$, the coefficient of $x_{k}$ is at least two, since the coefficient of $x_{j_2}$ in ${\sf Ineq}_{\pi_0 \concat e_1 \concat \pi_{1,1}}$ is $d_{j_2} + d_{j_1}d'_{j_2} \ge 2$. The result follows.
%\qed
\end{itemize}
% \end{proof}

% In fact, the inverse of the proposition also holds:

% \begin{proposition}
% Suppose that $G_\varphi$ is strongly connected.
% If $G_\varphi$ does not contain two distinct edges out of some vertex $x_i$ of the same label, nor contains any edge $(x_i, l, a, x_j)$ such that $a > 1$, then either $\varphi$ has no solution, or there is no upper bound to its solutions.
% \end{proposition}

% \begin{proof}

\smallskip
\noindent\underline{\bf ``Only If'' direction}.
Suppose the values of the variables in every solution of $\varphi$ are upper-bounded by $B$. We prove by contradiction.

If $G_\varphi$ is non-expansive, then  $G_\varphi$ does not contain two distinct edges out of some vertex $x_i$ with the same label, nor contains any edge $(x_i, l, a, x_j)$ such that $a > 1$. It means that the inequalities for all the vertices $x_i$ must be of the form $x_i \geq a'+ x_j$ for some constant $a'\in\intnum$. 

Since $\varphi$ is satisfiable, let
$(u_1, u_2, \ldots, u_n)$ be a solution of $\varphi$. Obviously, for any $C \in \natnum$, $(u_1 + C, u_2 + C, \ldots, u_n + C)$ is also a solution of $\varphi$. Thus, 
the values of the variables in every solution of $\varphi$ are \emph{not} upper-bounded by $B$.\qed
\end{proof}
}

%Therefore, from Proposition~\ref{prop-g-1}, if $G_\varphi$ contains an edge of weight greater than $1$ or two distinct edges out of some vertex $x_i$ of the same label, then the number of solutions satisfying $\varphi$ is finite and we can find the minimum solution by guessing the solution candidates from the least to the greatest one by one and check whether they are indeed solutions. 

%%%%%%%%%%%%%%moved to the main text%%%%%%%%%%%%
%%%%%%%%%%%%%%moved to the main text%%%%%%%%%%%%
\hide{
\section{Computing Upper Bounds $B$ %Solving Algorithm
for $\fwc$-Constraints}\label{app-up}

% We first consider the case that $G_\varphi$ satisfies the conditions in Proposition~\ref{prop-g-1}. We show how to compute the upper bounds and how an exhaustive search can be done. 
%------------- proof -------move to appendix?-------------------------------------------------
If there is an edge $(x_i, l, a, x_j)$ with $a > 1$, then we compute the upper bound $C$ by the following procedure. 
\begin{itemize}
\item Find a simple path from $x_j$ to $x_i$, say $(y_1, l_1, a_1, y_2), \cdots, (y_r, l_r, a_r, y_{r+1})$ (where $y_1 = x_j$ and $y_{r+1} = x_i$). Let $l: x_i \ge t_0$, and $l_1: y_1 \ge t_1$, $\ldots$, $l_r: y_r \ge t_r$. 

\item Let $t'_0 := t_0$. For every $k: 1 \le k \le r$, let $t'_k$ be the term obtained by replacing $y_k$ in $t'_{k-1}$ with $t_k$.  
\item Suppose $t'_{r+1} = a'_0 + \sum \limits_{k \in [n]} a'_k x_k$. Then from $a > 1$, we must have $a'_i > 1$. Moreover, $a'_k \ge 0$ for every $k \in [n]$. 
From $x_i \ge t'_{r+1} = a'_0 + a'_i x_i + \sum \limits_{k \in [n] \setminus \{i\}} a'_k x_k$, we deduce that $(a'_i-1)x_i \le -a'_0$. As a result, $x_i \le \floor{-a'_0/(a'_i-1)}$.
\item We start from $x_i$ and apply a breath-first search to compute the upper bounds for the other variables. That is, suppose that $c$ is an upper bound for $x_{i'}$ and $(x_{i'}, l', a', x_{j'})$ is an edge in $G_\varphi$, let $l': x_{i'} \ge b_0 + \sum \limits_{k \in [n]} b_k x_k$ (where $b_{j'} = a'$), then we have $x_{j'} \le (c - b_0)/a'$, thus $x_{j'} \le \floor{(c - b_0)/a'}$. 
\end{itemize}
Note that the aforementioned procedure does not necessarily compute the tight upper bounds. 
%----

The computation of the upper bounds for the situation that there are two distinct edges out of some vertex $x_i$ of the same label is similar, although slightly more involved. %(See Appendix~\ref{app-up})

\begin{example}
Consider $\varphi_2$ in Example~\ref{exmp-dep-graph}. The dependency graph $G_{\varphi_2}$ as illustrated in Fig.~\ref{fig-dep-graph}(b) is strongly connected and contains an edge $(x_1, l_1, 2, x_2)$ whose weight is greater than $1$. Then we compute the upper bounds as follows. Because $x_1 \ge 2x_2 - 4$, we find a simple path from $x_2$ to $x_1$, say $(x_2, l_2, 1, x_3)$ and $(x_3, l_3, 1, x_1)$, and utilize the path to apply the replacements. Then we have $x_1 \ge 2x_2 -4 \ge 2(x_3-2)-4 = 2x_3 -8 \ge 2(x_1+1)-8 = 2x_1 -6$. As a result, $x_1 \le 6$. Then from $x_1 \ge 2x_2 -4$, we have $2x_2 \le x_1+4 \le 6+4=10$, thus $x_2 \le 5$. Finally, from $x_2 \ge x_3-2$, we have $x_3 \le x_2+2 \le 7$. To summarize, we obtain the upper bounds $(6,5,7)$ for $(x_1, x_2, x_3)$.
\end{example}
}

\begin{table}[!h]
\centering
\caption{Widths $w_{\tt e}$ of expressions {\tt e}}
\label{tab:bit-width-of-expressions-1}
\setlength{\tabcolsep}{8pt}
\begin{tabular}{|l|l|l|}
\hline
\textbf{Operation} & \textbf{Bit width} & \textbf{Note} \\ \hline \hline
{\tt e} = \prettyll{UInt<}$n_1$\prettyll{>(}$n_2$\prettyll{)} & $w_{\tt e} = n_1$ &  $n_1,n_2\in \natnum$ \\ \hline
{\tt e} = \prettyll{SInt<}$n_1$\prettyll{>(}$n_2$\prettyll{)} & $w_{\tt e} = n_1$ &  $n_1\in \natnum,n_2\in \intnum$ \\ \hline
${\tt e} = r$ & $w_{\tt e} = w_r$ & \makecell[l]{ 
 $r$ is a reference, $w_r\in\{$\prettyll{UInt,SInt,}\\
\prettyll{AsyncReset,Reset,Clock}$\}$} \\ \hline
{\tt e} = \prettyll{mux(c,}$\tt e_1,e_2)$ & $w_{\tt e} = \max(w_{\sf e_1},w_{\sf e_2})$ & $t_{\tt e_1} = t_{\tt e_2} \in \{$\prettyll{UInt,SInt}$\}$ \\ \hline
\makecell[l]{ 
{\tt e} = \prettyll{add}$\tt (e_1,e_2)$ or\\
{\tt e} = \prettyll{sub}$\tt (e_1,e_2)$}
& $w_{\tt e} = \max(w_{\sf e_1},w_{\sf e_2}) + 1$ & $t_{\tt e_1} = t_{\tt e_2} \in \{$\prettyll{UInt,SInt}$\}$ \\ \hline
\makecell[l]{
{\tt e} = \prettyll{mul}$\tt (e_1,e_2)$ or\\ 
{\tt e} = \prettyll{cat}$\tt (e_1,e_2)$}
& $w_{\tt e} = w_{\sf e_1}+w_{\sf e_2}$ & $t_{\tt e_1} = t_{\tt e_2} \in \{$\prettyll{UInt,SInt}$\}$ \\ \hline
{\tt e} = \prettyll{div}$\tt (e_1,e_2)$ & $w_{\tt e} = w_{\sf e_1}$ & $t_{\tt e_1} = t_{\tt e_2} =$ \prettyll{UInt} \\ \hline
{\tt e} = \prettyll{div}$\tt (e_1,e_2)$ & $w_{\tt e} = w_{\sf e_1}+1$ & $t_{\tt e_1} = t_{\tt e_2} =$ \prettyll{SInt} \\ \hline
{\tt e} = \prettyll{rem}$\tt (e_1,e_2)$ & $w_{\tt e} = \min(w_{\sf e_1},w_{\sf e_2})$ & $t_{\tt e_1} = t_{\tt e_2} \in \{$\prettyll{UInt,SInt}$\}$ \\ \hline
{\tt e} = \prettyll{pad}${\tt (e_1},n)$ & $w_{\tt e} = \max(w_{\sf e_1},n)$ & $t_{\tt e_1}\in \{$\prettyll{UInt,SInt}$\},n \geqslant 0$ \\ \hline

% 第一列,合并三行,三行内容
\makecell[l]{
  {\tt e} = \prettyll{eq}$\tt (e_1,e_2)$ or\\
  {\tt e} = \prettyll{neq}$\tt (e_1,e_2)$ or \\ 
  {\tt e} = \prettyll{lt}$\tt (e_1,e_2)$ or \\
  {\tt e} = \prettyll{leq}$\tt (e_1,e_2)$ or \\ 
  {\tt e} = \prettyll{gt}$\tt (e_1,e_2)$ or \\
  {\tt e} = \prettyll{geq}$\tt (e_1,e_2)$}
  & $w_{\tt e} = 1$  & $t_{\tt e_1} = t_{\tt e_2} \in \{$\prettyll{UInt,SInt}$\}$ \\
  \hline

{\tt e} = \prettyll{shl}${\tt (e_1,}n)$ & $w_{\tt e} = w_{\sf e_1} + n$ & $t_{\tt e_1}\in \{$\prettyll{UInt,SInt}$\},n \geqslant 0$ \\ \hline
{\tt e} = \prettyll{shr}${\tt (e_1,}n)$ & $w_{\tt e} = \max(w_{\sf e_1} -n,0)$ & $t_{\tt e_1}=$ \prettyll{UInt}, $n \geqslant 0$ \\ \hline
{\tt e} = \prettyll{shr}${\tt (e_1,}n)$ & $w_{\tt e} = \max(w_{\sf e_1} -n,1)$ & $t_{\tt e_1}=$ \prettyll{SInt}, $n \geqslant 0$ \\ \hline
{\tt e} = \prettyll{dshl}$\tt (e_1,e_2)$ & $w_{\tt e} = w_{\tt e_1} + 2^{w_{\tt e_2}} -1$ & $t_{\tt e_1}\in \{$\prettyll{UInt,SInt}$\},t_{\tt e_2}=$ \prettyll{UInt} \\ \hline
\makecell[l]{
{\tt e} = \prettyll{dshr}$\tt (e_1,e_2)$ or\\
{\tt e} = \prettyll{not}$\tt (e_1)$}
& $w_{\tt e} = w_{\tt e_1}$ & $t_{\tt e_1}\in \{$\prettyll{UInt,SInt}$\},t_{\tt e_2}=$ \prettyll{UInt} \\ \hline
{\tt e} = \prettyll{cvt}${\tt (e_1)}$ & $w_{\tt e} = w_{\sf e_1}+1$ & $t_{\tt e_1}=$ \prettyll{UInt} \\ \hline
{\tt e} = \prettyll{cvt}${\tt (e_1)}$ & $w_{\tt e} = w_{\sf e_1}$ & $t_{\tt e_1}=$ \prettyll{SInt} \\ \hline
{\tt e} = \prettyll{neg}$\tt (e_1)$ & $w_{\tt e} = w_{\tt e_1} + 1$ & $t_{\tt e_1}\in \{$\prettyll{UInt,SInt}$\}$ \\ \hline

\makecell[l]{
  {\tt e} = \prettyll{and}$\tt (e_1,e_2)$ or\\
  {\tt e} = \prettyll{or}$\tt (e_1,e_2)$ or \\ 
  {\tt e} = \prettyll{xor}$\tt (e_1,e_2)$}  
  & $w_{\tt e} = \max(w_{\sf e_1},w_{\sf e_2})$ 
  & $t_{\tt e_1} = t_{\tt e_2} \in \{$\prettyll{UInt,SInt}$\}$ \\  % 第二行占位
  \hline

\makecell[l]{
  {\tt e} = \prettyll{andr}$\tt (e_1,e_2)$ or \\
  {\tt e} = \prettyll{orr}$\tt (e_1,e_2)$ or \\ 
  {\tt e} = \prettyll{xorr}$\tt (e_1,e_2)$}
  & $w_{\tt e} = 1$ & $t_{\tt e_1}\in \{$\prettyll{UInt,SInt}$\}$ \\   
  \hline
  
{\tt e} = \prettyll{bits}${\tt (e_1},hi,lo)$ & $w_{\tt e} = hi-lo+1$ & $t_{\tt e_1}\in \{$\prettyll{UInt,SInt}$\},0 \leqslant lo\le hi < w_{\tt e_1}$ \\ \hline
{\tt e} = \prettyll{head}${\tt (e_1},n)$ & $w_{\tt e} = n$ & $t_{\tt e_1}\in \{$\prettyll{UInt,SInt}$\},0 \leqslant n \leqslant w_{\tt e_1}$ \\ \hline
{\tt e} = \prettyll{tail}${\tt (e_1},n)$ & $w_{\tt e} = w_{\sf e_1}-n$ & $t_{\tt e_1}\in \{$\prettyll{UInt,SInt}$\},0 \leqslant n \leqslant w_{\tt e_1}$ \\ \hline\hline

\makecell[l]{
{\tt e} = \prettyll{asUInt}$\tt (e_1)$ or \\
{\tt e} = \prettyll{asSInt}$\tt (e_1)$}
& $w_{\tt e} = w_{\sf e_1}$ & $t_{\tt e_1} \in \{$\prettyll{UInt,SInt}$\}$ \\ \hline

\makecell[l]{
{\tt e} = \prettyll{asUInt}$\tt (e_1)$ or\\
{\tt e} = \prettyll{asSInt}$\tt (e_1)$}
  & $w_{\tt e} = 1$ 
  & \makecell[l]{
  $t_{\tt e_1}\in\{\text{\prettyll{AsyncReset,Reset,Clock}}\}$}  \\
  \hline
\end{tabular}
\end{table}

%%%%%%%%%%%%%%%%%%%%%%%%%%%%%%%%%%%%%%%%%%%%%%%%%%%%%%%%%%%%%%%%%%%%%%%%%%%%%%%%%%%%
\section{Widths of expressions in FIRRTL specification}\label{sec:widthofExpr}
%%%%%%%%%%%%%%%%%%%%%%%%%%%%%%%%%%%%%%%%%%%%%%%%%%%%%%%%%%%%%%%%%%%%%%%%%%%%%%%%%%%%
  
 %let $t_{\tt e} \in \{\text{\prettyll{UInt,SInt}}\}$ denote whether ${\tt e}$ is of unsigned integer type or signed integer type.
Given an expression ${\tt e}$, the width 
$w_{\tt e}$ of ${\tt e}$ is inductively defined in  Table~\ref{tab:bit-width-of-expressions-1}, where
$t_{\sf e}\in  \{\text{\prettyll{UInt,SInt,Reset,AsyncReset,Clock}}\}$ denotes the type of ${\tt e}$.
We note that in general, the reference $r$ and 
the expression ${\tt e}_i$ for $i=1,2$ used in the operator
\fnprettyll{mux} may be components of aggregate types (i.e., bundle types and vector types). The type checking ensures that the component connected from
$r$ or \fnprettyll{mux} has a compatible type. Thus, 
their connect statements can be split in a field- or element-wise fashion. As a result, only the widths of expressions of ground types are required.

\hide{
\[
\setlength{\arraycolsep}{1em} 
\renewcommand{\arraystretch}{2}
\begin{array}{ccc}
% 第一行
\multicolumn{2}{c}{} 
\infer[\text{\sf add,sub}]
      {\seqq{w_{\tt e} = \max(w_{\tt e_1},w_{\tt e_2}) + 1}}
      {\begin{array}{c}
      \mbox{\prettyll{op}} \in \{\mbox{\prettyll{add,sub}} \} \quad
      {\tt e} \mbox{ \prettyll{<= op}} {\tt (e_1,e_2)}
      \end{array}
      }
&
\infer[\text{\sf mul,cat}]
      {\seqq{w_{\tt e} = w_{\tt e_1} + w_{\tt e_2}}}
      {\begin{array}{c}
      \mbox{\prettyll{op}} \in \{\mbox{\prettyll{mul,cat}} \}  \quad
      {\tt e} \mbox{ \prettyll{<= op}} {\tt (e_1,e_2)}
      \end{array}
      }
\\
\multicolumn{2}{c}{} 
   \infer[{\sf div:UInt}]
         {\seqq {w_{\tt e} = w_{\tt e_1}}}
         {\begin{array}{c}
         {\tt t_{e_1}} \mbox{\prettyll{= UInt}} \quad {\tt t_{e_2}} \mbox{\prettyll{= UInt}} \quad {\tt e} \mbox{\prettyll{<= div}} {\tt (e_1,e_2)}
         \end{array}
         }
&
   \infer[{\sf div:SInt}]
         {\seqq {w_{\tt e} = w_{\tt e_1} + 1}}
         {\begin{array}{c}
         {\tt t_{e_1}} \mbox{\prettyll{= SInt}} \quad {\tt t_{e_2}} \mbox{\prettyll{= SInt}} \quad {\tt e} \mbox{\prettyll{<= div}} {\tt (e_1,e_2)}
         \end{array}
         }
\end{array}
\]
 
\vspace{-2mm}\begin{gather*} 
   \label{width-spec-add,sub}
   \infer[{\sf add,sub}]
         {\seqq{w_{\tt e} = \max(w_{\tt e_1},w_{\tt e_2}) + 1}}
         {\begin{array}{c}
         \mbox{\prettyll{op}} \in \{\mbox{\prettyll{add,sub}} \} \quad {\tt e}\ is \mbox{ \prettyll{op}} {\tt (e_1,e_2)}
         \end{array}
         }
 \end{gather*}
 
\vspace{-4mm}\begin{gather*} 
   \label{width-spec-mul,cat}
   \infer[{\sf mul,cat}]
         {\seqq{w_{\tt e} = w_{\tt e_1} + w_{\tt e_2}}}
         {\begin{array}{c}
         \mbox{\prettyll{op}} \in \{\mbox{\prettyll{mul,cat}} \} \quad {\tt e} \mbox{ \prettyll{<= op}} {\tt (e_1,e_2)}
         \end{array}
         }
\end{gather*}

\vspace{-4mm}\begin{gather*} 
   \label{width-spec-div-uint}
   \infer[{\sf div: unsigned}]
         {\seqq {w_{\tt e} = w_{\tt e_1}}}
         {\begin{array}{c}
         {\tt t_{e_1}} \mbox{\prettyll{= UInt}} \quad {\tt t_{e_2}} \mbox{\prettyll{= UInt}} \quad {\tt e} \mbox{\prettyll{<= div}} {\tt (e_1,e_2)}
         \end{array}
         }
\end{gather*}

\vspace{-4mm}\begin{gather*} 
   \label{width-spec-div-sint}
   \infer[{\sf div: signed}]
         {\seqq {w_{\tt e} = w_{\tt e_1} + 1}}
         {\begin{array}{c}
         {\tt t_{e_1}} \mbox{\prettyll{= SInt}} \quad {\tt t_{e_2}} \mbox{\prettyll{= SInt}} \quad {\tt e} \mbox{\prettyll{<= div}} {\tt (e_1,e_2)}
         \end{array}
         }
\end{gather*}

\vspace{-2mm}\[
\setlength{\arraycolsep}{12pt}  
\begin{array}{cc}
   \infer[{\sf rem}]
         {\seqq {w_{\tt e} = \min(w_{\tt e_1},w_{\tt e_2})}}
         {\begin{array}{c}
         {\tt e} \mbox{\prettyll{<= rem}} {\tt (e_1,e_2)}
         \end{array}
         }
&
   \infer[{\sf pad}]
         {\seqq {w_{\tt e} = \max(w_{\tt e_1},n)}}
         {\begin{array}{c}
         n \geqslant 0 \quad {\tt e} \mbox{\prettyll{<= pad}} {\tt (e_1},n)
         \end{array}
         }
\end{array}
\]

\vspace{-4mm}\begin{gather*} 
   \label{width-spec-comp}
   \infer[{\sf eq,neq,lt,leq,gt,geq}]
         {\seqq {w_{\tt e} = 1}}
         {\begin{array}{c}
         \mbox{\prettyll{op}} \in \{\text{\prettyll{eq,neq,lt,leq,gt,geq}} \} \quad {\tt e} \mbox{\prettyll{<= op}} {\tt (e_1,e_2)}
         \end{array}
         }
\end{gather*}

\vspace{-4mm}\begin{gather*} 
   \label{width-spec-asUInt}
   \infer[{\sf asUInt,asSInt: UInt,SInt}]
         {\seqq {w_{\tt e} = w_{\tt e_1}}}
         {\begin{array}{c}
         \text{\prettyll{op}} \in \{\text{\prettyll{asUInt,asSInt}} \} \quad {\tt t_{e_1}} \in\{\text{\prettyll{UInt,SInt}} \} \quad {\tt e} \mbox{\prettyll{<= op}} {\tt (e_1)}
         \end{array}
         }
\end{gather*}

\vspace{-4mm}\begin{gather*} 
   \label{width-spec-asUInt}
   \infer[{\sf asUInt,asSInt: Clock,Reset,AsyncReset}]
         {\seqq {w_{\tt e} = 1}}
         {\begin{array}{c}
         {\tt t_{e_1}}\in\{\text{\prettyll{Clock,Reset,AsyncReset}}
         \} \\ \text{\prettyll{op}} \in \{\text{\prettyll{asUInt,asSInt}} \} \quad {\tt e} \text{\prettyll{<=op}} {\tt (e_1)}
         \end{array}
         }
\end{gather*}

\vspace{-4mm}\begin{gather*} 
   \label{width-spec-shl}
   \infer[{\sf shl}]
         {\seqq {w_{\tt e} = w_{\tt e_1} + n}}
         {\begin{array}{c}
         n \geqslant 0 \quad {\tt e} \mbox{\prettyll{<= shl}} {\tt (e_1},n)
         \end{array}
         }
\end{gather*}

\vspace{-4mm}\begin{gather*} 
   \label{width-spec-shr-uint}
   \infer[{\sf shr:uint}]
         {\seqq {w_{\tt e} = \max(w_{\tt e_1}-n,0)}}
         {\begin{array}{c}
         {\tt t_{e_1}} \mbox{\prettyll{= UInt}}  \quad n \geqslant 0 \quad  {\tt e} \mbox{\prettyll{<= shr}} {\tt (e_1},n)
         \end{array}
         }
\end{gather*}

\vspace{-4mm}\begin{gather*} 
   \label{width-spec-shr-sint}
   \infer[{\sf shr:sint}]
         {\seqq {w_{\tt e} = \max(w_{\tt e_1}-n,1)}}
         {\begin{array}{c}
         {\tt t_{e_1}} \mbox{\prettyll{= SInt}} \quad n \geqslant 0 \quad  {\tt e} \mbox{\prettyll{<= shr}} {\tt (e_1},n)
         \end{array}
         }
\end{gather*}

\vspace{-2mm}\[
\setlength{\arraycolsep}{8pt}  
\begin{array}{cc} 
   \infer[{\sf dshl}]
         {\seqq {w_{\tt e} = w_{\tt e_1} + 2^{w_{\tt e_2}} -1}}
         {\begin{array}{c}
         {\tt e} \mbox{\prettyll{<= dshl}} {\tt (e_1,e_2)}
         \end{array}
         }
&
   \infer[{\sf dshr}]
         {\seqq {w_{\tt e} = w_{\tt e_1}}}
         {\begin{array}{c}
         {\tt e} \mbox{\prettyll{<= dshr}} {\tt (e_1,e_2)}
         \end{array}
         }
\end{array}
\]

\vspace{-2mm}\[
\setlength{\arraycolsep}{8pt}  
\begin{array}{cc}
   \infer[{\sf cvt:uint}]
         {\seqq {w_{\tt e} = w_{\tt e_1}+1}}
         {\begin{array}{c}
         {\tt t_{e_1}} \mbox{\prettyll{= UInt}} \quad {\tt e} \mbox{\prettyll{<= cvt}} {\tt (e_1)}
         \end{array}
         }
&
   \infer[{\sf cvt:sint}]
         {\seqq {w_{\tt e} = w_{\tt e_1}}}
         {\begin{array}{c}
         {\tt t_{e_1}} \mbox{\prettyll{= SInt}} \quad {\tt e} \mbox{\prettyll{<= cvt}} {\tt (e_1)}
         \end{array}
         }
\end{array}
\]

\vspace{-2mm}\[
\setlength{\arraycolsep}{8pt}  
\begin{array}{cc} 
   \infer[{\sf neg}]
         {\seqq {w_{\tt e} = w_{\tt e_1} + 1}}
         {\begin{array}{c}
         {\tt e} \mbox{\prettyll{<= neg}} {\tt (e_1)}
         \end{array}
         }
&
   \infer[{\sf not}]
         {\seqq {w_{\tt e} = w_{\tt e_1}}}
         {\begin{array}{c}
         {\tt e} \mbox{\prettyll{<= not}} {\tt (e_1)}
         \end{array}
         }
\end{array}
\]

\vspace{-4mm}\begin{gather*} 
   \label{width-spec-binbit}
   \infer[{\sf and,or,xor}]
         {\seqq {w_{\tt e} = \max(w_{\tt e_1},w_{\tt e_2})}}
         {\begin{array}{c}
         \mbox{\prettyll{op}} \in \{\mbox{\prettyll{and,or,xor}} \} \quad {\tt e} \mbox{\prettyll{<= op}} {\tt (e_1,e_2)}
         \end{array}
         }
\end{gather*}

\vspace{-4mm}\begin{gather*} 
   \label{width-spec-redubit}
   \infer[{\sf andr,orr,xorr}]
         {\seqq {w_{\tt e} = 1}}
         {\begin{array}{c}
         \mbox{\prettyll{op}} \in \{\mbox{\prettyll{andr,orr,xorr}} \} \quad {\tt e} \mbox{\prettyll{<= op}} {\tt (e_1)}
         \end{array}
         }
\end{gather*}

\vspace{-4mm}\begin{gather*} 
   \label{width-spec-bits}
   \infer[{\sf bits}]
         {\seqq {w_{\tt e} = hi-lo+1}}
         {\begin{array}{c}
          0 \leqslant lo\le hi < w_{\tt e_1} \quad {\tt e} \mbox{\prettyll{<= bits}} {\tt (e_1,} hi,lo)
         \end{array}
         }
\end{gather*}

\vspace{-4mm}\[
\setlength{\arraycolsep}{6pt} 
\begin{array}{cc}
   \infer[{\sf head}]
         {\seqq {w_{\tt e} = n}}
         {\begin{array}{c}
         0 \leqslant n \leqslant w_{\tt e_1} \quad {\tt e} \mbox{\prettyll{<= head}} {\tt (e_1},n)
         \end{array}
         }
&
   \infer[{\sf tail}]
         {\seqq {w_{\tt e} = w_{\tt e_1} - n}}
         {\begin{array}{c}
         0 \leqslant n \leqslant w_{\tt e_1} \quad {\tt e} \mbox{\prettyll{<= tail}} {\tt (e_1} ,n)
         \end{array}
         }
\end{array}
\]}

% \section{Simplified proof of Proposition~\ref{prop-g-1}}

% (To allow the coauthors to check the simplified proof, I first insert it here in the appendix.
% If it is correct, we can move it to the main text.)

\clearpage

\section{Further experimental results}\label{app:exp}

%!TEX root = main.tex

Here we list the running times per instance, where
the last three benchmarks belong to the REALWORLD benchmark suite while all the others belong to the MANUAL benchmark suite.

\smallskip

%\begin{table}[b]
%\setlength{\tabcolsep}{3pt}
%\caption{Efficiency of BFWInferWidths: compared with {\firtool} and Gurobi.}
%\label{tab:efficiency-conformancetesting'}
\scalebox{1}{
\begin{tabular}{|c|c|c|c|c|}
  \hline
  \multirow{2}{*}{\textbf{Benchmark}} & 
  \multirow{2}{*}{\textbf{\#cpnts}} & 
  \multicolumn{3}{c|}{\textbf{Time (ms) per instance}}\\ 
  \cline{3-5} 
  & & \textbf{firtool} & \textbf{Gurobi} & \textbf{BFWInferWidths}\\ 
  \hline \hline
Logic & 25 & 2.00 & 31.44 & \textbf{0.12} \\ \hline
AddNot & 1 & 2.00 & 9.93 & \textbf{0.01} \\ \hline
CombElseWhen & 1 & 1.90 & 11.87 & \textbf{0.01} \\ \hline
Conditional & 3 & 2.20 & 13.78 & \textbf{0.02} \\ \hline
DrawMux6 & 7 & 1.50 & 20.99 & \textbf{0.03} \\ \hline
ParamFunc & 1 & 1.10 & 13.05 & \textbf{0.01} \\ \hline
Timer & 5 & 0.90 & 10.08 & \textbf{0.02} \\ \hline
width-spec & 14 & 2.60 & 12.06 & \textbf{0.03} \\ \hline
CombOther & 1 & 2.90 & 7.41 & \textbf{0.01} \\ \hline
CombWhen & 1 & 1.80 & 9.56 & \textbf{0.01} \\ \hline
CombWireDefault & 1 & 0.70 & 16.97 & \textbf{0.01} \\ \hline
spec.md-code-example-019 & 1 & 1.50 & 5.31 & \textbf{0.01} \\ \hline
spec.md-code-example-020 & 1 & 0.40 & 6.12 & \textbf{0.01} \\ \hline
spec.md-code-example-104 & 5 & 0.30 & 4.78 & \textbf{0.01} \\ \hline
spec.md-code-example-106 & 5 & 2.70 & 16.26 & \textbf{0.02} \\ \hline
1088 & 1 & 0.30 & 5.08 & \textbf{0.02} \\ \hline
1089 & 1 & 0.40 & 5.17 & \textbf{0.01} \\ \hline
1090\_1 & 1 & 0.30 & 9.26 & \textbf{0.01} \\ \hline
1090\_2 & 1 & 0.30 & 8.28 & \textbf{0.01} \\ \hline
1091 & 1 & 0.40 & 4.87 & \textbf{0.01} \\ \hline
1092 & 1 & 0.40 & 4.52 & \textbf{0.02} \\ \hline
1093 & 1 & 0.40 & 4.39 & \textbf{0.01} \\ \hline
1094 & 1 & 0.40 & 4.42 & \textbf{0.01} \\ \hline
1095 & 1 & 0.40 & 4.24 & \textbf{0.01} \\ \hline
1096 & 1 & 0.40 & 4.26 & \textbf{0.01} \\ \hline
1097 & 1 & 0.40 & 4.55 & \textbf{0.01} \\ \hline
1098 & 1 & 0.30 & 5.21 & \textbf{0.01} \\ \hline
1099 & 1 & 0.40 & 4.36 & \textbf{0.01} \\ \hline
1108 & 1 & 0.30 & 4.35 & \textbf{0.01} \\ \hline
1110 & 1 & 1.00 & 49.53 & \textbf{0.01} \\ \hline
1112 & 1 & 0.90 & 5.63 & \textbf{0.01} \\ \hline
1113 & 1 & 0.40 & 3.60 & \textbf{0.01} \\ \hline
1115 & 1 & 0.40 & 4.63 & \textbf{0.01} \\ \hline
1116 & 1 & 0.30 & 4.93 & \textbf{0.01} \\ \hline
1118\_1 & 1 & 1.60 & 9.99 & \textbf{0.01} \\ \hline
1118\_2 & 1 & 0.30 & 10.68 & \textbf{0.01} \\ \hline
1271 & 3 & 1.10 & 4.42 & \textbf{0.05} \\ \hline
\end{tabular}}
%\end{table}

\noindent
%\begin{table}[t]
%\setlength{\tabcolsep}{3pt}
\scalebox{1}{
\begin{tabular}{|c|c|c|c|c|}
  \hline
  \multirow{2}{*}{\textbf{Benchmark}} & 
  \multirow{2}{*}{\textbf{\# cpnts}} & 
  \multicolumn{3}{c|}{\textbf{Time (ms) per instance}}\\ 
  \cline{3-5} 
  & & \textbf{firtool} & \textbf{Gurobi} & \textbf{BFWInferWidths}\\ 
  \hline \hline
1491 & 1 & 0.30 & 4.26 & \textbf{0.01} \\ \hline
2538 & 2 & 0.10 & 4.29 & \textbf{0.01} \\ \hline
5391 & 4 & 0.60 & 4.63 & \textbf{0.02} \\ \hline
DownCounter & 5 & 0.90 & 10.71 & \textbf{0.01} \\ \hline
DownTicker & 5 & 0.70 & 7.60 & \textbf{0.05} \\ \hline
Flasher & 24 & 0.70 & 8.56 & \textbf{0.12} \\ \hline
Flasher2 & 22 & 3.80 & 10.99 & \textbf{0.15} \\ \hline
GCD & 10 & 1.30 & 11.85 & \textbf{0.19} \\ \hline
OverflowTypeCircuit & 12 & 1.80 & 8.00 & \textbf{0.04} \\ \hline
Registers & 2 & 0.50 & 9.89 & \textbf{0.01} \\ \hline
Sequential & 4 & 1.80 & 10.57 & \textbf{0.02} \\ \hline
\makecell[c]{ShouldBeBadUInt\\SubtractWithGrow} & 4 & 1.10 & 8.24 & \textbf{0.02} \\ \hline
test & 1 & 1.50 & 10.23 & \textbf{0.01} \\ \hline
GCD 2 & 10 & 2.10 & 5.89 & \textbf{0.18} \\ \hline
SyncReset & 5 & 1.10 & 7.61 & \textbf{0.02} \\ \hline
MultiClockSubModuleTest & 18 & 1.70 & 9.53 & \textbf{0.05} \\ \hline
VcdAdder & 1 & 1.30 & 8.03 & \textbf{0.01} \\ \hline
ForwardingMemory & 9 & 1.80 & 8.46 & \textbf{0.04} \\ \hline
TrueDualPortMemory & 10 & 3.20 & 9.39 & \textbf{0.03} \\ \hline
lower-memories & 6 & 3.00 & 9.35 & \textbf{0.02} \\ \hline
Rob & 2,775 & 259.20 & 98.41 & \textbf{12.05} \\ \hline
CoreTester & 1,984 & 41.90 & 89.32 & \textbf{8.12} \\ \hline
CoreSoc & 3,315 & 83.20 & 85.70 & \textbf{44.89} \\ \hline
designed & 4 & / & 10.53 & \textbf{0.04} \\ \hline
designed2 & 4 & / & 7.24 & \textbf{0.07} \\ \hline
designed3 & 4 & / & 6.15 & \textbf{0.09} \\ \hline
designed4 & 3 & / & 4.47 & \textbf{0.08} \\ \hline
designed5 & 3 & / & 4.61 & \textbf{0.07} \\ \hline
designed6 & 3 & / & 4.64 & \textbf{0.03} \\ \hline
designed7 & 4 & / & 5.03 & \textbf{0.06} \\ \hline
designed8 & 4 & / & 6.47 & \textbf{0.09} \\ \hline
designed9 & 5 & / & 5.79 & \textbf{3.98} \\ \hline
designed10 & 4 & / & 4.38 & \textbf{0.06} \\ \hline
designed11 & 5 & / & 3.98 & \textbf{0.10} \\ \hline
6742 & 192 & / & 8.32 & \textbf{1.06} \\ \hline \hline
NutShell & 7,152 & 190.70 & 194.55 & \textbf{158.31} \\ \hline
Rocket Chip & 4,882 & 127.90 & 120.64 & \textbf{22.24} \\ \hline
RISC-V BOOM & 205,608 & 8,338.30 & \textbf{3,326.94} & 3,467.80 \\ \hline \hline

  \textbf{Average} & \textbf{2,976.43} & \textbf{144.54} & \textbf{59.41} & \textbf{48.96} \\ 
  \hline
\end{tabular}}
%\end{table}

%\end{appendix}

\end{document}